\newcommand{\typeof}{0}   
\newcommand{\condinc}[2]
{
  \ifthenelse{\equal{\typeof}{0}}
  {#1}
  {#2}
}
    \newtheorem{lemma}{Lemma}
\newdimen\proofrulebreadth \proofrulebreadth=.05em
\newdimen\proofdotseparation \proofdotseparation=1.25ex
\newdimen\proofrulebaseline \proofrulebaseline=2ex
\let\then\relax
\def\hfi{\hskip0pt plus.0001fil}
\mathchardef\squigto="3A3B
\newif\ifinsideprooftree\insideprooftreefalse
\newif\ifonleftofproofrule\onleftofproofrulefalse
\newif\ifproofdots\proofdotsfalse
\newif\ifdoubleproof\doubleprooffalse
\let\wereinproofbit\relax
\newdimen\shortenproofleft
\newdimen\shortenproofright
\newdimen\proofbelowshift
\newbox\proofabove
\newbox\proofbelow
\newbox\proofrulename
\def\shiftproofbelow{\let\next\relax\afterassignment\setshiftproofbelow\dimen0 }
\def\shiftproofbelowneg{\def\next{\multiply\dimen0 by-1 }%
\afterassignment\setshiftproofbelow\dimen0 }
\def\setshiftproofbelow{\next\proofbelowshift=\dimen0 }
\def\setproofrulebreadth{\proofrulebreadth}
\def\prooftree{
%
\ifnum  \lastpenalty=1
\then   \unpenalty
\else   \onleftofproofrulefalse
\fi
%
\ifonleftofproofrule
\else   \ifinsideprooftree
        \then   \hskip.5em plus1fil
        \fi
\fi
%
\bgroup
\setbox\proofbelow=\hbox{}\setbox\proofrulename=\hbox{}%
\let\justifies\proofover\let\leadsto\proofoverdots\let\Justifies\proofoverdbl
\let\using\proofusing\let\[\prooftree
\ifinsideprooftree\let\]\endprooftree\fi
\proofdotsfalse\doubleprooffalse
\let\thickness\setproofrulebreadth
\let\shiftright\shiftproofbelow \let\shift\shiftproofbelow
\let\shiftleft\shiftproofbelowneg
\let\ifwasinsideprooftree\ifinsideprooftree
\insideprooftreetrue
%
\setbox\proofabove=\hbox\bgroup$\displaystyle 
\let\wereinproofbit\prooftree
%
\shortenproofleft=0pt \shortenproofright=0pt \proofbelowshift=0pt
%
\onleftofproofruletrue\penalty1
}
\def\eproofbit{
%
\ifx    \wereinproofbit\prooftree
\then   \ifcase \lastpenalty
        \then   \shortenproofright=0pt  
        \or     \unpenalty\hfil         
        \or     \unpenalty\unskip       
        \else   \shortenproofright=0pt  
        \fi
\fi
%
\global\dimen0=\shortenproofleft
\global\dimen1=\shortenproofright
\global\dimen2=\proofrulebreadth
\global\dimen3=\proofbelowshift
\global\dimen4=\proofdotseparation
\global\count255=\proofdotnumber
%
$\egroup  
%
\shortenproofleft=\dimen0
\shortenproofright=\dimen1
\proofrulebreadth=\dimen2
\proofbelowshift=\dimen3
\proofdotseparation=\dimen4
\proofdotnumber=\count255
}
\def\proofover{
\eproofbit 
\setbox\proofbelow=\hbox\bgroup 
\let\wereinproofbit\proofover
$\displaystyle
}%
\def\proofoverdbl{
\eproofbit 
\doubleprooftrue
\setbox\proofbelow=\hbox\bgroup 
\let\wereinproofbit\proofoverdbl
$\displaystyle
}%
\def\proofoverdots{
\eproofbit 
\proofdotstrue
\setbox\proofbelow=\hbox\bgroup 
\let\wereinproofbit\proofoverdots
$\displaystyle
}%
\def\proofusing{
\eproofbit 
\setbox\proofrulename=\hbox\bgroup 
\let\wereinproofbit\proofusing
\kern0.3em$
}
\def\endprooftree{
\eproofbit 
  \dimen5 =0pt
%
\dimen0=\wd\proofabove \advance\dimen0-\shortenproofleft
\advance\dimen0-\shortenproofright
%
\dimen1=.5\dimen0 \advance\dimen1-.5\wd\proofbelow
\dimen4=\dimen1
\advance\dimen1\proofbelowshift \advance\dimen4-\proofbelowshift
%
\ifdim  \dimen1<0pt
\then   \advance\shortenproofleft\dimen1
        \advance\dimen0-\dimen1
        \dimen1=0pt
        \ifdim  \shortenproofleft<0pt
        \then   \setbox\proofabove=\hbox{%
                        \kern-\shortenproofleft\unhbox\proofabove}%
                \shortenproofleft=0pt
        \fi
\fi
%
\ifdim  \dimen4<0pt
\then   \advance\shortenproofright\dimen4
        \advance\dimen0-\dimen4
        \dimen4=0pt
\fi
%
\ifdim  \shortenproofright<\wd\proofrulename
\then   \shortenproofright=\wd\proofrulename
\fi
%
\dimen2=\shortenproofleft \advance\dimen2 by\dimen1
\dimen3=\shortenproofright\advance\dimen3 by\dimen4
%
\ifproofdots
\then
        \dimen6=\shortenproofleft \advance\dimen6 .5\dimen0
        \setbox1=\vbox to\proofdotseparation{\vss\hbox{$\cdot$}\vss}%
        \setbox0=\hbox{%
                \advance\dimen6-.5\wd1
                \kern\dimen6
                $\vcenter to\proofdotnumber\proofdotseparation
                        {\leaders\box1\vfill}$%
                \unhbox\proofrulename}%
\else   \dimen6=\fontdimen22\the\textfont2 
        \dimen7=\dimen6
        \advance\dimen6by.5\proofrulebreadth
        \advance\dimen7by-.5\proofrulebreadth
        \setbox0=\hbox{%
                \kern\shortenproofleft
                \ifdoubleproof
                \then   \hbox to\dimen0{%
                        $\mathsurround0pt\mathord=\mkern-6mu%
                        \cleaders\hbox{$\mkern-2mu=\mkern-2mu$}\hfill
                        \mkern-6mu\mathord=$}%
                \else   \vrule height\dimen6 depth-\dimen7 width\dimen0
                \fi
                \unhbox\proofrulename}%
        \ht0=\dimen6 \dp0=-\dimen7
\fi
%
\let\doll\relax
\ifwasinsideprooftree
\then   \let\VBOX\vbox
\else   \ifmmode\else$\let\doll=$\fi
        \let\VBOX\vcenter
\fi
\VBOX   {\baselineskip\proofrulebaseline \lineskip.2ex
        \expandafter\lineskiplimit\ifproofdots0ex\else-0.6ex\fi
        \hbox   spread\dimen5   {\hfi\unhbox\proofabove\hfi}%
        \hbox{\box0}%
        \hbox   {\kern\dimen2 \box\proofbelow}}\doll%
%
\global\dimen2=\dimen2
\global\dimen3=\dimen3
\egroup 
\ifonleftofproofrule
\then   \shortenproofleft=\dimen2
\fi
\shortenproofright=\dimen3
%
\onleftofproofrulefalse
\ifinsideprooftree
\then   \hskip.5em plus 1fil \penalty2
\fi
}
\newcommand{\dlinea}{\leavevmode\hrule\vspace{1pt}\hrule\mbox{}} 
\mathchardef\gt="313E
\mathchardef\lt="313C
\newcommand{\NN}{\mathbb{N}}
\newcommand{\CC}{\mathbb{C}}
\newcommand{\RR}{\mathbb{R}}
\newcommand{\zrule}[2]{%
  #1 \ \  #2}
\newcommand{\urule}[3]{%
  \prooftree #1 \justifies #2 \using #3 \endprooftree}
\newcommand{\brule}[4]{%
  \prooftree #1\ \ \ #2 \justifies #3 \using #4 \endprooftree}
\newcommand{\trule}[5]{%
  \prooftree #1\ \ \ #2 \ \ #3\justifies #4 \using #5 \endprooftree}
\newcommand{\vseq}[1]{%
\begin{array}{c} #1\end{array}} %
\newcommand{\tupla}[1]{\mathsf{t}_{#1}}
\newcommand{\rapp}{\mathsf{r.a}}
\newcommand{\lapp}{\mathsf{l.a}}
\newcommand{\innew}{\mathsf{in.new}}
\newcommand{\inlambda}{\mathsf{in.\lambda}}
\newcommand{\Uq}{\mathsf{Uq}}
\newcommand{\nw}{\mathsf{new}}
\newcommand{\lbeta}{\mathsf{l.\beta}}
\newcommand{\qbeta}{\mathsf{q.\beta}}
\newcommand{\cbeta}{\mathsf{c.\beta}}
\newcommand{\lcom}{\mathsf{l.cm}}
\newcommand{\rcom}{\mathsf{r.cm}}
\newcommand{\Ok}[1]{#1 \in \conf}
\newcommand{\Qr}{{\cal Q}}
\newcommand{\Rr}{{\cal R}}
\newcommand{\Sr}{{\cal S}}
\newcommand{\QV}{{\cal QV}}
\newcommand{\RV}{{\cal RV}}
\newcommand{\SV}{{\cal SV}}
\newcommand{\HS}{\mathcal{H}}
\newcommand{\Hs}[1]{\mathcal{H}(#1)}
\newcommand{\unoqr}{\mathsf{1}}
\newcommand{\Qvt}[1]{\mathbf{Q}(#1)}
\newcommand{\NF}{\mathsf{NF}}
\newcommand{\dr}[1]{|#1\rangle}
\newcommand{\redrul}{\mathscr{L}}
\newcommand{\commrul}{\mathscr{K}}
\newcommand{\noncommrul}{\mathscr{N}}
\newcommand{\wfj}[1]{#1 \triangleright\,}
\newcommand{\qcalc}{\textsf{Q}}
\newcommand{\mixone}{\mathscr{M}}
\newcommand{\mixtwo}{\mathscr{M}'}
\newcommand{\confone}{C}
\newcommand{\conftwo}{D}
\newcommand{\confthree}{E}
\newcommand{\conffour}{F}
\newcommand{\conffive}{G}
\newcommand{\confsix}{H}
\newcommand{\pcompone}{P}
\newcommand{\pcomptwo}{R}
\newcommand{\pcompthree}{Q}
\newcommand{\pcompfour}{S}
\newcommand{\pcompfive}{T}
\newcommand{\pcompsix}{U}
\newcommand{\pcompseven}{V}
\newcommand{\ifte}[3]{\mbox{\texttt{ if }} {#1} \mbox{\texttt{  then} } {#2} \mbox{\texttt{ else} } {#3}}
\newcommand{\meas}[1]{\mathtt{meas}(#1)}
\newcommand{\ifl}{\mathsf{if_1}}
\newcommand{\ifr}{\mathsf{if_0}}
\newcommand{\nonmeasrul}{n\mathscr{M}}
\newcommand{\measr}{\mathsf{meas}}
\newcommand{\mis}[2]{\mathcal{M}_{{#1,#2}}}
\newcommand{\mismin}[2]{\mathsf{m}_{{#1,#2}}}
\newcommand{\conf}{\mathbf{Conf}}
\newcommand{\Mix}{\mathbf{Mix}}
\newcommand{\probmc}[2]{\mathcal{P}(#1,#2)}
\newcommand{\probmcany}[1]{\mathcal{P}(#1)}
\newcommand{\probmcf}{\mathcal{P}}
\newcommand{\numlv}[2]{\mathcal{N}(#1,#2)}
\newcommand{\numlvany}[1]{\mathcal{N}(#1)}
\newcommand{\numlvf}{\mathcal{N}}
\newcommand{\weight}[1]{\mathsf{W}(#1)}
\newcommand{\bdegree}[1]{\mathsf{B}(#1)}
\newenvironment{varitemize}
{
\begin{list}{\condinc{\labelitemi}{\labelitemi}}
{\setlength{\itemsep}{0pt}
 \setlength{\topsep}{0pt}
 \setlength{\parsep}{0pt}
 \setlength{\partopsep}{0pt}
 \setlength{\leftmargin}{15pt}
 \setlength{\rightmargin}{0pt}
 \setlength{\itemindent}{0pt}
 \setlength{\labelsep}{5pt}
 \setlength{\labelwidth}{10pt}
}}
{
 \end{list} 
}
\newcounter{numberone}
\newenvironment{varenumerate}
{
\begin{list}{\arabic{numberone}.}
{
  \usecounter{numberone}
  \setlength{\itemsep}{0pt}
  \setlength{\topsep}{0pt}
  \setlength{\parsep}{0pt}
  \setlength{\partopsep}{0pt}
  \setlength{\leftmargin}{15pt}
  \setlength{\rightmargin}{0pt}
  \setlength{\itemindent}{0pt}
  \setlength{\labelsep}{5pt}
  \setlength{\labelwidth}{15pt}
}}
{
\end{list} 
}
\newcounter{numbertwo}
\newcommand{\new}[1]{\mathtt{new}(#1)}
\newcommand{\redto}{\to}
\newcommand{\ppredto}[2]{\to_{#1}^{#2}}
\newcommand{\predto}[1]{\to_{#1}}
\newcommand{\parred}{\Longmapsto}
\newtheorem{theorem}{Theorem}
\newtheorem{proposition}{Proposition}
\newtheorem{definition}{Definition}
\newtheorem{remark}{Remark}
\newenvironment{proof}{\begin{trivlist}
       \item[\hskip \labelsep {\bfseries Proof.}]}{\hfill $\Box$ \end{trivlist}}}
\newdimen\netunit
\newlength\vt\newlength\hz
\newcommand{\lspine}[3]{%
  \save\POS
  @+, s0+<1em,1.5em>@+, s0+/u#3\vt/+/d1.5em/@+, s0+/l#2\hz/@+, 
  s0+/d#3\vt/@+, s0.{s0+/r#2\hz/}+R(.4)@+, s0+/u1.5em/@+,
  s6+<-1em,1.5em>@+,
  {s0 \ar@{-} 's1 's2 's3 's4 's5 's6 's7 s0},
  {s6.s4+C}.s1.s5="SPINE#1",
  s2.s3+C="SPINE#1-AUX",
  @-@-@-@-@-@-@-
  \restore\POS
  }
\newcommand{\rspine}[3]{%
  \save\POS
  @+, s0+<-1em,1.5em>@+, s0+/u#3\vt/+/d1.5em/@+, s0+/r#2\hz/@+, 
  s0+/d#3\vt/@+, s0.{s0+/l#2\hz/}+L(.4)@+, s0+/u1.5em/@+,
  s6+<1em,1.5em>@+,
  {s0 \ar@{-} 's1 's2 's3 's4 's5 's6 's7 s0},
  {s6.s4+C}.s1.s5="SPINE#1",
  s2.s3+C="SPINE#1-AUX",
  @-@-@-@-@-@-@-
  \restore\POS
  }
\newcommand{\qstar}{\textsf{Q}\mbox{$^{*}$}}
\begin{document}
\condinc{
\title{Confluence Results\\ for a Quantum Lambda Calculus\\ with Measurements}
\author{Ugo Dal Lago\footnote{Dipartimento di Scienze dell'Informazione, Universit\`a di Bologna} 
\and Andrea Masini\footnote{Dipartimento di Informatica, Universit\`a di Verona}
\and Margherita Zorzi\footnote{Dipartimento di Informatica, Universit\`a di Verona}}
\maketitle
\begin{abstract} 
  A strong confluence result for \qstar, a quantum $\lambda$-calculus with measurements, is proved.
  More precisely, confluence is shown to hold both for finite and infinite computations. The technique used in the
  confluence proof is syntactical but innovative. This makes \qstar\ different from 
  similar quantum lambda calculi, which are either measurement-free or provided with
  a reduction strategy.
\end{abstract}
}{%
\begin{frontmatter}
  \title{Confluence Results\\ for a Quantum Lambda Calculus\\ with Measurements\thanksref{ALL}} 
  \author{Ugo Dal Lago\thanksref{ugoemail}}
  \address{Dipartimento di Scienze dell'Informazione\\ Universit\`a di Bologna} 
  \author{Andrea Masini\thanksref{andreaemail}}
  \author{Margherita Zorzi\thanksref{margheemail}}
  \address{Dipartimento di Informatica\\Universit\`a di Verona}
  \thanks[ALL]{The authors are partially supported by PRIN project ``CONCERTO'' and FIRB grant RBIN04M8S8, ``Intern. Inst. for Applicable Math.''} 
  \thanks[ugoemail]{Email:\href{mailto:dallago@cs.unibo.it}
    {\texttt{\normalshape dallago@cs.unibo.it}}}
  \thanks[andreaemail]{Email:\href{mailto:andrea.masini@univr.it}
    {\texttt{\normalshape andrea.masini@univr.it}}}
  \thanks[margheemail]{Email:\href{mailto:margherita.zorzi@univr.it}
    {\texttt{\normalshape margherita.zorzi@univr.it}}}
\begin{abstract} 
  A strong confluence result for \qstar, a quantum $\lambda$-calculus with measurements, is proved.
  More precisely, confluence is shown to hold both for finite and infinite computations. The technique used in the
  confluence proof is syntactical but innovative. This makes \qstar\ different from 
  similar quantum lambda calculi, which are either measurement-free or provided with
  a reduction strategy.
\end{abstract}
\begin{keyword}
Quantum computation, lambda calculus, confluence
\end{keyword}
\end{frontmatter}
}%
\section{Introduction}
It is well known that the measurement-free evolution of a quantum system is deterministic.  
As a consequence it is to be expected that a good measurement-free quantum lambda calculus 
enjoys confluence. This is the case of \qcalc, by the authors~\cite{DLMZmscs08} and
of the lambda calculus recently introduced by Arrighi and Dowek~\cite{ArrDow08RTA}.
The situation becomes more complicated if we introduce a measurement operator. In fact measurements 
break the deterministic evolution of a quantum system\footnote{at the present theoretical status of
quantum mechanics, measurement is not reconcilable with the quantum theory, this is the famous 
\textit{measurement problem}~\cite{Ish95}}: in presence of measurements the behaviour becomes irremediably probabilistic.

An explicit measurement operator in the syntax allows an observation at an intermediate step 
of the computation: this feature is needed if we want, for example, to write algorithms such 
as Shor's factorization. In quantum calculi the intended  meaning of a measurement is the observation of
a (possibly superimposed) quantum bit, giving as output a classical bit; the two possible
outcomes (i.e., the two possible values of the obtained classical bit) can be observed with two probabilities
summing to $1$. Since measurement forces a probabilistic evolution in the computation, it is not 
surprising that we need probabilistic instruments in order to investigate the main features of 
the language. 

In this paper, we study an extension of \qcalc\ obtained by endowing the
language of terms with a suitable measurement operator and coherently
extending the reduction relation, which becomes probabilistic for the
reasons we have just explained. 
We investigate the resulting calculus, called \qstar, focusing,
in particular, on confluence.

In \qstar\ and \qcalc, states are formalized by \emph{configurations},
i.e., triples in the form $[\Qr, \QV, M]$, where $M$ is a lambda term, $\Qr$ is
a quantum state, and $\QV$ is a set of names of quantum variables. 
So, control is classical ($M$ is simply a term) while data is
quantum ($\Qr$ is an element of a finite-dimensional
Hilbert space).

We are interested in the following question:
what happens to properties such as confluence in
presence of measurements? And moreover: is it possible to preserve
confluence in the probabilistic setting induced by 
measurements?
Apparently, the questions above cannot receive a positive answer:
as we will see in section~\ref{sec:ConfPb}, it is possible to exhibit a
configuration $\confone$ such that there are two
different reductions starting at $\confone$ and
ending in two essentially different configurations 
in normal form $[\unoqr,\emptyset,0]$ and
$[\unoqr,\emptyset,1]$. In other words, confluence fails in its
usual form.
But the question now becomes: are the usual notions of computations 
and confluence adequate in this setting?

In \qstar, there are two distinct sources of divergence:
\begin{varitemize}
\item
  On the one hand, a redex involving the measurement operator
  can be reduced in two different ways, i.e., divergence
  can come from \emph{a single redex}.
\item
  On the other hand, a term can contain
  more than one redex and \qstar\ is \emph{not} endowed with a
  reduction strategy. As a consequence, some configurations can
  be reduced in different ways due to the presence of
  \emph{distinct redexes} in a term.
\end{varitemize}
We cannot hope to be confluent with respect to the first source
of divergence, but we can anyway ask ourselves whether all reduction
strategies are somehow equivalent. More precisely, we say that 
\qstar\ is \emph{confluent} if
for every configuration $\confone$ and for every configuration in normal form 
$\conftwo$, there is a fixed real number $p$ such that the probability of observing 
$\conftwo$ when reducing $\confone$ is always $p$, independently of the reduction strategy.

This notion of confluence can be easily captured by analyzing 
rewriting on \emph{mixed states} rather than rewriting on
configurations. A mixed state is a probabilistic distribution
on configurations whose support is finite. Rewriting on
configurations naturally extend to rewriting on mixed states.
Rewriting on mixed states is \emph{not} a probabilistic relation,
and confluence is the usual confluence coming from rewriting
theory~\cite{TereseTRS}.

In this paper, we prove that \qstar\ is indeed confluent
in this sense. Technically, confluence is proved in an innovative way.
The key point is that we need a new definition of computation. The
usual notion of computation as a sequence of configurations
is not adequate here.
A notion of \emph{probabilistic computation} replaces it, 
as something more general than a linear sequence of configurations
but less general than the reduction tree:
a probabilistic computation is a (possibly) infinite tree,
in which binary choice (a node can have at most two children)
corresponds to the two possible outcomes of a measurement. 
This new notion of computation is needed, because proving 
confluence \emph{directly} on mixed states is non-trivial.
As by-products, we prove other results in the style of confluence.

Another important property of any quantum lambda calculus with
measurements is the importance of infinite computations. In the
case of standard lambda calculus, the study of infinite computations is
strongly related to the study of infinite lambda terms. This is not
the case of \qstar\  (and in general of quantum calculi with
measurements).  
This phenomenon forced us to extend the
study of confluence to the case of infinite probabilistic
computations.  The proposed analysis is not standard and is based
on new techniques.

The rest of this paper is structured as follow:
\begin{varitemize} 
\item in Section~\ref{sec:QStarWfr} the quantum $\lambda$-calculus \qstar\ is introduced;
\item in Section~\ref{sec:ConfPb} we introduce the confluence problem in an informal way;
\item in Section~\ref{sec:Pcomp} we give the definition of a \emph{probabilistic computation};
\item in Section~\ref{sec:StrongConfl} a strong confluence result on probabilistic computations is given;
\item in Section~\ref{sec:MixS} \emph{ mixed states} and  \emph{mixed computations} are introduced, and we give 
  a conluence theorem for mixed computations. 
\end{varitemize}
\condinc{}{An extended version with all proofs is available~\cite{ExtendedVersion}.}

\condinc{%
\section{The Calculus  \qstar}\label{sec:QStarWfr}
 In~\cite{DLMZmscs08} we have introduced a measurement--free, untyped quantum
$\lambda$--calculus, called \qcalc, based on the 
\textit{quantum data and classical control} paradigm (see
e.g.~\cite{Sel04c,SelVal06}). In this paper we generalize \qcalc\ by extending the class of terms
with a measurement operator, obtaining \qstar. 

As for \qcalc, \qstar\ is based on the notion of a \textit{configuration} (see Section~\ref{sect:comp}), 
namely a triple $[\Qr, \QV, M]$ where $\Qr$ is a \textit{quantum register}\footnote{the ``empty'' 
quantum register will be denoted with the scalar number $1$.}, $\QV$ is a \textit{finite} set of names, 
called \textit{quantum variables}, and $M$ is an untyped \textit{term} based on the linear lambda-calculus 
defined by Wadler~\cite{Wad94} and Simpson~\cite{Simpson05}.  

Quantum registers are systems of $n$ qubits, that, mathematically speaking, are normalized vectors of finite 
dimensional Hilbert spaces. In particular, a quantum register $\Qr$ of a configuration $[\Qr, \QV, M]$, is 
a normalized vector of the Hilbert space $\ell^2(\{0,1\}^\QV)$, denoted here with $\Hs{\QV}.$\footnote{see ~\cite{DLMZmscs08} 
for a full discussion of $\Hs{\QV}$ and ~\cite{RomanBook} for a general treatment of $\ell^2(S)$ spaces.}  
Roughly speaking, the reader could think that quantum variables are references 
to qubits in a quantum register. 

There are three kinds of operations on quantum registers: \textit{(i)} the 
creation of a new qubit; \textit{(ii)} \textit{unitary operators}: each unitary operator $\mathbf{U}_{<< q_1,\ldots,q_n>>}$ 
corresponds to a pure quantum operation acting on qubits with names $q_1,\ldots,q_n$ (mathematically, a unitary 
transform on the Hilbert space $\Hs{\{q_1,\ldots,q_n\}}$, see~\cite{DLMZmscs08}); \textit{(iii)} \textit{one qubit measurement} 
operations $\mis{r}{0}, \mis{r}{1}$ responsible of the probabilistic reduction of the quantum state plus 
the destruction of the qubit referenced by $r$: given a quantum register $\Qr\in\Hs{\QV}$, and a quantum variable 
name $r\in \QV$, we allow the (destructive) measurement of the qubit with name $r$\condinc{(see Section~\ref{sect:qrm} for more details).}
{ 
  \footnote{More precisely, for every quantum variable $r$ we assume
    the existence of two linear measurement operators, $\mis{r}{0}, \mis{r}{1} :
    \Hs{\QV} \to \Hs{\QV-\{r\}}$ enjoying the completeness condition
    ${\mis{r}{0}}^\dag\mis{r}{0} + {\mis{r}{1}}^\dag \mis{r}{1}=
    \mathit{id}_{\Hs{\QV}}$ and such that, given a quantum register
    $\Qr\in\Hs{\QV}$, the measurement of the qubit with name $r$ in
    $\Qr$ gives the outcome $c$ (with $c\in\{0,1\}$) with probability
    $p_c=\langle\Qr|{\mis{r}{c}}^\dag\mis{r}{c}|\Qr\rangle$ and produces
    the new quantum register $\frac{\mis{r}{c}\Qr}{\sqrt p_c}$;
    see~\cite{NieCh00,KLM07} for a detailed discussion of general pure
    measurements.}.
}

We conclude this short overview with few words on the set of \textit{elementary unitary operators}.
We say that a class $\{ \mathbf{U}_i\}_{i\in I}$ is \textit{elementary} iff for every $i\in I$ , 
the unitary operator $\mathbf{U}_i$ is realizable, either physically (i.e. by a laser or by other apparatus) 
or by means of a computable devices, such as a Turing machine. 
Different classes of elementary operators could be defined, among these, a remarkable class is those of 
\textit{computable operators}, see e.g.~\cite{BerVa97,DLMZmscs08,NieCh00,NiOz02,NishOza08}.
  
\subsection{Terms, Judgements and Well-Formed-Terms}

Let ${\mathcal U}$ an elementary set of unitary operators. Let us associate to each elementary 
operator $\mathbf{U}\in {\mathcal U}$ a symbol $U$.
The set of \emph{term expressions}, or
\emph{terms} for short, is defined by the following grammar:
$$\begin{array}{lcl}
  x & ::= & x_0, x_1,\ldots  \hfill \mbox{\emph{classical variables}}\\ 
  r & ::= & r_0,r_1,\ldots   \hfill\mbox{\emph{quantum variables}}\\ 
  \pi & ::= & x\ \mid\ <x_1,\ldots,x_n>  \hfill \mbox{\emph{linear patterns}}\\ 
  \psi & ::= & \pi\ \mid\ !x  \hfill \mbox{\emph{patterns}}\\ 
  B & ::= & 0\ \mid\ 1\   \hfill \mbox{\emph{boolean constants}}\\ 
  U & ::= & U_0, U_1,\ldots   \hfill \mbox{\emph{unitary operators}}\\ 
  C & ::= & B\ \mid\ U  \hfill\mbox{\emph{constants}}\\ 
  M & ::= & x\ \mid\ r\ \mid\; !M\ \mid C \mid \new{M}\ \mid M_1M_2\ \mid \\ 
  & & \meas{M}\ \mid  \ifte{N}{M_1}{M_2}\ \mid \\
  & &  <M_1,\ldots,M_n>\ \mid \lambda \psi.M\  \mid
 \qquad \qquad \qquad\qquad\qquad \hfill \mbox{\emph{terms (where   $n\geq 2$)}}
\end{array}$$

We assume to work modulo variable renaming, i.e. \textit{terms are
equivalence classes modulo $\alpha$-conversion}. Substitution up to
$\alpha$-equivalence is defined in the usual way.
Since we are working modulo $\alpha$-conversion, we are authorized to use the so called Barendregt 
Convention on Variables (shortly, BCV)~\cite{Bar84}: \textit{in each mathematical context (a term, 
a definition, a proof...) the names chosen for bound variables will always differ from those of the free ones.}

Let us denote with $\Qvt{M_1,\ldots,M_k}$ the set of quantum variables
occurring in $M_1,\ldots,M_k$.
Notice that:
\begin{varitemize}
\item Variables are either \emph{classical} or \emph{quantum}: the
  first ones are the usual variables of lambda calculus (and can be bound by
  abstractions), while each
  quantum variable refers to a qubit in the underlying quantum
  register (to be defined shortly).
\item There are two sorts of constants as well, namely \emph{boolean
    constants} ($0$ and $1$) and \emph{unitary operators}: the first
  ones are useful for generating qubits and play no significant r\^ole in classical
  computation, while unitary operators are applied to (tuples of)
  quantum variables when performing quantum computation.
\item
  The term constructor $\new{\cdot}$ creates a new qubit when applied to a 
  boolean constant.
  \item The term constructor $\mathsf{meas}(\cdot)$ perform a single qubit measurement when applied to a quantum variable. 
\item The syntax allows the so
  called pattern abstraction.  A pattern is either a classical variable,  a tuple of
  classical variables, or a ``banged'' variable (namely an expression $!x$, where $x$ is a classical variable). In order
  to allow an abstraction of the kind $\lambda!x.M$, environments (see below)
  can contain $!$--patterns, denoting duplicable or erasable variables.
\end{varitemize}

For each qvs $\QV$ and for each quantum variable $r\in \QV$, we assume to have two, measurement based, linear 
transformation  of quantum registers:
$\mis{r}{0}, \mis{r}{1} :
\Hs{\QV} \to \Hs{\QV-\{r\}}$ (see Section~\ref{sect:qrm} for more details).

Judgements are defined from various notions of environments,  that
take into account the way the variables are used. Following
common notations in type theory and proof theory, a set of
variables $\{x_1,\ldots,x_n\}$ is often written simply as
$x_1,\ldots,x_n$. Analogously, the union of two sets of variables
$X$ and $Y$ is denoted simply as $X,Y$.
\begin{varitemize}
\item 
  A \textit{classical environment} is a (possibly empty) set of classical
  variables. Classical environments are denoted by $\Delta$ (possibly with indexes). 
  Examples of classical environments are $x_1,x_2$ or $x,y,z$ or the empty
  set $\emptyset$. Given
  a classic environment $\Delta=x_1,\ldots,x_n$,
  $!\Delta$ denotes the set of patterns $!x_1,\ldots,!x_n$.
\item 
  A \textit{quantum environment} is a (possibly empty) set (denoted by
  $\Theta$, possibly indexed) of quantum variables.
  Examples of quantum environments are $r_1,r_2,r_3$ and the empty
  set $\emptyset$.
\item 
  A \textit{linear environment} is (possibly empty) set (denoted by
  $\Lambda$, possibly indexed) in the form $\Delta,\Theta$
  where $\Delta$ is a classical environment and $\Theta$ is a quantum
  environment. The set $x_1,x_2,r_1$ is an example of a linear
  environment.
\item 
  An \textit{environment} (denoted by $\Gamma$, possibly indexed) is
  a (possibly empty) set in the form $\Lambda,!\Delta$ where each classical
  variable $x$ occurs at most once (either as $!x$ or as $x$) in $\Gamma$.
  For example, $x_1,r_1,!x_2$ is an environment, while $x_1,!x_1$ is \emph{not}
  an environment.
\item 
  A \emph{judgement} is an expression $\Gamma\vdash M$, where $\Gamma$
  is an environment and $M$ is a term.
\end{varitemize}

\begin{figure*}[!htb]
\dlinea
{\footnotesize
$$
\begin{array}{r@{\qquad}c@{\qquad}c@{\qquad}l} 
  \urule{}{!\Delta\vdash C}{\mbox{\textsf{const}}} &
  \urule{}{!\Delta, r\vdash r}{\mbox{\textsf{qvar}}} & 
  \urule{}{!\Delta,  x \vdash x}{\mbox{\textsf{cvar}}} &
  \urule{}{!\Delta ,!x\vdash x}{\mbox{\textsf{der}}} 
\end{array}
$$
$$
\begin{array}{r@{\qquad}c@{\qquad}l}
  \urule{!\Delta\vdash M}{!\Delta\vdash !M}{\mbox{\textsf{prom}}} &
  \brule{\Lambda_1,!\Delta\vdash M}{\Lambda_2, !\Delta\vdash N}
    {\Lambda_1,\Lambda_2, !\Delta\vdash MN}{\mbox{\textsf{app}}} &
  \urule{\Lambda_1,!\Delta\vdash M_{1} \cdots \Lambda_k,!\Delta\vdash M_{k}}
  {\Lambda_1,\ldots,\Lambda_k, !\Delta \vdash <M_{1},\ldots, M_{k}>}
  {\mbox{\textsf{tens}}}
\end{array}
$$
$$
\begin{array}{r@{\qquad}r@{\qquad}r@{\qquad}l} 
  \urule{\Gamma\vdash M}{\Gamma\vdash \new{M}}{\mathtt{\mbox{\textsf{new}}}} &
  \urule{\Gamma ,x_1,\ldots,x_n\vdash M}{\Gamma \vdash \lambda <x_1,\ldots,x_n>.M}{\textsf{lam1}} &
  \urule{\Gamma ,x\vdash M}{\Gamma \vdash \lambda  x.M}{\textsf{lam2}} &
  \urule{\Gamma ,!x\vdash M}{\Gamma \vdash \lambda ! x.M}{\textsf{lam3}} 
\end{array}
$$
$$
\begin{array}{r@{\qquad}r@{\qquad}r@{\qquad}l} 
\urule{\Gamma \vdash M}{\Gamma \vdash \meas{M}}{\textsf{ meas}} &
  \trule{\Lambda\vdash N}{!\Delta\vdash M_1}{!\Delta\vdash M_2}{\Lambda,!\Delta \vdash \ifte{N}{M_1}{M_2}}{\textsf{ if}} 
\end{array}
$$}
\dlinea
\caption{Well--Forming Rules}\label{fig:wfr}
\end{figure*}
We say that a judgement $\Gamma\vdash
M$ is \textit{well--formed} (\textit{notation:} $\wfj{}\Gamma\vdash M$)
if it is derivable from the \textit{well--forming rules} in
Figure~\ref{fig:wfr}.
The rules $\mathsf{app}$ and $\mathsf{tens}$ are subject to the following
constraint: for each $i\neq j$ $\Lambda_i\cap\Lambda_j=\emptyset$
(notice that $\Lambda_i$ and $\Lambda_j$ are sets of linear and
quantum variables, being linear environments).
With $\wfj{d}\Gamma\vdash M$ we mean that $d$ is a
derivation of the well--formed judgement $\Gamma\vdash M$.
If $\Gamma\vdash M$ is well--formed, the term $M$ is said to
be \emph{is well--formed with respect to the environment $\Gamma$}.
A term $M$ is \emph{well--formed} if the judgement
$\Qvt{M}\vdash M$ is well--formed.

\begin{remark}
\qstar\ comes equipped with two constants $0$ and $1$, and an $\ifte{(\cdot)}{(\cdot)}{(\cdot)}$ constructor.
However, these constructors can be thought of as syntactic sugar. Indeed,
$0$ and $1$ can be encoded as pure terms:
$0=\lambda !x.\lambda !y.y$ and 
$1=\lambda !x.\lambda !y.x $.
In doing so, $\ifte{M}{N}{L}$ becomes $M!N!L$.
The well--forming rule \textsf{if} (see Figure~\ref{fig:wfr}) of \qstar\ fully agrees with the above encodings.
\end{remark}

\condinc
{
\section{Quantum Registers and Measurements}\label{sect:qrm}
Before giving the definition of destructive measurement used in this paper we must 
clarify something about quantum spaces.

The smallest quantum space is $\Hs{\emptyset}$, which is (isomorphic to) 
the field $\CC$. The so called \textit{empty quantum register} is nothing 
more than a unitary element of $\CC$ (i.e., a complex number $c$ such that
$|c|=1$). We have chosen the scalar number $1$ as the canonical 
empty quantum register. In particular the number $1$ represents also the 
computational basis of $\Hs{\emptyset}$.

It is easy to show that if $\QV\cap \RV=\emptyset$ then there is a
standard \textit{isomorphism} 
$$
\Hs{\QV}\otimes\Hs{\RV} \stackrel{{i}_s}{\simeq} \Hs{\QV\cup\RV}.
$$
In the rest of this paper we will assume
to work up-to such an isomorphism\footnote{
in particular, if 
$\Qr\in \Hs{\QV}$, $r\not\in\QV$ and $|r\mapsto c>\in \Hs{\{r\}}$ 
then $\Qr\otimes|r\mapsto c>$ will denote the element 
$i_s(\Qr\otimes|r\mapsto c>)\in\Hs{\QV\cup\{r\}}$
}.
Note that the previous isomorphism holds even if either $\QV$ or $\RV$ is empty.

Since a quantum space $\Hs{\QV}$ is an Hilbert space, $\Hs{\QV}$ has a zero element $0_{\QV}$ 
(we will omit the subscript, when this does not cause ambiguity).
In particular, if $\QV\cap \RV=\emptyset$, 
$\Qr\in \Hs{\QV}$ and $\Rr \in \Hs{\RV}$, then 
$\Qr\otimes 0_{\RV} = 0_{\QV}\otimes\Rr = 0_{\QV\cup\RV}\in\Hs{\QV\cup\RV}$.

\begin{definition}[Quantum registers]
  Given a quantum space $\Hs{\QV}$, a \emph{quantum register} is any $\Qr\in\Hs{\QV}$ 
  such that either $\Qr=0_{\QV}$ or $\Qr$ is a normalised vector. 
\end{definition}

Let  $\QV$ be a qvs with cardinality $n\geq 1$. Moreover, let $\Qr\in\Hs{\QV}$ and let $r\in\QV$.+
Each state $\Qr$ may be represented as follows:
$$
\Qr=\sum_{i=1}^{2^{n-1}}\alpha_{i}|r\mapsto 0>\otimes b_i + \sum_{i=1}^{2^{n-1}}\beta_{i}|r\mapsto 1>\otimes b_i 
$$
where $\{b_i\}_{i\in[1,2^{n-1}]}$ is the computational basis\footnote{in the mathematical 
literature, computational basis are usually called standard basis; see~\cite{DLMZmscs08}, 
for the definition of computational/standard basis of $\Hs{\QV}$.}
of $\Hs{\QV-\{r\}}$. Please note that if $\QV=\{r\}$, then $\Qr=\alpha |r\mapsto 0>\otimes 1+\beta |r\mapsto 1>\otimes 1$, 
that is, via the previously stated isomorphism,   $\alpha |r\mapsto 0>+\beta |r\mapsto 1>$.

\begin{definition}[Destructive measurements]
Let  $\QV$ be a  qvs with cardinality $n=|\QV|\geq 1$,  
$r\in\QV$, $\{b_i\}_{i\in[1,2^{n-1}]}$ be the computational basis of $\Hs{\QV-\{r\}}$ 
and  $\Qr$ be $\sum_{i=1}^{2^{n-1}}\alpha_{i}|r\mapsto 0>\otimes b_i + \sum_{i=1}^{2^{n-1}}\beta_{i}|r\mapsto 1>\otimes b_i\in\Hs{\QV}$.
The two linear functions
$$
\mismin{r}{0},  
\mismin{r}{1} :
  \Hs{\QV} \to \Hs{\QV-\{r\}}
$$ 
such that
$$
\mismin{r}{0}(\Qr)=\sum_{i=1}^{2^{n-1}}\alpha_{i} b_i
\qquad 
\mismin{r}{1}(\Qr)=\sum_{i=1}^{2^{n-1}}\beta_{i} b_i
$$
are called \emph{destructive measurements}.
If $\Qr$ is a quantum register, the \emph{probability $p_c$ of observing $c\in\{0,1\}$ when observing $r$ in $\Qr$} is defined as
$\langle\Qr|{\mismin{r}{c}}^\dag\mismin{r}{c}|\Qr\rangle$.
\end{definition}
The just defined measurement operators are
\textit{general measurements}~\cite{KLM07,NieCh00}:
\begin{proposition}[Completeness Condition]
  Let $r\in\QV$ and $\Qr\in\Hs{\QV}$. Then
$\mismin{r}{0}^{\dag}\mismin{r}{0}+\mismin{r}{1}^{\dag}\mismin{r}{1}=Id_{\Hs{\QV}}$.
\end{proposition}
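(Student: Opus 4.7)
The plan is to prove the identity by direct computation. The operators $\mismin{r}{0}$ and $\mismin{r}{1}$ are defined by their action on a general vector written in the computational basis, so their adjoints can be read off explicitly, and then the composition with the original operator just reassembles the decomposition of $\Qr$.

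First, I would identify the adjoint operators $\mismin{r}{c}^{\dag}: \Hs{\QV-\{r\}} \to \Hs{\QV}$ (for $c\in\{0,1\}$) by checking the defining identity $\langle \mismin{r}{c}(\Qr) \mid v\rangle = \langle \Qr \mid \mismin{r}{c}^{\dag}(v)\rangle$ on basis vectors. Since $\{b_i\}_{i\in[1,2^{n-1}]}$ is the computational basis of $\Hs{\QV-\{r\}}$ and $\{|r\mapsto 0\rangle\otimes b_i,\,|r\mapsto 1\rangle\otimes b_i\}_{i}$ is the computational basis of $\Hs{\QV}$ (via the isomorphism $i_s$), a straightforward calculation using the expansion of $\Qr$ shows that $\mismin{r}{c}^{\dag}(b_j) = |r\mapsto c\rangle \otimes b_j$, and this extends by linearity.

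Next, I would compute the two compositions applied to a generic quantum state $\Qr = \sum_{i} \alpha_i |r\mapsto 0\rangle\otimes b_i + \sum_{i}\beta_i |r\mapsto 1\rangle\otimes b_i$. From the definitions one gets $\mismin{r}{0}^{\dag}\mismin{r}{0}(\Qr) = \sum_i \alpha_i |r\mapsto 0\rangle \otimes b_i$ and symmetrically $\mismin{r}{1}^{\dag}\mismin{r}{1}(\Qr) = \sum_i \beta_i |r\mapsto 1\rangle \otimes b_i$. Adding these yields exactly the original decomposition of $\Qr$, so the sum acts as the identity on every vector and the claim follows by linearity and density of finite combinations.

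The proof is essentially routine linear algebra, so I do not expect a real obstacle. The only point needing care is the correct identification of the adjoint: one must check that the candidate map $b_j \mapsto |r\mapsto c\rangle \otimes b_j$ really satisfies the adjoint identity with respect to the chosen inner product on $\Hs{\QV}$ induced by the isomorphism $\Hs{\{r\}}\otimes \Hs{\QV-\{r\}} \simeq \Hs{\QV}$. Once this identification is made, the remainder is just bookkeeping with the coefficients $\alpha_i,\beta_i$.
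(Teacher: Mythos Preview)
Your proof is correct but takes a different route from the paper's. You explicitly identify the adjoint $\mismin{r}{c}^{\dag}$ as the isometric embedding $v\mapsto |r\mapsto c\rangle\otimes v$ and then compute $\mismin{r}{c}^{\dag}\mismin{r}{c}(\Qr)$ directly, showing that the two pieces reassemble $\Qr$. The paper instead avoids naming the adjoint at all: it invokes the fact (via Riesz) that a linear map $A$ on an inner product space is the identity iff $\langle A\Qr,\Rr\rangle=\langle\Qr,\Rr\rangle$ for all $\Qr,\Rr$, and then computes
\[
\langle(\mismin{r}{0}^{\dag}\mismin{r}{0}+\mismin{r}{1}^{\dag}\mismin{r}{1})\Qr,\Rr\rangle
=\langle\mismin{r}{0}\Qr,\mismin{r}{0}\Rr\rangle+\langle\mismin{r}{1}\Qr,\mismin{r}{1}\Rr\rangle
=\sum_i\alpha_i\overline{\gamma_i}+\sum_i\beta_i\overline{\delta_i}
=\langle\Qr,\Rr\rangle,
\]
using only the defining property of the adjoint. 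Your approach is more concrete and arguably more informative, since knowing $\mismin{r}{c}^{\dag}$ explicitly is useful elsewhere; the paper's approach is slightly slicker in that it never has to verify what the adjoint is. One small remark: your closing phrase ``density of finite combinations'' is unnecessary here, since $\Hs{\QV}$ is finite-dimensional and every vector already is such a combination.
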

\begin{proof}
In order to prove the proposition we will use the following general property of inner product spaces: 
let $\HS$ be an inner product space and let $A:\HS\to\HS$ be a linear map.  If for each $x,y\in\HS$, $<Ax,y>= <x,y>$ 
then $A$ is the identity map\footnote{such a property is an immediate consequence of the \textit{Riesz representation theorem}, see e.g.~\cite{RomanBook}}.
Let $\Qr,\Rr\in\Hs{\QV}$. If $\{b_i\}_{i\in[1,2^{n}]}$ is the computational basis of $\Hs{\QV-\{r\}}$,  then:
\begin{align*}
\Qr&= \sum_{i=1}^{2^{n}}\alpha_{i}|r\mapsto 0>\otimes b_i+\sum_{i=1}^{2^{n}}\beta_{i}|r\mapsto 1>\otimes b_i \\
\Rr&= \sum_{i=1}^{2^{n}}\gamma_{i}|r\mapsto 0>\otimes b_i + \sum_{i=1}^{2^{n}}\delta_{i}|r\mapsto 1>\otimes b_i.
\end{align*}
We have:
\begin{align*}
\langle(\mismin{r}{0}^{\dag}\mismin{r}{0}+\mismin{r}{1}^{\dag}\mismin{r}{1})(\Qr), \Rr\rangle &=
  \langle\mismin{r}{0}^\dag\mismin{r}{0}(\Qr), \Rr\rangle+\langle\mismin{r}{1}^\dag\mismin{r}{1} (\Qr),\Rr\rangle\\
&=\langle\mismin{r}{0} (\Qr),\mismin{r}{0}(\Rr)\rangle + \langle\mismin{r}{1} (\Qr),\mismin{r}{1}(\Rr)\rangle\\
&=\langle\sum_{i=1}^{2^{n}}\alpha_{i}  b_i ,\sum_{i=1}^{2^{n}}\gamma_{i} b_i\rangle+\langle\sum_{i=1}^{2^{n}}\beta_{i}  b_i ,\sum_{i=1}^{2^{n}}\delta_{i} b_i\rangle\\
&=\sum_{i=0}^{2^{n}}\alpha_i\gamma_i +\sum_{i=0}^{2^{n}}\beta_i\delta_i \\
&=\langle \Qr,\Rr\rangle.
\end{align*}
This concludes the proof.
\end{proof}
For $c\in\{0,1\}$, the measurement operators $\mismin{r}{c}$ enjoys the following properties: 
\begin{proposition}\label{proposition:measCompl}
Let $\Qr\in\Hs{\QV}$. Then:
\begin{varenumerate}
\item\label{case:secondo} 
  $\mismin{r}{c}(\Qr\otimes|q\mapsto d>)
  = (\mismin{r}{c}(\Qr))\otimes|q\mapsto d>$ if $r\in\QV$ and $q\notin\QV$;
\item\label{case:terzo} 
  $\langle \Qr\otimes|s\mapsto d> |\mismin{r}{c}^\dag\mismin{r}{c}| \Qr\otimes|s\mapsto d>\rangle = 
  \langle \Qr,\mismin{r}{c}^\dag\mismin{r}{c}|\Qr\rangle$; if $r\in\QV$ and $r\neq s$;
\item 
  $\mismin{q}{e}(\mismin{r}{d}(\Qr)) =\mismin{r}{d}(\mismin{q}{e}(\Qr))$; if $r,q\in\QV$.
\end{varenumerate}
\end{proposition}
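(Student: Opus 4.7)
The plan is to prove the three items one after another, exploiting the basis decomposition used in the definition of the destructive measurements together with the standard isomorphism $\Hs{\QV}\otimes\Hs{\RV}\simeq\Hs{\QV\cup\RV}$, which has been fixed at the beginning of the section.

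For item (1), I would start by writing $\Qr\in\Hs{\QV}$ in the form
$\Qr=\sum_i \alpha_i |r\mapsto 0\rangle\otimes b_i+\sum_i \beta_i|r\mapsto 1\rangle\otimes b_i$,
where $\{b_i\}$ is the computational basis of $\Hs{\QV-\{r\}}$. Since $q\notin\QV$, tensoring with $|q\mapsto d\rangle$ yields a vector in $\Hs{\QV\cup\{q\}}$ whose decomposition with respect to the basis $\{b_i\otimes|q\mapsto d'\rangle\}_{i,d'}$ of $\Hs{(\QV\cup\{q\})-\{r\}}$ has nonzero coefficients only for $d'=d$, and those coefficients are exactly $\alpha_i$ and $\beta_i$. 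Applying $\mismin{r}{c}$ according to the definition and using the isomorphism yields $(\mismin{r}{c}\Qr)\otimes|q\mapsto d\rangle$, as required. This step is essentially a definition chase; the only care needed is tracking the identification between the two Hilbert spaces.

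For item (2), I would simply chain (1) with its adjoint. Using $\mismin{r}{c}^\dag\mismin{r}{c}=\langle\cdot,\mismin{r}{c}(\cdot)\rangle$-style rewriting,
\[
\langle \Qr\otimes|s\mapsto d\rangle\,|\,\mismin{r}{c}^\dag\mismin{r}{c}\,|\,\Qr\otimes|s\mapsto d\rangle\rangle
=\langle \mismin{r}{c}(\Qr\otimes|s\mapsto d\rangle),\,\mismin{r}{c}(\Qr\otimes|s\mapsto d\rangle)\rangle.
\]
Applying item (1) (with $q:=s$) on both sides and then exploiting the multiplicativity of the inner product under $\otimes$, together with $\langle s\mapsto d|s\mapsto d\rangle=1$, collapses the expression to $\langle \mismin{r}{c}\Qr,\mismin{r}{c}\Qr\rangle=\langle\Qr,\mismin{r}{c}^\dag\mismin{r}{c}\Qr\rangle$.

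For item (3), I would expand $\Qr\in\Hs{\QV}$ along a basis that simultaneously separates the $r$ and $q$ coordinates, namely
\[
\Qr=\sum_{a,b\in\{0,1\}}\sum_i \gamma^{a,b}_i\,|r\mapsto a\rangle\otimes|q\mapsto b\rangle\otimes b_i,
\]
where $\{b_i\}$ is the computational basis of $\Hs{\QV-\{r,q\}}$. By the definition of $\mismin{r}{d}$, applying it selects the summands with $a=d$ and drops the factor $|r\mapsto d\rangle$; applying $\mismin{q}{e}$ afterwards selects $b=e$ and drops $|q\mapsto e\rangle$. The opposite order yields the same surviving coefficients $\gamma^{d,e}_i$ and the same residual factor, so the two compositions coincide.

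The only mildly tricky point is the bookkeeping around the isomorphism $\Hs{\QV}\otimes\Hs{\RV}\simeq\Hs{\QV\cup\RV}$ in item (1), which is used silently afterwards; once this is set up carefully, the three proofs are direct linear-algebra computations with no conceptual difficulty.
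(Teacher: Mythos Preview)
Your proposal is correct and follows essentially the same route as the paper: a direct basis expansion for items (1) and (3), and for item (2) the same chain via $\langle \mismin{r}{c}(\cdot),\mismin{r}{c}(\cdot)\rangle$ followed by item (1) and multiplicativity of the inner product under $\otimes$. The only cosmetic difference is that the paper writes out the four coefficient families explicitly in item (3) rather than indexing them as $\gamma^{a,b}_i$.
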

\begin{proof}
\begin{varenumerate}
\item 
Given the computational basis $\{b_i\}_{i\in[1,2^{n}]}$ of $\Hs{\QV-\{r\}}$, we have that:
$$
\Qr\otimes|q\mapsto d>=\sum_{i=1}^{2^{n}}\alpha_{i}|r\mapsto 0>\otimes b_i\otimes|q\mapsto d> + \sum_{i=1}^{2^{n}}\beta_{i}|r\mapsto 1>\otimes b_i|r\mapsto d>
$$
and therefore 
\begin{align*}
\mismin{r}{0}(\Qr\otimes|q\mapsto d>)&=\sum_{i=1}^{2^{n}}\alpha_{i}(b_i\otimes|r\mapsto d>)\\
 &=\left(\sum_{i=1}^{2^{n}}\alpha_{i} b_i\right) \otimes|q\mapsto d>\\
 &=(\mismin{r}{c}(\Qr))\otimes|q\mapsto d>.
\end{align*}
In the same way we prove the equality for $\mismin{r}{1}$.
\item  
Just observe that:
\begin{align*}
  \langle \Qr\otimes|s\mapsto d> |\mismin{r}{c}^\dag\mismin{r}{c}| \Qr\otimes|s\mapsto d>\rangle&=
  \langle \Qr\otimes |s\mapsto d>, \mismin{r}{c}^\dag(\mismin{r}{c}(\Qr\otimes|s\mapsto d>))\rangle\\ 
  &=\langle \mismin{r}{c}(\Qr\otimes|s\mapsto d>) , \mismin{r}{c}(\Qr\otimes|s\mapsto d>)\rangle\\
  &=\langle\mismin{r}{c}(\Qr) , \mismin{r}{c}(\Qr)\rangle\\
  &=\langle \Qr ,  \mismin{r}{c} ^\dag\mismin{r}{c} \Qr\rangle
   =\langle \Qr | \mismin{r}{c} ^\dag\mismin{r}{c} | \Qr\rangle .
\end{align*}
\item 
Given the computational basis $\{b_i\}_{i\in[1,2^{n}]}$ of $\Hs{\QV-\{r,q\}}$, we have that:
\begin{align*}
\Qr=&\sum_{i=1}^{2^{n}}\alpha_{i}|r\mapsto 0>\otimes|q\mapsto 0>\otimes b_i + \sum_{i=1}^{2^{n}}\beta_{i}|r\mapsto 0>\otimes|q\mapsto 1>\otimes b_i+\\
  &\sum_{i=1}^{2^{n}}\gamma_{i}|r\mapsto 1>\otimes|q\mapsto 0>\otimes b_i + \sum_{i=1}^{2^{n}}\delta_{i}|r\mapsto 1>\otimes|q\mapsto 1>\otimes b_i .
\end{align*}
Let us show that $\mismin{q}{0}(\mismin{r}{0}(\Qr)) = \mismin{r}{0}(\mismin{q}{0}(\Qr))$, the proof of  other cases follow the same pattern.
\begin{align*}
\mismin{r}{0}(\mismin{q}{0}(\Qr))&= 
\mismin{r}{0}\left(\sum_{i=1}^{2^{n}}\alpha_{i}|r\mapsto 0>\otimes b_i + \sum_{i=1}^{2^{n}}\gamma_{i}|r\mapsto 1>\otimes b_i\right)\\
 &=\sum_{i=1}^{2^{n}}\alpha_{i} b_i=\mismin{q}{0}\left(\sum_{i=1}^{2^{n}}\alpha_{i}|q\mapsto 0>\otimes b_i + \sum_{i=1}^{2^{n}}\beta_{i}|q\mapsto 1>\otimes b_i\right)\\
 &=\mismin{q}{0}(\mismin{r}{0}(\Qr)) .
\end{align*}
\end{varenumerate}
\noindent This concludes the proof.
\end{proof}
Given a qvs $\QV$ and a variable $r\in\QV$, we can define two linear maps:
$$
\mis{r}{0}, \mis{r}{1}:\Hs{\QV} \to \Hs{\QV-\{r\}}
$$
which are ``normalized'' versions of $\mismin{r}{0}$ and $\mismin{r}{1}$
as follows:
\begin{varenumerate}
\item if $\langle \Qr  | \mismin{r}{c}^\dag\mismin{r}{c}|\Qr\rangle = 0$ then $\mis{r}{c}(\Qr)=\mismin{r}{c}(\Qr)$;
\item if $\langle \Qr | \mismin{r}{c}^\dag\mismin{r}{c}| \Qr\rangle \neq 0$ then
$\mis{r}{c}(\Qr)=\frac{\mismin{r}{c}(\Qr)}{\sqrt{\langle \Qr | \mismin{r}{c}^\dag\mismin{r}{c}| \Qr\rangle }}$.
\end{varenumerate}
\begin{proposition}\label{Proposition:MisProp}
Let $\Qr\in\Hs{\QV}$ be a quantum register. Then:
\begin{varenumerate}
\item\label{qr-clos} 
  $\mis{r}{c}(\Qr)$ is a quantum register;
\item\label{primo} 
  $\mis{q}{e}(\Qr\otimes|r\mapsto d>)= (\mis{q}{e}(\Qr))\otimes|r\mapsto d>$, with $q\in\QV$ and $q\neq r$;
\item\label{secondo} 
  $\mis{q}{e}(\mis{r}{d}(\Qr)) =\mis{r}{d}(\mis{q}{e}(\Qr))$, with $q,r\in\QV$;
\item\label{secondobis} 
  if $q,r\in\QV$, $p_{r,c}=\langle \Qr | \mismin{r}{c}^{\dag}\mismin{r}{c}|\Qr\rangle$, 
  $p_{q,d}=\langle \Qr | \mismin{q}{d}^{\dag}\mismin{q}{d}|\Qr\rangle$,  $\Qr_{r,c}=\mis{r}{c}(\Qr)$, 
  $\Qr_{q,d}=\mis{q}{d}(\Qr)$, $s_{r,c}=\langle \Qr_{q,d} | \mismin{r}{c}^{\dag}\mismin{r}{c}|\Qr_{q,d}\rangle$, 
  $s_{q,d}=\langle \Qr_{r,c} | \mismin{q}{d}^{\dag}\mismin{q}{d}|\Qr_{r,c}\rangle$ 
  then $p_{r,c}\cdot s_{q,d}=p_{q,d}\cdot s_{r,c}$;
\item\label{terzo} 
  $(\mathbf{U}_{\langle q_{1},\ldots, q_{k}\rangle}\otimes \mathbf{I}_{\QV-\{q_{1},\ldots, q_{k}\}})(\mis{r}{c}(\Qr))=
  \mis{r}{c}((\mathbf{U}_{\langle q_{1},\ldots, q_{k}\rangle}\otimes \mathbf{I}_{\QV-\{q_{1},\ldots, q_{k}\}})(\Qr))$ 
  with  $\{q_{1}, \ldots, q_{k}\}\subseteq\QV$ and $r\neq q_{j}$ for all $j=1,\ldots, k$.
\end{varenumerate}
\end{proposition}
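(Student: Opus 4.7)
The strategy is to reduce each claim to the corresponding property of the unnormalized operators $\mismin{r}{c}$ (Proposition~\ref{proposition:measCompl}) and then account for the normalization factors. Throughout, observe the key identity $\langle \Qr | \mismin{r}{c}^\dag \mismin{r}{c} | \Qr\rangle = \|\mismin{r}{c}(\Qr)\|^2$, so that $p_{r,c}$ is precisely the squared norm of $\mismin{r}{c}(\Qr)$.

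For (\ref{qr-clos}), I would split cases on the definition of $\mis{r}{c}$. If $\langle \Qr|\mismin{r}{c}^\dag\mismin{r}{c}|\Qr\rangle=0$, then $\mismin{r}{c}(\Qr)$ has zero norm, hence equals $0_{\QV-\{r\}}$, which is a (zero) quantum register. Otherwise $\mis{r}{c}(\Qr)=\mismin{r}{c}(\Qr)/\|\mismin{r}{c}(\Qr)\|$ is by construction a unit vector. For (\ref{primo}), the numerator is handled by Proposition~\ref{proposition:measCompl}(\ref{case:secondo}), while Proposition~\ref{proposition:measCompl}(\ref{case:terzo}) guarantees that the normalization factor $\langle \Qr\otimes|r\mapsto d\rangle|\mismin{q}{e}^\dag\mismin{q}{e}|\Qr\otimes|r\mapsto d\rangle$ equals the factor $\langle\Qr|\mismin{q}{e}^\dag\mismin{q}{e}|\Qr\rangle$ used on the right-hand side, so the two sides agree both when these factors vanish and when they do not.

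For (\ref{terzo}), use the fact that the unitary $\mathbf{U}_{\langle q_1,\ldots,q_k\rangle}\otimes \mathbf{I}_{\QV-\{q_1,\ldots,q_k\}}$ acts on coordinates disjoint from $r$, so it commutes with $\mismin{r}{c}$ at the level of the unnormalized operator (by the same computation as in Proposition~\ref{proposition:measCompl}(\ref{case:secondo}), expanding $\Qr$ in a basis that factors out $|r\mapsto 0\rangle, |r\mapsto 1\rangle$). Moreover, unitaries preserve the inner product, so the probability factor $\langle \Qr|\mismin{r}{c}^\dag\mismin{r}{c}|\Qr\rangle$ is unchanged when $\Qr$ is replaced by $(\mathbf{U}\otimes\mathbf{I})\Qr$. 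Combining these two facts yields the equality for the normalized operator.

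The crux is (\ref{secondobis}), from which (\ref{secondo}) will follow easily. To prove (\ref{secondobis}) I would compute both sides directly by expanding in the computational basis of $\Hs{\QV-\{r,q\}}$, writing
\[
\Qr=\sum_i\bigl(\alpha_i|r{\mapsto}0\rangle|q{\mapsto}0\rangle+\beta_i|r{\mapsto}0\rangle|q{\mapsto}1\rangle+\gamma_i|r{\mapsto}1\rangle|q{\mapsto}0\rangle+\delta_i|r{\mapsto}1\rangle|q{\mapsto}1\rangle\bigr)\otimes b_i,
\]
and verifying for every choice $c,d\in\{0,1\}$ that $p_{r,c}\cdot s_{q,d}$ and $p_{q,d}\cdot s_{r,c}$ both equal $\sum_i|\eta_i|^2$ with $\eta_i$ one of $\alpha_i,\beta_i,\gamma_i,\delta_i$ according to the outcomes. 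Concretely, $p_{r,c}\cdot s_{q,d}=\|\mismin{q}{d}(\mismin{r}{c}(\Qr))\|^2$ (because the normalization of $\mis{r}{c}$ divides $\mismin{r}{c}(\Qr)$ by $\sqrt{p_{r,c}}$, and squaring the norm of $\mismin{q}{d}$ of that reintroduces $p_{r,c}$ in $s_{q,d}$), and similarly $p_{q,d}\cdot s_{r,c}=\|\mismin{r}{c}(\mismin{q}{d}(\Qr))\|^2$; these are equal by Proposition~\ref{proposition:measCompl}(3). Once (\ref{secondobis}) is established, (\ref{secondo}) follows: by Proposition~\ref{proposition:measCompl}(3) the unnormalized numerators of $\mis{q}{e}(\mis{r}{d}(\Qr))$ and $\mis{r}{d}(\mis{q}{e}(\Qr))$ coincide, while (\ref{secondobis}) ensures the combined normalization factors $\sqrt{p_{r,d}\cdot s_{q,e}}$ and $\sqrt{p_{q,e}\cdot s_{r,d}}$ are equal (and the degenerate case where any factor vanishes is handled by noting that the corresponding unnormalized vector is then $0$, so both sides vanish).

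The main obstacle is (\ref{secondobis}): it is the only item whose proof does not reduce directly to a single clause of Proposition~\ref{proposition:measCompl}, and it requires the small but careful bookkeeping identity $p_{r,c}\cdot s_{q,d}=\|\mismin{q}{d}\mismin{r}{c}\Qr\|^2$ plus the symmetric statement, together with the vanishing-denominator corner cases.
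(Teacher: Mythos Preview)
Your proposal is correct and follows essentially the same route as the paper: items (\ref{qr-clos}), (\ref{primo}) and (\ref{terzo}) are dismissed as immediate consequences of Proposition~\ref{proposition:measCompl} together with elementary Hilbert-space facts, while (\ref{secondo}) and (\ref{secondobis}) are handled jointly by expanding $\Qr$ in the computational basis of $\Hs{\QV-\{r,q\}}$ and separately treating the degenerate cases where some probability vanishes. Your packaging of (\ref{secondobis}) via the identity $p_{r,c}\cdot s_{q,d}=\|\mismin{q}{d}\mismin{r}{c}\Qr\|^2$ (and then invoking Proposition~\ref{proposition:measCompl}(3)) is a slightly cleaner abstraction than the paper's explicit coefficient computation, but it amounts to the same calculation.
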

\begin{proof}
The proofs of \ref{qr-clos}, \ref{primo} and \ref{terzo} are immediate consequences of 
Proposition~\ref{proposition:measCompl} and of general basic properties of Hilbert spaces.
About \ref{secondo} and \ref{secondobis}:
if $\Qr=0_{\QV}$ then the proof is trivial;
if either $p_{r,c}=0$ or $p_{q,d}=0$ (possibly both), observe that 
$s_{r,c}=s_{q,d}=0$ and $\mis{q}{e}(\mis{r}{d}(\Qr)) =\mis{r}{d}(\mis{q}{e}(\Qr))=0_{\QV-\{q,r\}}$ and conclude.
Suppose now that $\Qr\neq 0_{\QV}$,  $p_{r,c}\neq 0$ and $p_{q,d} \neq 0$.
Given the computational basis $\{b_i\}_{i\in[1,2^{n}]}$ of $\Hs{\QV-\{r,q\}}$, we have that:
\begin{align*}
\Qr=&\sum_{i=1}^{2^{n}}\alpha_{i}|r\mapsto 0>\otimes|q\mapsto 0>\otimes b_i + \sum_{i=1}^{2^{n}}\beta_{i}|r\mapsto 0>\otimes|q\mapsto 1>\otimes b_i+\\
&\sum_{i=1}^{2^{n}}\gamma_{i}|r\mapsto 1>\otimes|q\mapsto 0>\otimes b_i + \sum_{i=1}^{2^{n}}\delta_{i}|r\mapsto 1>\otimes|q\mapsto 1>\otimes b_i
\end{align*}
\noindent Let us examine the case $c=0$ and $d=0$ (the other cases can be handled in the same way).
$$
p_{r,0} = \sum_{i=1}^{2^{n}}|\alpha_i|^2 +  \sum_{i=1}^{2^{n}}|\beta_i|^2 ;
\qquad \qquad
p_{q,0} = \sum_{i=1}^{2^{n}}|\alpha_i|^2 +  \sum_{i=1}^{2^{n}}|\gamma_i|^2;
$$
$$
\Qr_{r,0}=\mis{r}{0}(\Qr)= \frac{
\sum_{i=1}^{2^{n}}\alpha_{i}|q\mapsto 0>\otimes b_i + \sum_{i=1}^{2^{n}}\beta_{i}|q\mapsto 1>\otimes b_i
}
{
\sqrt{p_{r,0}}
}
$$
$$
\Qr_{q,0}= \mis{q}{0}(\Qr)=\frac{
\sum_{i=1}^{2^{n}}\alpha_{i}|r\mapsto 0>\otimes b_i + \sum_{i=1}^{2^{n}}\gamma_{i}|r\mapsto 1>\otimes b_i
}
{
\sqrt{p_{q,0}}
}
$$
Now let us consider the two states:
$$
\Qr_{r,0}^{q,0}=\mismin{q}{0}(\Qr_{r,0})= \frac{
\sum_{i=1}^{2^{n}}\alpha_{i} b_i
}
{
\sqrt{p_{r,0}} 
}
\qquad \qquad
\Qr_{q,0}^{r,0}=\mismin{r}{0}(\Qr_{q,0})= \frac{
\sum_{i=1}^{2^{n}}\alpha_{i} b_i
}
{
\sqrt{p_{q,0}} %
}
$$
By definition:
$$
s_{q,0} = \frac{\sum_{i=1}^{2^{n}}|\alpha_{i}|^2}{p_{r,0}}
\qquad 
s_{r,0} = \frac{\sum_{i=1}^{2^{n}}|\alpha_{i}|^2}{p_{q,0}}
$$
and therefore 
$p_{r,0}\cdot s_{q,d}=p_{q,0}\cdot s_{r,0}$.
\noindent Moreover, if $\QV=\emptyset$ then $\mis{q}{0}(\Qr_{r,0})=\mis{r}{0}(\Qr_{q,0})=1$, otherwise:
\begin{align*}
\mis{q}{0}(\Qr_{r,0})&=\frac{\Qr_{r,0}^{q,0}}{\sqrt{\bar{p}_{q,0}}}
  =\frac{\sum_{i=1}^{2^{n}}\alpha_{i} b_i}{\sqrt{p_{r,0}}\cdot \sqrt{\bar{p}_{q,0}}}
  =\frac{\sum_{i=1}^{2^{n}}\alpha_{i} b_i}{\sqrt{p_{r,0}}\cdot \sqrt{\frac{\sum_{i=1}^{2^{n}}|\alpha_{i}|^2}{p_{r,0}}}}
  =\frac{\sum_{i=1}^{2^{n}}\alpha_{i} b_i}{\sqrt{\sum_{i=1}^{2^{n}}|\alpha_{i}|^2}}\\
\mis{r}{0}(\Qr_{s,0})&=\frac{\Qr_{q,0}^{r,0}}{\sqrt{\bar{p}_{r,0}}}
  =\frac{\sum_{i=1}^{2^{n}}\alpha_{i} b_i}{\sqrt{p_{q,0}}\cdot \sqrt{\bar{p}_{r,0}}}
  =\frac{\sum_{i=1}^{2^{n}}\alpha_{i} b_i}{\sqrt{p_{q,0}}\cdot \sqrt{\frac{\sum_{i=1}^{2^{n}}|\alpha_{i}|^2}{p_{q,0}}}}
  =\frac{\sum_{i=1}^{2^{n}}\alpha_{i} b_i}{\sqrt{\sum_{i=1}^{2^{n}}|\alpha_{i}|^2}}
\end{align*}
and therefore $\mis{q}{0}(\Qr_{r,0})= \mis{r}{0}(\Qr_{s,0})$.
\end{proof}}{}
\subsection{Computations}\label{sect:comp}
In \qstar\ a computation is performed by reducing configurations.
A \emph{preconfiguration} is a triple $[\Qr,\QV,M]$ where:
\begin{varitemize}
\item $M$ is a term;
\item $\QV$ is a finite quantum variable set such that 
$\Qvt{M}\subseteq \QV$;
\item $\Qr\in\Hs{\QV}$.
\end{varitemize}
Let $\theta:\QV\rightarrow\RV$ be a bijective function from a (nonempty) 
finite set of quantum variables $\QV$ to another set of quantum
variables $\RV$.  Then we can extend $\theta$ to any term whose
quantum variables are included in $\QV$: $\theta(M)$ will be identical
to $M$, except on quantum variables, which are changed according to
$\theta$ itself. Observe that
$\Qvt{\theta(M)}\subseteq\RV$. Similarly, $\theta$ can be extended to
a function from $\Hs{\QV}$ to $\Hs{\RV}$ in the obvious way.

\begin{definition}[Configurations]
  Two preconfigurations $[\Qr,\QV,M]$ and $[\Rr,\RV,N]$ are
  equivalent iff there is a bijection $\theta:\QV\rightarrow\RV$ such
  that $\Rr=\theta(\Qr)$ and $N=\theta(M)$.  If a preconfiguration
  $C$ is equivalent to $D$, then we will write $C\equiv D$. The
  relation $\equiv$ is an equivalence relation.
  A \emph{configuration} is an equivalence class of preconfigurations
  modulo the relation $\equiv$. Let $\conf$ be the set of
  configurations.
\end{definition}

\begin{remark}
The way configurations have been defined, namely quotienting
preconfigurations over $\equiv$, is very reminiscent of usual
$\alpha$-conversion in lambda-terms.
\end{remark}

Let $\redrul=\{\Uq,\nw,\lbeta,\qbeta,\cbeta,\lcom,\rcom, \ifl, \ifr, \measr_r\}$.
For every $\alpha\in\redrul$ and for every $p\in\mathbb{R}_{[0,1]}$,
we define a relation $\ppredto{\alpha}{p}\subseteq\conf\times\conf$
by the set of \textit{contractions} in Figure ~\ref{fig:contractions}.   
The notation $\confone\predto{\alpha}\conftwo$ stands for
$\confone\ppredto{\alpha}{1}{\conftwo}$.

\begin{figure*}[!htb]
\dlinea
{\footnotesize 
$$
\begin{array}{c}%
\zrule
  { 
    [\Qr,\QV,(\lambda x.M)N]\ppredto{\lbeta}{1}
    [\Qr,\QV,M\{N/x\}]
  }
  {} 
\quad
\zrule
  { 
    [\Qr,\QV,(\lambda!
    x.M)!N]\predto{\cbeta}{1}[\Qr,\QV, M\{N/x\}] 
  } 
  {}
  \\[2ex]
  \zrule
  {
      [\Qr,\QV,(\lambda<x_1,\ldots,x_n>.M)<r_1,\ldots,r_n>]\ppredto{\qbeta}{1}
      [\Qr,\QV,M\{r_1/x_1,\ldots,r_n/x_n\}]
  } 
  {}
 \\[2ex]
\zrule
  {
      [\Qr,\QV,\ifte{1}{M}{N}]\ppredto{\ifl}{1}
      [\Qr,\QV,M]
  } 
  {}
  \\[2ex]
\zrule
  {
      [\Qr,\QV,\ifte{0}{M}{N}]\ppredto{\ifr}{1}
      [\Qr,\QV,N]
  } 
  {}

\\[2ex]
\zrule
  {
      [\Qr,\QV,U<r_{i_1},...,r_{i_n}>]\ppredto{\Uq}{1}
     {[\mathbf{U}_{<< r_{i_1},\ldots,r_{i_n}>>}\Qr,\QV,<r_{i_1},...,r_{i_n}>]}
  }
  {} 
  \\[2ex]
\zrule
  {
      [\Qr,\QV,\meas{r}]\ppredto{\measr_r}{p_c}
     {[\mis{r}{c}(\Qr),\QV-\{r\},!c]}
  }
  {} 
 \qquad(c\in\{0,1\} \mbox{ and } p_{c}=\langle\Qr|{\mismin{r}{c}}^\dag\mismin{r}{c}|\Qr\rangle\in\mathbb{R}_{[0,1]})\qquad
  \\[2ex]
 \zrule
  {
      [\Qr,\QV,\new{c}]\ppredto{\nw}{1}
      [\Qr\otimes\dr{r\mapsto c},\QV\cup\{r\}, r ]
  } 
  {} 
 \qquad(r \mbox{ is fresh})\qquad
\\[2ex]
 \zrule
  {
      [\Qr,\QV,L((\lambda \pi.M)N)]\ppredto {\lcom}{1}
      [\Qr,\QV,(\lambda\pi.LM)N]
  } 
  {}
  \\[2ex]
  \zrule
  {
      [\Qr,\QV,((\lambda \pi.M)N)L]\ppredto{\rcom}{1}
      [\Qr,\QV,(\lambda\pi.ML)N]
  } 
  {}
\\[2ex]
 \urule
 {
   {[\Qr,\QV,M]\ppredto{\alpha}{p}[\Rr,\RV,N]}
 }
 {
    \vseq{
   \qquad [\Qr,\QV,,<M_1,\ldots,M,\ldots,M_k>]\ppredto{\alpha}{p}
  {[\Rr,\RV,<M_1,\ldots,N,\ldots,M_k>]}
   \qquad
    } 
   }%
   {\tupla{i}}
   \\[4ex]
  \mbox{}\\

   \urule
  {
       [\Qr,\QV,N]\ppredto{\alpha}{p}[\Rr,\RV,P]
   } 
   {[\Qr,\QV,MN]\ppredto{\alpha}{p}[\Rr,\RV,MP]}
   {\rapp} \\
  \mbox{}\\
   \urule{
     [\Qr,\QV,M]\ppredto{\alpha}{p}[\Rr,\RV,P]
   } 
   {[\Qr,\QV,MN]\ppredto{\alpha}{p}[\Rr,\RV,PN]}
   {\lapp}
   \\[4ex]
     \mbox{}\\
  \urule{
     [\Qr,\QV,M]\ppredto{\alpha}{p}[\Rr,\RV,N]
   } 
   {[\Qr,\QV,\new{M}]\ppredto{\alpha}{p}[\Rr,\RV,\new{N}]}
   {\innew}
 \\[4ex]
  \urule{
     [\Qr,\QV,M]\ppredto{\alpha}{p}[\Rr,\RV,N]
   } 
   {[\Qr,\QV,\meas{M}]\ppredto{\alpha}{p}[\Rr,\RV,\meas{N}]}
   {\mathsf{in.meas}}
 \\[4ex]
  \urule{
     [\Qr,\QV,M]\ppredto{\alpha}{p}[\Rr,\RV,N]
   } 
   {[\Qr,\QV,\ifte{M}{L}{P}]\ppredto{\alpha}{p}[\Rr,\RV,\ifte{N}{L}{P}]}
   {\mathsf{in.if}}
 \\[4ex]
   \urule{[\Qr,\QV,M]\ppredto{\alpha}{p}[\Rr,\RV,N]}
   {[\Qr,\QV,(\lambda !x . M)]\ppredto{\alpha}{p}[\Rr,\RV,(\lambda !x.N)]}
   {\inlambda_1} 
   \quad
   \hfill
   \urule{[\Qr,\QV,M]\ppredto{\alpha}{p}[\Rr,\RV,N]}
   {[\Qr,\QV,(\lambda \pi . M)]\ppredto{\alpha}{p}[\Rr,\RV,(\lambda\pi.N)]}
   {\inlambda_2} 
\end{array}
$$}
\dlinea
\caption{Contractions.}
\label{fig:contractions}
\end{figure*}
In order to be consistent with the so-called non-cloning and non-erasing properties, 
we adopt surface reduction~\cite{Simpson05,DLMZmscs08}: reduction
is not allowed in the scope of any $!$ operator.
Furthermore, as usual, we also forbid reduction in $N$ and $P$
in the term $\ifte{M}{N}{P}$. Observe that contractions include
two commutative rules $\lcom$ and $\rcom$ (see Figure~\ref{fig:contractions}): they
come from \qcalc, where they were essential to get \emph{quantum standardization}~\cite{DLMZmscs08}.

We distinguish three particular subsets of $\redrul$, namely
$\commrul=\{\lcom,\rcom \}$,
$\noncommrul = \redrul-(\commrul\cup\{\measr_r\})$
and $\nonmeasrul=\redrul-\{\measr_r\}$.
In the following, we write $M\rightarrow_\alpha N$ meaning that
there are $\Qr$, $\QV$, $\Rr$ and $\RV$ such that
$[\Qr,\QV,M]\predto{\alpha}[\Rr,\RV,N]$.
Similarly for the notation $M\predto{\mathscr{S}}N$ where
$\mathscr{S}$ is a subset of $\redrul$.
} 
{ 
\section{A Brief Introduction to \qstar}\label{sec:QStarWfr}

In~\cite{DLMZmscs08} we have introduced a measurement--free, untyped quantum
$\lambda$--calculus, called \qcalc, based on the 
\textit{quantum data and classical control} paradigm (see
e.g.~\cite{Sel04c,SelVal06}). In this paper we generalize \qcalc\ by extending the class of terms
with a measurement operator, obtaining \qstar. Space limitations prevent
us from being exhaustive, and when needed, we will make reference to our paper~\cite{DLMZmscs08}
and to the literature.

\qstar\ is based on the
notion of a \textit{configuration}, namely a triple $[\Qr, \QV, M]$ where $\Qr$
is a \textit{quantum register}\footnote{the ``empty'' quantum register will be denoted with the scalar number $1$.}, 
$\QV$ is a finite set of names, called
\textit{quantum variables}, and $M$ is an untyped \textit{term} based on
the linear lambda-calculus defined by Wadler~\cite{Wad94} and Simpson~\cite{Simpson05}.
$\conf$ denotes the set of all such configurations.

Quantum registers are systems of $n$ qubits, that, mathematically
speaking, are normalized vectors of finite dimensional Hilbert
spaces. In particular, a quantum register $\Qr$ of a configuration
$[\Qr, \QV, M]$, is a normalized vector of the Hilbert space
$\ell^2(\{0,1\}^\QV)$, denoted here with 
$\Hs{\QV}.$\footnote{see ~\cite{DLMZmscs08} for a full discussion of $\Hs{\QV}$
  and ~\cite{RomanBook} for a general treatment of $\ell^2(S)$ spaces.}
Roughly speaking, the reader not familiar with Hilbert spaces could
think that  quantum variables are pointers to qubits in the
quantum register.

There are three kinds of operations on quantum registers: \textit{(i)}
the \textit{new} operation, responsible of the creation of qubits;
\textit{(ii)} \textit{unitary operators}: each unitary operator
$\mathbf{U}_{<< q_1,\ldots,q_n>>}$ corresponds to a pure quantum
operation acting on qubits with names $q_1,\ldots,q_n$
(mathematically, a unitary transform on the Hilbert space
$\Hs{\{q_1,\ldots,q_n\}}$, see~\cite{DLMZmscs08}); \textit{(iii)}
\textit{one qubit measurement} operations $\mis{r}{0}, \mis{r}{1}$
responsible of the probabilistic reduction of the quantum state plus
the destruction of the qubit referenced by $r$: given a
quantum register $\Qr\in\Hs{\QV}$, and a quantum variable name $r\in
\QV$, we allow the (destructive) measurement of the qubit with name
$r$.\condinc{}
 { \footnote{More precisely, for every quantum variable $r$ we assume
  the existence of two linear measurement operators, $\mis{r}{0}, \mis{r}{1} :
  \Hs{\QV} \to \Hs{\QV-\{r\}}$ enjoying the completeness condition
  ${\mis{r}{0}}^\dag\mis{r}{0} + {\mis{r}{1}}^\dag \mis{r}{1}=
  \mathit{id}_{\Hs{\QV}}$ and such that, given a quantum register
  $\Qr\in\Hs{\QV}$, the measurement of the qubit with name $r$ in
  $\Qr$ gives the outcome $c$ (with $c\in\{0,1\}$) with probability
  $p_c=\langle\Qr|{\mis{r}{c}}^\dag\mis{r}{c}|\Qr\rangle$ and produces
  the new quantum register $\frac{\mis{r}{c}\Qr}{\sqrt p_c}$;
  see~\cite{NieCh00,KLM07} for a detailed discussion of general pure
  measurements.}
}

The other main component of a configuration is a
\textit{term}. The set of terms is built from \textit{(i)} a denumerable set of
\textit{classical variables}, ranged over by $x, x_0, \ldots$; \textit{(ii)} a
denumerable set of \textit{quantum variables}, ranged over by $r,
r_0,\ldots$; \textit{(iii)} a finite or at most denumerable set of names corresponding to
\textit{unitary operators}; \textit{(iv)} the \textit{boolean constants}
$0, 1$ and \textit{(v)} the operators \texttt{new} and \texttt{meas}. An \textit{environment} 
$\Gamma$ is a (possibly
empty) finite set in the form $\Lambda,!\Delta$, where $\Lambda$ is a
(possibly empty) set of classical and quantum variables, and $!\Delta$
denote a (possibly empty) set of patterns $!x_1,\ldots,!x_n$. We
impose that in an environment, each classical variable $x$ occurs at
most once (either as $!x$ or as $x$).
A \emph{judgement} is an expression $\Gamma\vdash M$, where $\Gamma$
  is an environment and $M$ is a term. We say that a judgement is \textit{well-formed} if 
it is derivable by means of the \textit{well-forming rules} in Figure~\ref{fig:wfr}. 

\condinc{%
\subsection{Measurement}
For each quantum variable $r$ we assume
  to have two linear measurement operators, $\mis{r}{0}, \mis{r}{1} :
  \Hs{\QV} \to \Hs{\QV-\{r\}}$ enjoying the completeness condition
  ${\mis{r}{0}}^\dag\mis{r}{0} + {\mis{r}{1}}^\dag \mis{r}{1}=
  Id_{\Hs{\QV}}$ and such that, given a quantum register
  $\Qr\in\Hs{\QV}$, the measurement of the qubit with name $r$ in
  $\Qr$ gives the outcome $c$ (with $c\in\{0,1\})$ with probability
  $p_c=\langle\Qr|{\mis{r}{c}}^\dag\mis{r}{c}|\Qr\rangle$ and produces
  the new quantum register $\frac{\mis{r}{c}\Qr}{\sqrt p_c}$;
  see~\cite{NieCh00,KLM07} for a detailed discussion of general pure
  measurements
}%
{}

\begin{figure*}[!htb]
\dlinea
{\footnotesize
$$
\begin{array}{r@{\qquad}c@{\qquad}c@{\qquad}l} 
  \urule{}{!\Delta\vdash C}{\mbox{\textsf{const}}} &
  \urule{}{!\Delta, r\vdash r}{\mbox{\textsf{qvar}}} & 
  \urule{}{!\Delta,  x \vdash x}{\mbox{\textsf{cvar}}} &
  \urule{}{!\Delta ,!x\vdash x}{\mbox{\textsf{der}}} 
\end{array}
$$
$$
\begin{array}{r@{\qquad}c@{\qquad}l}
  \urule{!\Delta\vdash M}{!\Delta\vdash !M}{\mbox{\textsf{prom}}} &
  \brule{\Lambda_1,!\Delta\vdash M}{\Lambda_2, !\Delta\vdash N}
    {\Lambda_1,\Lambda_2, !\Delta\vdash MN}{\mbox{\textsf{app}}} &
  \urule{\Lambda_1,!\Delta\vdash M_{1} \cdots \Lambda_k,!\Delta\vdash M_{k}}
  {\Lambda_1,\ldots,\Lambda_k, !\Delta \vdash <M_{1},\ldots, M_{k}>}
  {\mbox{\textsf{tens}}}
\end{array}
$$
$$
\begin{array}{r@{\qquad}r@{\qquad}r@{\qquad}l} 
  \urule{\Gamma\vdash M}{\Gamma\vdash \new{M}}{\mathtt{\mbox{\textsf{new}}}} &
  \urule{\Gamma ,x_1,\ldots,x_n\vdash M}{\Gamma \vdash \lambda <x_1,\ldots,x_n>.M}{\textsf{lam}_1} &
  \urule{\Gamma ,x\vdash M}{\Gamma \vdash \lambda  x.M}{\textsf{lam}_2} &
  \urule{\Gamma ,!x\vdash M}{\Gamma \vdash \lambda ! x.M}{\textsf{lam}_3} 
\end{array}
$$
$$
\begin{array}{r@{\qquad}r@{\qquad}r@{\qquad}l} 
\urule{\Gamma \vdash M}{\Gamma \vdash \meas{M}}{\textsf{ meas}} &
  \trule{\Lambda\vdash N}{!\Delta\vdash M_1}{!\Delta\vdash M_2}{\Lambda,!\Delta \vdash \ifte{N}{M_1}{M_2}}{\textsf{ if}} 
\end{array}
$$}
\dlinea
\caption{Well--Forming Rules}\label{fig:wfr}
\end{figure*}

Let $\redrul=\{\Uq,\nw,\lbeta,\qbeta,\cbeta,\lcom,\rcom, \ifl, \ifr, \measr_r\}$.
For every $\alpha\in\redrul$ and for every $p\in\mathbb{R}_{[0,1]}$,
we define a relation $\ppredto{\alpha}{p}\subseteq\conf\times\conf$
by the set of rewriting rules \textit{contractions} in Figure ~\ref{fig:reduction}, plus
standard closure rules.   
The notation $\confone\predto{\alpha}\conftwo$ stands for
$\confone\ppredto{\alpha}{1}{\conftwo}$.
\begin{figure*}[!htb]
\dlinea
{\footnotesize
$$
\begin{array}{c}%
\zrule
  { 
    [\Qr,\QV,(\lambda x.M)N]\ppredto{\lbeta}{1}
    [\Qr,\QV,M\{N/x\}]
  }
  {} 
\quad
\zrule
  { 
    [\Qr,\QV,(\lambda!
    x.M)!N]\ppredto{\cbeta}{1}[\Qr,\QV, M\{N/x\}] 
  } 
  {}
  \\[2ex]
\zrule
  {
      [\Qr,\QV,(\lambda<x_1,\ldots,x_n>.M)<r_1,\ldots,r_n>]\ppredto{\qbeta}{1}
      [\Qr,\QV,M\{r_1/x_1,\ldots,r_n/x_n\}]
  } 
  {}
 \\[2ex]
\zrule
  {
      [\Qr,\QV,\ifte{1}{M_1}{M_2}]\ppredto{\ifl}{1}
      [\Qr,\QV,M_1]
  } 
  {}
  \\[2ex]
\zrule
  {
      [\Qr,\QV,\ifte{0}{M_1}{M_2}]\ppredto{\ifl}{1}
      [\Qr,\QV,M_2]
  } 
  {}

\\[2ex]
\zrule
  {
      [\Qr,\QV,U<r_{i_1},...,r_{i_n}>]\ppredto{\Uq}{1}
     {[\mathbf{U}_{<< r_{i_1},\ldots,r_{i_n}>>}Q,\QV,<r_{i_1},...,r_{i_n}>]}
  }
  {} 
  \\[2ex]
\zrule
  {
      [\Qr,\QV,\meas{r}]\ppredto{p_c}{\measr_r}
     {[\mis{r}{c}(\Qr),\QV-\{r\},!c]}
  }
  {} 
 \qquad(c\in\{0,1\} \mbox{ and } p_{c}\in\mathbb{R}_{[0,1]})\qquad
  \\[2ex]
 \zrule
  {
      [\Qr,\QV,\new{c}]\ppredto{\nw}{1}
      [\Qr\otimes\dr{r\mapsto c},\QV\cup\{r\}, r ]
  } 
  {} 
 \qquad(r \mbox{ is fresh})\qquad
\\[2ex]
 \zrule
  {
      [\Qr,\QV,L((\lambda \pi.M)N)]\ppredto{\lcom}{1}
      [\Qr,\QV,(\lambda\pi.LM)N]
  } 
  {}
  \\[2ex]
  \zrule
  {
      [\Qr,\QV,((\lambda \pi.M)N)L]\ppredto{\rcom}{1}
      [\Qr,\QV,(\lambda\pi.ML)N]
  } 
  {}
\end{array}
$$}
\dlinea
\caption{Contractions.}
\label{fig:reduction}
\end{figure*}
In order to be consistent with the so-called non-cloning and non-erasing properties, 
we adopt surface reduction~\cite{Simpson05,DLMZmscs08}: reduction
is not allowed in the scope of any $!$ operator.
Furthermore, as usual, we also forbid reduction in $N$ and $P$
in the term $\ifte{M}{N}{P}$. Observe that contractions include
two commutative rules $\lcom$ and $\rcom$ (see Figure~\ref{fig:reduction}): they
come from \qcalc, where they were essential to get \emph{quantum standardization}~\cite{DLMZmscs08}.
We distinguish three particular subsets of $\redrul$, namely
$\commrul=\{\lcom,\rcom \}$,
$\noncommrul = \redrul-(\commrul\cup\{\measr_r\})$
and $\nonmeasrul=\redrul-\{\measr_r\}$.
In the following, we write $M\rightarrow_\alpha N$ meaning that
there are $\Qr$, $\QV$, $\Rr$ and $\RV$ such that
$[\Qr,\QV,M]\predto{\alpha}[\Rr,\RV,N]$.
Similarly for the notation $M\predto{\mathscr{S}}N$ where
$\mathscr{S}$ is a subset of $\redrul$.
}
\section{The Confluence Problem: an Informal Introduction}\label{sec:ConfPb}
The confluence problem is central for any quantum $\lambda$-calculus 
with measurements, as stressed in the introduction.

Let us consider the following configuration: 
$$\confone= [\unoqr ,\emptyset,(\lambda !x. (\ifte{x}{0}{1}))(\meas{H(\nw(0))})].$$
If we focus on reduction sequences, it is easy to check that there are two different
reduction sequences starting with $\confone$, the first ending in the normal form
$[\unoqr ,\emptyset,0]$ (with probability $1/2$) and the second in the normal form
$[\unoqr ,\emptyset,1]$ (with probability $1/2$).
But if we reason with mixed states, the situation changes:
the mixed state $\{1:\confone\}$ (i.e., the mixed state assigning
probability $1$ to $\confone$ and $0$ to any other configuration)
rewrites \emph{deterministically} to
$\{1/2:[\unoqr,\emptyset,0], 1/2:[\unoqr ,\emptyset,1]\}$ (where both
$[\unoqr ,\emptyset,0]$ and 
$[\unoqr ,\emptyset,1]$ have probability $1/2$). So, confluence seems
to hold.

\paragraph{Confluence in Other Quantum Calculi.}
\noindent
Contrarily to the measurement-free case, the above notion of
confluence is \emph{not} an expected result for a quantum lambda calculus.
Indeed, it does not hold in the quantum lambda calculus $\lambda_{\mathit{sv}}$
 proposed by Selinger and Valiron~\cite{SelVal06}. In $\lambda_{\mathit{sv}}$,
it is possible to exhibit a configuration $\confone$ that
gives as outcome the distribution 
$\{1: [\unoqr ,\emptyset, 0]\}$ when reduced call-by-value and the
distribution $\{1/2:[\unoqr,\emptyset,0], 1/2:[\unoqr ,\emptyset,1]\}$
if reduced call-by-name. This is a \emph{real} failure of confluence, which
is there even if one uses probability distributions in place of configurations.
The same phenomenon cannot happen in \qstar\ (as we will show in Section~\ref{sec:StrongConfl}):
this fundamental difference can be traced back to another one: the linear
lambda calculus with surface reduction (on which \qstar\ is based)
enjoys (a slight variation on) the so-called diamond property~\cite{Simpson05}, while in usual, pure,
lambda calculus (on which $\lambda_{\mathit{sv}}$ is based) confluence
only holds in a weaker sense.

\paragraph{Finite or infinite rewriting?}
\noindent
In \qstar, an infinite computation can tend to a configuration which is essentially
different from the configurations in the computation itself.
For example, a configuration $\confone=[\unoqr ,\emptyset,M]$  can be built\footnote{%
$M\equiv (\mathsf{Y}!(\lambda!f.\lambda!x\ifte{x}{0}{f(\meas{H(\new{0})})}))(\meas{H(\new{0})})$,
where $\mathsf{Y}$ is a fix point operator.
}  
such that:
\begin{varitemize}
\item
  after a finite number of reduction steps $\confone$ rewrites to a distribution in the form
  $\{\sum_{1\lt i\leq n}\frac{1}{2^i}:[\unoqr ,\emptyset,0], 1-\sum_{1\lt i\leq n}\frac{1}{2^i}: \conftwo\}$
\item
  only after infinitely many reduction steps the distribution $\{1:[\unoqr ,\emptyset, 0]\}$
  is reached.
\end{varitemize}
Therefore finite probability distributions of finite configurations
could be obtained by means of infinite rewriting. 
We believe that the study of confluence for infinite computations is important.

\paragraph{Related Work.}
\noindent
In the literature, probabilistic rewriting systems have
been already analyzed.
For example, Bournez and Kirchner~\cite{BourKirchnerRTA02} have introduced the notion
of a probabilistic abstract rewriting system as a
structure $A=(|A|,[\cdot\leadsto\cdot])$ where $|A|$ is a set and
$[\cdot\leadsto\cdot]$ is a function from $|A|$ to $\mathbb{R}$ such that 
for every $a\in |A|$, $\sum_{b\in |A|}[a\leadsto b]$ is either $0$ or $1$. Then, they define
a notion of \textit{probabilistic confluence} for a PARS:
such a structure is probabilistically locally confluent iff the probability to be  
locally confluent, in a classical sense, is different from $0$.
Unfortunately, Bournez and Kirchner's analysis does not apply
to \qstar, since \qstar\ is \emph{not} a PARS. Indeed, the quantity 
$\sum_{b\in |A|}[a\leadsto b]$ can in general be any natural number. 
Similar considerations hold for the probabilistic lambda
calculus introduced by Di Pierro, Hankin and
Wiklicky in \cite{DiPHanWi05}.
\section{A Probabilistic Notion of Computation}\label{sec:Pcomp}
We represent computations as (possibly) infinite trees.
In the following, a (possibly) infinite tree $T$ will be an $(n+1)$-tuple $[R,T_1,\ldots, T_n]$, where $n\geq 0$, 
$R$ is the \emph{root} of $T$ and $T_1,\ldots,T_n$ are its \emph{immediate subtrees}.
\begin{definition}\label{def:probcomp}
  A set of (possibly) infinite trees $\mathscr{S}$ is said to be a \emph{set of probabilistic computations} if
  $\pcompone\in\mathscr{S}$ iff (exactly) one of the following three conditions holds:
  \begin{varenumerate}
  \item\label{clause:pcfc}
    $\pcompone=[\confone]$ and $\confone\in\conf$.
  \item
    $\pcompone=[\confone,\pcomptwo]$, where
    $\confone\in\conf$, $\pcomptwo\in\mathscr{S}$ has root $\conftwo$ and 
    $\confone\redto_{\nonmeasrul} \conftwo$
  \item
    $\pcompone=[(p,q,\confone),\pcomptwo,\pcompthree]$, where
    $\confone\in\conf$, $\pcomptwo,\pcompthree\in\mathscr{S}$ have
    roots $\conftwo$ and  $\confthree$, 
    $\confone\ppredto{meas_r}{p}\conftwo$,
    $\confone\ppredto{meas_r}{q}\confthree$ and $p,q\in\mathbb{R}_{[0,1]}$;
  \end{varenumerate}
  The set of all (respectively, the set of finite) probabilistic computations is the largest set $\mathscr{P}$
  (respectively, the smallest set $\mathscr{F}$) of probabilistic computations with respect to set inclusion.
  $\mathscr{P}$ and $\mathscr{F}$ exist because of the Knapster-Tarski Theorem.
\end{definition}
We will often say that the root of $\pcompone=[(p,q,\confone),\pcomptwo,\pcompthree]$
is simply $\confone$, slightly diverging from the above definition without any danger
of ambiguity.
\begin{definition}
  A probabilistic computation $\pcompone$ is \textit{maximal} if for every leaf $\confone$ in $\pcompone$, 
  $\confone\in\NF$. More formally, (sets of) maximal probabilistic computations can be defined as in
  Definition~\ref{def:probcomp}, where clause~\ref{clause:pcfc} must be restricted to
  $\confone\in\NF$.
\end{definition}

We can give definitions and proofs over \emph{finite} probabilistic computations
(i.e., over $\mathscr{F}$)
by ordinary induction. An example is the following
definition. Notice that the same is not true for arbitrary probabilistic definitions,
since $\mathscr{P}$ is not a well-founded set.

\begin{definition}
  Let $\pcompone\in\mathscr{P}$ be a probabilistic computation. 
  A finite probabilistic computation $\pcomptwo\in\mathscr{F}$ is a \emph{sub-computation}
  of $\pcompone$, written $\pcomptwo\sqsubseteq\pcompone$ iff
  one of the following conditions is satisfied:
  \begin{varitemize}
    \item
      $\pcomptwo=[\confone]$ and the root of $\pcompone$ 
      is $\confone$.
    \item
      $\pcomptwo=[\confone,\pcompthree]$,
      $\pcompone=[\confone,\pcompfour]$, and
      $\pcompthree\sqsubseteq\pcompfour$.
    \item
      $\pcomptwo=[(p,q,\confone),\pcompthree,\pcompfour]$,
      $\pcompone=[(p,q,\confone),\pcompsix,\pcompseven]$,
      $\pcompthree\sqsubseteq\pcompsix$ and
      $\pcompfour\sqsubseteq\pcompseven$.      
    \end{varitemize}  
\end{definition}

Let $\delta:\conf \to \{0,1\}$ be a function defined as follows: $\delta(\confone) = 0$ 
if the quantum register of $\confone$ is $0$, otherwise, $\delta(\confone) = 1$.

\paragraph{Quantitative Properties of Computations.}
The outcomes of a probabilistic computation $\pcompone$ are given 
by the configurations which appear as leaves of $\pcompone$. 
Starting from this observation, the following definitions formalize
some quantitative properties of probabilistic computations.
\condinc{%
For every \emph{finite} probabilistic computation $\pcompone$ and every $\confone\in\NF$
we define $\probmc{\pcompone}{\confone}\in\mathbb{R}_{[0,1]}$ 
by induction on the structure of $\pcompone$:
\begin{varitemize}
\item
  $\probmc{[\confone]}{\confone}= \delta(\confone)$;
\item 
  $\probmc{[\confone]}{\conftwo}=0$ whenever $\confone\neq\conftwo$;
\item
  $\probmc{[\confone,\pcompone]}{\conftwo}=\probmc{\pcompone}{\conftwo}$;
\item $\probmc{[(p,q,\confone),\pcompone,\pcomptwo]}{\conftwo}=p\probmc{\pcompone}{\conftwo}+q\probmc{\pcomptwo}{\conftwo}$;
\end{varitemize}
Similarly for $\numlv{\pcompone}{\confone}\leq\aleph_0$:
\begin{varitemize}
\item
  $\numlv{[\confone]}{\confone}=1$;
\item 
$\numlv{[\confone]}{\conftwo}=0$ whenever $\confone\neq\conftwo$;
\item
  $\numlv{[\confone,\pcompone]}{\conftwo}=\numlv{\pcompone}{\conftwo}$;. 
\item 
$\numlv{[(p,q,\confone),\pcompone,\pcomptwo]}{\conftwo}=\numlv{\pcompone}{\conftwo}+\numlv{\pcomptwo}{\conftwo}$.
\end{varitemize}
Informally, $\probmc{\pcompone}{\confone}$ is the probability of observing $\confone$ as a leaf
in $\pcompone$, and $\numlv{\pcompone}{\confone}$ is the number of times  $\confone$ appears
as a leaf in $\pcompone$.

The definitions above can be easily modified to get the probability of observing \emph{any} configuration
(in normal form) as a leaf in $\pcompone$, $\probmcany{\pcompone}$, or the number of times \emph{any} configuration appears
as a leaf in $\pcompone$, $\numlvany{\pcompone}$.
Since $\mathbb{R}_{[0,1]}$ and $\NN\cup\{\aleph_0\}$ are complete 
lattices (with respect to standard orderings), we extend the above notions to the case of \textit{arbitrary} 
probabilistic computations, by taking the least upper bound over all finite sub-computations.
If $\pcompone\in\mathscr{P}$ and $\confone\in\NF$, then:
\begin{varitemize}
\item $\probmc{\pcompone}{\confone}=\sup_{\pcomptwo\sqsubseteq\pcompone}\probmc{\pcomptwo}{\confone}$;
\item $\numlv{\pcompone}{\confone}=\sup_{\pcomptwo\sqsubseteq\pcompone}\numlv{\pcomptwo}{\confone}$;
\item $\probmcany{\pcompone}=\sup_{\pcomptwo\sqsubseteq\pcompone}\probmcany{\pcomptwo}$;
\item $\numlvany{\pcompone}=\sup_{\pcomptwo\sqsubseteq\pcompone}\numlvany{\pcomptwo}$.
\end{varitemize}
}%
{%
For every \emph{finite} probabilistic computation $\pcompone$ and every $\confone\in\NF$
we define $\probmc{\pcompone}{\confone}\in\mathbb{R}_{[0,1]}$ and 
$\numlv{\pcompone}{\confone}\leq \aleph_0$ by induction on the structure of $\pcompone$:
\begin{varitemize}
\item
  $\probmc{[\confone]}{\confone}=\numlv{[\confone]}{\confone}=1$ and
  $\probmc{[\confone]}{\conftwo}=\numlv{[\confone]}{\conftwo}=0$ whenever $\confone\neq\conftwo$. 
\item
  $\probmc{[\confone,\pcompone]}{\conftwo}=\probmc{\pcompone}{\conftwo}$ and
  $\numlv{[\confone,\pcompone]}{\conftwo}=\numlv{\pcompone}{\conftwo}$. 
\item
  $\probmc{[(p,\confone),\pcompone,\pcomptwo]}{\conftwo}=p\probmc{\pcompone}{\conftwo}+q\probmc{\pcomptwo}{\conftwo}$ and
  $\numlv{[(p,\confone),\pcompone,\pcomptwo]}{\conftwo}=\numlv{\pcompone}{\conftwo}+\numlv{\pcomptwo}{\conftwo}$.
\end{varitemize}
More informally, $\probmc{\pcompone}{\confone}$ is the probability of observing $\confone$ as a leaf
in $\pcompone$. On the other hand, $\numlv{\pcompone}{\confone}$ is the number of times $\confone$ appears
as a leaf in $\pcompone$.
The definitions above can be easily modified to get the probability of observing \emph{any} configuration
as a leaf in $\pcompone$, $\probmcany{\pcompone}$, or the number of times \emph{any} configuration appears
as a leaf in $\pcompone$, $\numlvany{\pcompone}$. 

In turn, the functions $\probmcf$ and $\numlvf$ on finite probabilistic computations above can be generalized to functions
on arbitrary probabilistic computations by taking the least upper bound over all finite sub-computations.
For example, if $\pcompone\in\mathscr{P}$ and $\confone\in\NF$, then
$$
\probmc{\pcompone}{\confone}=\sup_{\pcomptwo\sqsubseteq\pcompone}\probmc{\pcomptwo}{\confone}.
$$
Analogously,
$$
\numlvany{\pcompone}=\sup_{\pcomptwo\sqsubseteq\pcompone}\numlvany{\pcomptwo}.
$$
Both quantities above exists because $\mathbb{R}_{[0,1]}$ and $\NN\cup\{\aleph_0\}$ are complete 
lattices.
}
\condinc{
The following lemmas involve finite computations and can be prove by induction.
\begin{lemma}\label{lemma:monotone}
If $\pcompone\sqsubseteq\pcomptwo$, then
$\probmcany{\pcompone}\leq\probmcany{\pcomptwo}$
and $\numlvany{\pcompone}\leq\numlvany{\pcomptwo}$. Moreover,
$\probmc{\pcompone}{\confone}\leq\probmc{\pcomptwo}{\confone}$
and $\numlv{\pcompone}{\confone}\leq\numlv{\pcomptwo}{\confone}$
for every $\confone\in\NF$.
\end{lemma}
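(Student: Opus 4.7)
The plan is to prove all four monotonicity inequalities simultaneously by induction on the structure of the finite probabilistic computation $\pcompone$, following the three clauses of the definition of $\sqsubseteq$.

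In the base case $\pcompone = [\confone]$, the root of $\pcomptwo$ must also be $\confone$, so $\pcomptwo$ has one of the three forms allowed by Definition~\ref{def:probcomp}. Both pointwise quantities $\probmc{\pcompone}{\conftwo}$ and $\numlv{\pcompone}{\conftwo}$ (for $\conftwo\in\NF$) vanish whenever $\conftwo \neq \confone$, so the bounds are trivial there. For $\conftwo = \confone$ the key observation is that if $\confone \in \NF$ then clauses~2 and~3 of Definition~\ref{def:probcomp} are inapplicable (there is no outgoing reduction or measurement from a normal form), forcing $\pcomptwo = [\confone]$ and turning the inequalities into equalities. For the aggregate quantities $\probmcany{\pcompone}$ and $\numlvany{\pcompone}$, a direct unfolding of the recursive definition on $\pcomptwo$ shows that they are at least as large as those of the singleton tree $[\confone]$, since every leaf contribution is nonnegative.

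In the first inductive case $\pcompone = [\confone,\pcompthree]$, the definition of $\sqsubseteq$ forces $\pcomptwo = [\confone,\pcompfour]$ with $\pcompthree \sqsubseteq \pcompfour$. By the recursion defining $\probmcf$ and $\numlvf$, each of the four quantities at $\pcompone$ coincides with the corresponding quantity at $\pcompthree$, and similarly for $\pcomptwo$ and $\pcompfour$, so the inequalities follow immediately from the induction hypothesis applied to $(\pcompthree,\pcompfour)$.

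In the second inductive case $\pcompone = [(p,q,\confone),\pcompthree,\pcompfour]$, we have $\pcomptwo = [(p,q,\confone),\pcompsix,\pcompseven]$ with $\pcompthree \sqsubseteq \pcompsix$ and $\pcompfour \sqsubseteq \pcompseven$. Applying the induction hypothesis to each pair and using $p,q \in \mathbb{R}_{[0,1]}$ (hence nonnegative), for example $\probmc{\pcompone}{\conftwo} = p \cdot \probmc{\pcompthree}{\conftwo} + q \cdot \probmc{\pcompfour}{\conftwo} \leq p \cdot \probmc{\pcompsix}{\conftwo} + q \cdot \probmc{\pcompseven}{\conftwo} = \probmc{\pcomptwo}{\conftwo}$; the remaining three inequalities are proved identically, with ordinary addition playing the role of the convex combination in the $\numlvf$-cases. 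I do not expect any real obstacle; the only subtlety worth flagging is the base-case observation that a normal-form configuration cannot serve as the root of a non-trivial probabilistic computation, which is immediate from Definition~\ref{def:probcomp}.
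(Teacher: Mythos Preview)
Your proposal is correct and follows exactly the approach the paper indicates: the paper's proof is literally ``A trivial induction on $\pcompone$,'' and you have spelled out the three cases of that induction accurately, including the key base-case observation that a configuration in $\NF$ cannot be the root of a non-trivial computation. The only minor simplification is that in the base case for the aggregate quantities you need not unfold $\pcomptwo$ at all: if $\confone\notin\NF$ then $\probmcany{[\confone]}=\numlvany{[\confone]}=0$ and the inequality is immediate, while if $\confone\in\NF$ then $\pcomptwo=[\confone]$ as you already noted.
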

\begin{proof}
A trivial induction on $\pcompone$.
\end{proof}
\begin{lemma}\label{lemma:maximal}
If $\pcompone\sqsubseteq\pcomptwo$ and $\pcompone$ is
maximal, then $\pcomptwo$ is maximal.
\end{lemma}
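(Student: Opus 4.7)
The plan is to prove this by induction on the structure of the finite probabilistic computation $\pcompone\in\mathscr{F}$, mirroring the three clauses of Definition~\ref{def:probcomp} restricted to the finite case.

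For the base case $\pcompone = [\confone]$, maximality says $\confone\in\NF$. Since $\pcompone\sqsubseteq\pcomptwo$, I would argue that $\pcomptwo$ must also equal $[\confone]$. Indeed, by the definition of $\sqsubseteq$, the root of $\pcomptwo$ is $\confone$. If $\pcomptwo$ had the form $[\confone,\pcompfour]$ or $[(p,q,\confone),\pcompsix,\pcompseven]$, then Definition~\ref{def:probcomp} would require an outgoing reduction from $\confone$ (either of type $\nonmeasrul$ or of type $\measr_r$), contradicting $\confone\in\NF$. Hence $\pcomptwo=[\confone]$, whose only leaf $\confone$ is in $\NF$, so $\pcomptwo$ is maximal.

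For the inductive cases, suppose $\pcompone = [\confone,\pcompthree]$ with $\pcompthree\in\mathscr{F}$. Since the root $\confone$ of $\pcompone$ is not a leaf, the leaves of $\pcompone$ coincide with those of $\pcompthree$, so $\pcompthree$ is maximal. From $\pcompone\sqsubseteq\pcomptwo$ the second clause of the definition of $\sqsubseteq$ forces $\pcomptwo=[\confone,\pcompfour]$ with $\pcompthree\sqsubseteq\pcompfour$, and the induction hypothesis yields that $\pcompfour$ is maximal; since the leaves of $\pcomptwo$ are exactly those of $\pcompfour$, $\pcomptwo$ is maximal. The case $\pcompone=[(p,q,\confone),\pcompthree,\pcompfour]$ is entirely analogous: $\pcompthree$ and $\pcompfour$ are both maximal, $\pcomptwo$ has the matching shape $[(p,q,\confone),\pcompsix,\pcompseven]$ with $\pcompthree\sqsubseteq\pcompsix$ and $\pcompfour\sqsubseteq\pcompseven$, and the induction hypothesis applied twice gives maximality of $\pcomptwo$.

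The only non-routine point is the base case, where one must exploit the fact that the inductive clauses of Definition~\ref{def:probcomp} for non-leaf nodes always presuppose an applicable contraction at the root, so a normal-form root cannot be extended inside a larger probabilistic computation. Everything else is a direct structural recursion, justified by $\pcompone\in\mathscr{F}$ being well-founded.
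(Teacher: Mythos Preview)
Your proof is correct and is exactly the ``trivial induction on $\pcompone$'' that the paper indicates; you have simply spelled out the three cases and, in the base case, made explicit the observation that a configuration in $\NF$ cannot be the root of a non-leaf node of any probabilistic computation by Definition~\ref{def:probcomp}.
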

\begin{proof}
A trivial induction on $\pcompone$.
\end{proof}
}{
}
\section{A Strong Confluence Result}\label{sec:StrongConfl} 
In this Section, we will \condinc{prove}{give} a strong confluence result in the
following form:\textit{ any two maximal probabilistic computations
  $\pcompone$ and $\pcomptwo$ with the same root have exactly the same
  quantitative and qualitative behaviour}, that is to say, the
following equations hold for every $\confone\in\NF$:
\condinc{\begin{eqnarray*}
\probmc{\pcompone}{\confone}&=&\probmc{\pcomptwo}{\confone};\\
\numlv{\pcompone}{\confone}&=&\numlv{\pcomptwo}{\confone};\\
\probmcany{\pcompone}&=&\probmcany{\pcomptwo};\\
\numlvany{\pcompone}&=&\numlvany{\pcomptwo}.
\end{eqnarray*}}
{$\probmc{\pcompone}{\confone}=\probmc{\pcomptwo}{\confone}$, 
$\numlv{\pcompone}{\confone}=\numlv{\pcomptwo}{\confone}$,
$\probmcany{\pcompone}=\probmcany{\pcomptwo}$, and
$\numlvany{\pcompone}=\numlvany{\pcomptwo}$.
}
\begin{remark}
Please notice that equalities like the ones above do \emph{not} even hold for the
ordinary lambda calculus. For example, the lambda term $(\lambda x.\lambda y.y)\Omega$ is the
root of two (linear) computations, the first having one leaf $\lambda y.y$ and the
second having no leaves. This is the reason why the confluence result
we prove here is dubbed as strong.
\end{remark}
Before embarking in the proof of the equalities above, let us spend a few words to explain
their consequences. The fact $\probmc{\pcompone}{\confone}=\probmc{\pcomptwo}{\confone}$
whenever $\pcompone$ and $\pcomptwo$ have the same root can be read as a confluence
result: the probability of observing $\confone$ is independent from the adopted strategy.
On the other hand, $\probmcany{\pcompone}=\probmcany{\pcomptwo}$ means that 
the probability of converging is not affected by the underlying strategy. 
The corresponding results on $\numlv{\cdot}{\cdot}$ and $\numlvany{\cdot}$
can be read as saying that the number of (not necessarily distinct) leaves in any probabilistic
computation with root $\confone$ does not depend on the strategy.

\condinc{%
\begin{lemma}[Uniformity]\label{lemma:uniformity}
  For every $M,N$ such that $M\predto{\alpha} N$, exactly one of the
  following conditions holds:
  \begin{varenumerate}
    \item\label{firstcase}
      $\alpha\neq\nw$ and $\alpha\neq\measr_r$ and there is a unitary transformation
      $U_{M,N}:\Hs{\Qvt{M}}\rightarrow\Hs{\Qvt{M}}$ 
      such that $[\Qr,\QV,M]\predto{\alpha}[\Rr,\RV,N]$
      iff $\Ok{[\Qr,\QV,M]}$, $\RV=\QV$ and
      $\Rr=(U_{M,N}\otimes I_{\QV-\Qvt{M}})\Qr$.
    \item\label{secondcase}
      $\alpha=\nw$ and there are a constant $c$ and a
      quantum variable $r$ such that 
      $[\Qr,\QV,M]\predto{\nw}[\Rr,\RV,N]$
      iff $\Ok{[\Qr,\QV,M]}$, $\RV=\QV\cup\{r\}$ and
      $\Rr=\Qr\otimes\dr{r\mapsto c}$. 
    \item\label{thirdcase}
      $\alpha=\measr_r$ and there are a constant
      $c$ and a probability $p_c\in\mathbb{R}_{[0,1]}$ such that
      $[\Qr,\QV,M]\ppredto{\measr_r}{p_c}[\Rr,\RV,N]$
      iff $\Ok{[\Qr,\QV,M]}$, $\Rr=\mis{r}{c}(\Qr)$ and
      $\RV=\QV-\{r\}$.
  \end{varenumerate}
\end{lemma}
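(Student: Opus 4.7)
The plan is to proceed by induction on the derivation of $M \predto{\alpha} N$, where by abuse of notation $M \predto{\alpha} N$ means that there exist $\Qr, \QV, \Rr, \RV$ with $[\Qr,\QV,M] \predto{\alpha} [\Rr,\RV,N]$. The derivation has as base cases the base contractions in Figure~\ref{fig:contractions} and as inductive cases the various closure rules $\lapp$, $\rapp$, $\tupla{i}$, $\innew$, $\mathsf{in.meas}$, $\mathsf{in.if}$, $\inlambda_1$, $\inlambda_2$. In each case the task is to exhibit $U_{M,N}$, $c$ and $r$, or $c$ and $p_c$ (depending on which of the three clauses applies) and check that it satisfies the stated biconditional.

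For the base cases, a direct inspection of Figure~\ref{fig:contractions} does the job. The rules $\lbeta$, $\qbeta$, $\cbeta$, $\ifl$, $\ifr$, $\lcom$, $\rcom$ leave the quantum register untouched, so they fall into clause~\ref{firstcase} with $U_{M,N} = I_{\Hs{\Qvt{M}}}$. The rule $\Uq$ fires iff $M = U\langle r_{i_1},\ldots,r_{i_n}\rangle$, and in that case it transforms the register by the unitary $\mathbf{U}_{\langle\langle r_{i_1},\ldots,r_{i_n}\rangle\rangle}$, which depends only on $M$; this gives clause~\ref{firstcase} with $U_{M,N} = \mathbf{U}_{\langle\langle r_{i_1},\ldots,r_{i_n}\rangle\rangle}$. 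The rule $\nw$ fires only when $M = \new{c}$ for some boolean $c$, with a fresh $r$ and the register extended by $|r \mapsto c\rangle$; this is clause~\ref{secondcase}. Finally, $\measr_r$ fires only when $M = \meas{r}$, transforming the register by $\mis{r}{c}(\Qr)$ with probability $p_c = \langle\Qr|\mismin{r}{c}^\dag\mismin{r}{c}|\Qr\rangle$; this is clause~\ref{thirdcase}.

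For the inductive cases, I use the induction hypothesis on the premise and observe that embedding a reduction into a larger term context does not alter the quantum-register action. For example, in the $\lapp$ case, $MN \predto{\alpha} PN$ because $M \predto{\alpha} P$; the quantum variables $\Qvt{MN}$ decompose as $\Qvt{M} \cup \Qvt{N}$ (disjoint by well-formedness), and the unitary witnessing $M \predto{\alpha} P$, say $U_{M,P}$ acting on $\Hs{\Qvt{M}}$, already witnesses $MN \predto{\alpha} PN$ by setting $U_{MN,PN} = U_{M,P}$, since $I_{\Qvt{MN}-\Qvt{M}} = I_{\Qvt{N}}$ tensors correctly and the reduction on $[\Qr,\QV,MN]$ precisely applies $U_{M,P}$ on the $\Qvt{M}$ component. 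Analogous reasoning covers the other closure rules; in the $\nw$ and $\measr_r$ sub-cases the fresh variable or the measured variable is carried through the context unchanged.

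The only mildly delicate step will be keeping track of what ``depends only on $M$ and $N$'' means: one must verify that the choice of $U_{M,N}$ (respectively of $c,r$ for $\nw$ and of $c$ for $\measr_r$) is independent of the initial $\Qr$ and $\QV$. This falls out of the fact that in all base cases the operator or the constant extracted is read off syntactically from the redex, and in all inductive cases the context preserves this independence by tensoring with an identity on the ``other'' qubits. No genuine obstacle is expected here; the argument is essentially a bookkeeping exercise, and the only care required is in the closure rule for tuples $\tupla{i}$, where the induction must be applied to the $i$-th component and the surrounding components contribute only identities on their free quantum variables.
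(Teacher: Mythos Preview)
Your proposal is correct and follows essentially the same approach as the paper: the paper argues by induction on the structure of $M$, while you argue by induction on the derivation of the reduction, but since the contraction rules are syntax-directed these two inductions have the same case structure and the same content. One small notational point: in the $\lapp$ case you write $U_{MN,PN}=U_{M,P}$, but to match the typing in the statement you need $U_{MN,PN}=U_{M,P}\otimes I_{\Qvt{N}}$; your subsequent remark about $I_{\Qvt{MN}-\Qvt{M}}=I_{\Qvt{N}}$ shows you have the right idea, so this is only a matter of writing it out precisely.
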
 
\begin{proof} 
  We go by induction on $M$. $M$ cannot be a variable nor a constant
  nor a unitary operator nor a term $!L$. If $M$ is an abstraction
  $\lambda\psi.L$, then $N\equiv \lambda\psi.P$, $L\predto{\alpha} P$
  and the thesis follows from the inductive hypothesis.
  If $M$ is $\meas{L}$ and $N$ is $\meas{P}$ then  
  $L\predto{\alpha} P$ and the thesis follows from the inductive hypothesis.
  Similarly if $M$ is $\new{L}$ and $N$ is $\new{P}$. And 
  again if $M$ is $\langle M_1,\ldots,L,\ldots,M_n\rangle$
  and $N$ is $\langle M_1,\ldots,P,\ldots,M_n\rangle$.
  If $M\equiv LQ$, then we distinguish a number of cases:
\begin{varitemize}
\item
  $N\equiv PQ$ and $L\predto{\alpha}P$. The thesis follows from the inductive hypothesis.
\item
  $N\equiv LS$ and $Q\predto{\alpha}S$. The thesis follows from the inductive hypothesis.
\item
  $L\equiv U$, $Q\equiv <r_{1},...,r_{n}>$ and $N\equiv <r_{1},...,r_{n}>$. 
  Then case~\ref{firstcase} holds. In particular,
  $\Qvt{M}=\{r_{1},...,r_{n}\} $ and $U_{M,N}=U_{<<r_{1},...,r_{n}>>}$.
\item
  $L\equiv\lambda x.R$ and $N=R\{Q/x\}$. Then case~\ref{firstcase} holds. In particular
  $U_{M,N}=I_{\Qvt{M}}$.
\item
  $L\equiv\lambda <x_1,\ldots,x_n>.R$, $Q=<r_1,\ldots,r_n>$ 
  and $N\equiv R\{r_1/x_1,\ldots,r_n/x_n\}$. Then case~\ref{firstcase} holds and
  $U_{M,N}=I_{\Qvt{M}}$.
\item
  $L\equiv\lambda !x.R$, $Q=!T$ 
  and $N\equiv R\{T/x\}$. Then case~\ref{firstcase} holds and $U_{M,N}=I_{\Qvt{M}}$.
\item
  $Q\equiv(\lambda \pi.R)T$ and $N\equiv(\lambda\pi.LR)T$. Then case~\ref{firstcase} holds 
  and $U_{M,N}=I_{\Qvt{M}}$.
\item
  $L\equiv(\lambda \pi.R)T$ and $N\equiv(\lambda\pi.RQ)T$. Then case~\ref{firstcase} holds 
  and $U_{M,N}=I_{\Qvt{M}}$.
\end{varitemize}
If $M\equiv \new{c}$ then $N$ is a quantum variable $r$ and case~\ref{secondcase} holds.
If $M\equiv \meas{r}$ then there are a constant $c$ and a probability $p_c$ 
such that $N$ is a term $!c$ and case~\ref{thirdcase} holds.
This concludes the proof.
\end{proof}
Notice that $U_{M,N}$ is always the identity function when performing 
classical reduction.
The following technical lemma will be useful when proving confluence: 
\begin{lemma}\label{lemma:paramuniformity}
Suppose $[\Qr,\QV,M]\predto{\alpha}[\Rr,\RV,N]$.
\begin{varenumerate}
\item\label{firstclaim}
  If $\Ok{[\Qr,\QV,M\{L/x\}]}$, then
  $$
  [\Qr,\QV,M\{L/x\}]\predto{\alpha}[\Rr,\RV,N\{L/x\}].
  $$
\item\label{secondclaim}
  If $\Ok{[\Qr,\QV,M\{r_1/x_1,\ldots,r_n/x_n\}]}$,
  then 
  $$
  [\Qr,\QV,M\{r_1/x_1,\ldots,r_n/x_n\}]\predto{\alpha}[\Rr,\RV,N\{r_1/x_1,\ldots,r_n/x_n\}].
  $$
\item\label{thirdclaim}
  If $x,\Gamma\vdash L$ and $\Ok{[\Qr,\QV,L\{M/x\}]}$,
  then 
  $$
  [\Qr,\QV,L\{M/x\}]\predto{\alpha}[\Rr,\RV,L\{N/x\}].
  $$
\end{varenumerate}
\end{lemma}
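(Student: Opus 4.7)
My plan is to prove the three claims in essentially the standard "substitution commutes with reduction" style, leveraging the uniformity result (Lemma~\ref{lemma:uniformity}) to reduce everything to purely syntactic verifications about terms and substitutions.

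For claims \ref{firstclaim} and \ref{secondclaim}, I would proceed by induction on the structure of $M$ (equivalently, on the derivation of $[\Qr,\QV,M]\predto{\alpha}[\Rr,\RV,N]$ from the contractions and closure rules). The key observation is that none of the contraction patterns in Figure~\ref{fig:contractions} inspect a classical variable that could be the substituted $x$: the left-hand sides are built from $\lambda$-abstractions, tuples, constants $0/1$, $\meas{r}$ with a quantum variable, $\new{c}$, unitary operators applied to tuples of quantum variables, and the two commuting rules. Under Barendregt's convention, $x$ is distinct from any bound variable appearing in a redex of $M$, so substituting $L$ (respectively, quantum variables $r_1,\ldots,r_n$) for $x$ does not destroy any redex, and the ordinary substitution lemma for lambda terms gives $N\{L/x\} = M'\{L/x\}$ where $M'$ is the syntactic reduct of $M$. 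For the base contraction cases this is a direct calculation (e.g., for $\lbeta$: $((\lambda y.M_1)M_2)\{L/x\} = (\lambda y.M_1\{L/x\})M_2\{L/x\}$, which reduces to $M_1\{L/x\}\{M_2\{L/x\}/y\} = M_1\{M_2/y\}\{L/x\}$); for closure rules, the inductive hypothesis suffices. The quantum side is preserved because, by Lemma~\ref{lemma:uniformity}, the quantum effect depends only on the reduction rule $\alpha$ and on the quantum variables of the redex, which are unchanged by substitution for a classical variable (claim~\ref{firstclaim}) or by a bijective renaming (claim~\ref{secondclaim}).

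For claim~\ref{thirdclaim}, I would induct on the structure of $L$. The well-formedness hypothesis $x,\Gamma \vdash L$ combined with the well-forming rules in Figure~\ref{fig:wfr} guarantees that the linear classical variable $x$ occurs exactly once in $L$ and never under a $!$-promotion, nor in the two branches $M_1,M_2$ of any surrounding $\ifte{\cdot}{M_1}{M_2}$ (since those branches use only the $!\Delta$ environment). Therefore $x$ always sits in a position licensed by one of the contextual closure rules $\lapp$, $\rapp$, $\tupla{i}$, $\innew$, $\mathsf{in.meas}$, $\mathsf{in.if}$, or $\inlambda_{1/2}$. The cases are: if $L=x$ the conclusion is immediate; if $L=L_1L_2$ then $x\in L_1$ or $x\in L_2$ (exclusive, by linearity), and applying the IH to the side containing $x$ and then $\lapp$ or $\rapp$ yields the result; the tuple, $\new{\cdot}$, $\meas{\cdot}$, and $\lambda$-abstraction cases are analogous, using BCV to ensure the bound variable of the abstraction differs from $x$; for $L = \ifte{L'}{M_1}{M_2}$ the linearity constraint forces $x\in L'$, so we apply IH and $\mathsf{in.if}$.

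The one step that deserves care is verifying that the commuting contractions $\lcom$ and $\rcom$ still behave correctly under substitution in claim~\ref{firstclaim}: for instance, if $M = L_1((\lambda\pi.L_2)L_3)$ contracts to $N=(\lambda\pi.L_1L_2)L_3$, one must check that $(\lambda\pi.L_1L_2)L_3\{L/x\}$ is syntactically equal to $(\lambda\pi.(L_1\{L/x\})(L_2\{L/x\}))(L_3\{L/x\})$, which it is because BCV guarantees $x$ is not captured by $\pi$. Beyond this, the proof is a straightforward bookkeeping exercise; the main conceptual work is just observing that all reductions in $\qstar$ are orthogonal to the operation of substituting at a classical, non-promoted position.
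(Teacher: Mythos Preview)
Your proposal is correct and follows essentially the same approach as the paper, which merely states that claims~\ref{firstclaim} and~\ref{secondclaim} go by induction on the derivation of $[\Qr,\QV,M]\predto{\alpha}[\Rr,\RV,N]$ and claim~\ref{thirdclaim} by induction on $L$ (the paper writes ``$N$'' here, almost certainly a typo). Your more detailed treatment---in particular the observation that linearity of $x$ in $L$ forces the occurrence into a position reachable by the closure rules and excludes it from $!$-subterms and \texttt{if}-branches---is exactly the content the paper leaves implicit; the only minor slip is calling the substitution in claim~\ref{secondclaim} a ``bijective renaming,'' but your actual argument (that the quantum variables of the redex are untouched) is the right one.
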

\begin{proof}
Claims~\ref{firstclaim} and~\ref{secondclaim} can be proved by induction on the
proof of  $[\Qr,\QV,M]\predto{\alpha}[\Rr,\RV,N]$. Claim~\ref{thirdclaim} can be
proved by induction on $N$.
\end{proof}
We prove now that \qstar\ enjoys a slight variation of the so-called diamond property, whose proof is fully standard 
(it is a slight extension of the analogous proof given in~\cite{DLMZmscs08} for \qcalc).
As for \qcalc, \qstar\ does not enjoy the diamond property in a strict sense, due to
the presence of commutative reduction rules (see, e.g., case 2 of the following Proposition).  
But thanks to Lemma~\ref{lemma:noinfcom} below, this does not have harmful consequences.

\begin{proposition}[Quasi-One-step Confluence]\label{prop:onestepconf}
  Let $\confone,\conftwo,\confthree$ be configurations with $\confone\ppredto{\alpha}{p}\conftwo$, $\confone\ppredto{\beta}{s}\confthree$. Then:
  \begin{varenumerate}
  \item
    If $\alpha\in\commrul$ and $\beta\in\commrul$, then either $\conftwo=\confthree$ or
    there is $F$ with $D\predto{\commrul}F$ and $E\predto{\commrul}F$. 
  \item 
    If $\alpha\in\commrul$ and $\beta\in\noncommrul$, then either 
    $D\predto{\noncommrul}E$ or there 
    is $F$ with $D\predto{\noncommrul} F$ and $E\predto{\commrul}F$.
  \item
    If $\alpha\in\commrul$ and $\beta=\measr_r$, then there is $\conffour$
    with $\conftwo\ppredto{\measr_r}{s}\conffour$ and
    $\confthree\predto{\commrul}\conffour$. 
 \item 
    If $\alpha\in\noncommrul$ and $\beta\in\noncommrul$, then either $\conftwo=\confthree$ or
    there is $F$ with $D\predto{\noncommrul}F$ and $E\predto{\noncommrul}F$.
  \item
    If $\alpha\in\noncommrul$ and $\beta=\measr_r$, then there is  $\conffour$
    with $\conftwo\ppredto{\measr_r}{s}\conffour$ and
    $\confthree\predto{\commrul}\conffour$.
 \item
    If $\alpha=\measr_r$ and $\beta=\measr_q$ ($r\neq q$), then there are $t,u\in\RR_{[0,1]} $ and a $\conffour$ such that 
     $pt=su$, $\conftwo\ppredto{\measr_q}{t}\conffour$ and
    $\confthree\ppredto{\measr_r}{u}\conffour$.
  \end{varenumerate}
\end{proposition}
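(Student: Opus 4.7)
My plan is to proceed by case analysis on the reduction rules $\alpha$ and $\beta$, following the pattern used for the corresponding diamond-style result in~\cite{DLMZmscs08}. Cases~1, 2, and~4 are purely term-theoretic, involving only non-measurement rules, and so they can be inherited essentially verbatim from the \qcalc\ proof: one does induction on the derivation of $\confone\ppredto{\alpha}{p}\conftwo$ together with a case analysis on how the redex positions of $\alpha$ and $\beta$ relate (identical, disjoint, or nested). The delicate part of these three cases, as already observed in~\cite{DLMZmscs08}, comes from nested commutative redexes, for which the closing configuration $F$ may require a further commutative step; the Substitution Lemma~\ref{lemma:paramuniformity} is what makes the residual redex after a $\lbeta$, $\qbeta$ or $\cbeta$ step still reducible in the expected way.

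For the mixed cases~3 and~5, the Uniformity Lemma~\ref{lemma:uniformity} is the main tool. A measurement redex at position $\meas{r}$ depends only on the presence of the quantum variable $r$ in the term and on the quantum register; conversely, any $\alpha\in\noncommrul\cup\commrul$ acts either at a disjoint position from $\meas{r}$, or on a strictly larger subterm containing $\meas{r}$ as a proper subterm (since $\meas{\cdot}$ is never the head of a non-measurement redex). In both subcases the $\meas{r}$ redex survives into $\conftwo$ with the same probability $s$ (because rules in $\commrul$ and rules in $\noncommrul\setminus\{\Uq,\nw\}$ leave the quantum register unchanged, while $\Uq$ and $\nw$ cannot touch the qubit $r$ being measured — indeed $\Uq$ acts on variables appearing in its tuple argument, and $\nw$ introduces a fresh name). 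Symmetrically, the non-measurement redex survives into $\confthree$, and the two reducts meet at the same $F$. This also explains what the statement of case~5 asserts once the apparent typo $\commrul$ is read as $\noncommrul$ in the second clause.

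Case~6 is the genuinely new case and is where the quantum side of the argument is needed. Since $r\neq q$, the two measurement redexes $\meas{r}$ and $\meas{q}$ are necessarily at disjoint positions in the term of $\confone$ (otherwise the well-formedness rules would force $r$ or $q$ to appear twice). Hence the resulting term in $\conftwo$ still contains $\meas{q}$ and the resulting term in $\confthree$ still contains $\meas{r}$, so both $t$ and $u$ are defined and the common reduct $\conffour$ has the same term in both cases. The quantitative equality follows from the quantum-register identities established in Proposition~\ref{Proposition:MisProp}: clause~\ref{secondo} gives $\mis{q}{d}(\mis{r}{c}(\Qr))=\mis{r}{c}(\mis{q}{d}(\Qr))$, which fixes the quantum register of $\conffour$, and clause~\ref{secondobis} gives exactly $p_{r,c}\cdot s_{q,d}=p_{q,d}\cdot s_{r,c}$, which translates into $pt=su$.

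The main obstacle I expect is bookkeeping rather than conceptual: keeping track of \emph{where} the commutative rules $\lcom,\rcom$ interact with the other rules (especially with themselves, nested, in case~1, and with $\measr_r$ in case~3) is exactly the place where the diamond property fails in the strict sense, which is why the statement of the proposition is parameterised by $\commrul$ on the right-hand sides of cases~3 and~5. I would therefore treat the commutative subcases last in each case, and only after I have the uniformity-based witnesses for the other subcases in hand, so that the extra commutative closing step can be computed explicitly from the positions of the original redexes.
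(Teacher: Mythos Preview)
Your plan is correct and matches the paper's approach: the paper proves the proposition by induction on the term $M$ of $\confone$ with a case analysis on the last rules used, invoking exactly the same three ingredients you name --- the Uniformity Lemma~\ref{lemma:uniformity}, the Substitution Lemma~\ref{lemma:paramuniformity}, and Proposition~\ref{Proposition:MisProp} (clauses~\ref{secondo} and~\ref{secondobis}) for the two-measurement case. Your organisation (split first by the rule classes of $\alpha,\beta$, then analyse redex positions) is a harmless repackaging of the paper's organisation (induction on $M$, then split by the shape of the two reductions); and your reading of the $\commrul$ in the second clause of case~5 as a typo for $\noncommrul$ is consistent with what the paper's proof actually establishes.
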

\begin{proof}
Let $C\equiv [\Qr,QV,M]$. We go by induction on $M$. $M$ cannot be a variable 
nor a constant nor a unitary operator. If $M$ is an abstraction $\lambda\pi.N$, then
$D\equiv [\Rr,\RV,\lambda \pi.S]$, $E\equiv [\Sr,\SV,\lambda \pi.T]$ and
\begin{eqnarray*}
  [\Qr,\QV,N]&\predto{\alpha}&[\Rr,\RV,S]\\
  \ [\Qr,\QV,N]&\predto{\beta}&[\Sr,\SV,T]
\end{eqnarray*}
The IH easily leads to the thesis. Similarly when $M\equiv\lambda !x.N$, and  when $M\equiv\meas{N}$ or $M\equiv\ifte{N}{P}{Q}$ with $N\neq 0,1$.
If $M\equiv NL$, we can distinguish a number of cases depending on the 
last rule used to prove $C\ppredto{\alpha}{p}D$, $C\predto{\beta}{s}E$:
\begin{varitemize}
\item
  $D\equiv [\Rr,\RV,SL]$ and $E\equiv [\Sr,\SV,NR]$
  where 
  $[\Qr,\QV,N]\ppredto{\alpha}{p}[\Rr,\RV,S]$
  and $[\Qr,\QV,L]\ppredto{\beta}{s}[\Sr,\SV,R]$. 
  We need to distinguish several sub-cases:
  \begin{varitemize}
  \item
    If $\alpha,\beta=\nw$, then, by Lemma~\ref{lemma:uniformity}, there exist
    two quantum variables $s,t\notin\QV$ and two constants $d,e$ such that 
    $\RV=\QV\cup\{s\}$, $\SV=\QV\cup\{t\}$, 
    $\Rr=\Qr\otimes\dr{s\mapsto d}$
    and $\Sr=\Qr\otimes\dr{t\mapsto e}$.
    Applying~\ref{lemma:uniformity} again, we obtain
    \begin{eqnarray*}
      D&\predto{\nw}&[\Qr\otimes\dr{s\mapsto d}\otimes\dr{u\mapsto e},
      \QV\cup\{s,u\},SR\{u/t\}]\equiv F;\\
      E&\predto{\nw}&[\Qr\otimes\dr{t\mapsto e}\otimes\dr{v\mapsto d},
      \QV\cup\{t,v\},S\{u/s\}R]\equiv G.
    \end{eqnarray*}
    As can be easily checked, $F\equiv G$.
  \item
    If $\alpha=\nw$ and $\beta\neq\nw,\measr_r$, then, by Lemma~\ref{lemma:uniformity}
    there exist a quantum variable $r$ and a constant $c$ such that
    $\RV=\QV\cup\{r\}$, $\Rr=\Qr\otimes\dr{r\mapsto c}$,
    $\SV=\QV$ and $\Sr=(U_{L,R}\otimes I_{\QV-{\Qvt{L}}})\Qr$. As a consequence,
    applying Lemma~\ref{lemma:uniformity} again, we obtain
    \begin{eqnarray*}
      D&\predto{\beta}&[(U_{L,R}\otimes I_{\QV\cup\{r\}-{\Qvt{L}}})(\Qr
      \otimes\dr{r\mapsto c}),\QV\cup\{r\},SR]\equiv F;\\
      E&\predto{\nw}&[((U_{L,R}\otimes I_{\QV-{\Qvt{L}}})\Qr)
      \otimes\dr{r\mapsto c},\QV\cup\{r\},SR]\equiv G.
    \end{eqnarray*}
    As can be easily checked, $F\equiv G$.
  \item
    If $\alpha\neq\nw,\measr_r$ and $\beta=\nw$, then we can proceed as in the previous
    case.
  \item
    If $\alpha,\beta\neq\nw, \alpha\neq\measr_r, \beta\neq\measr_q$ ($r,q$ not necessarily distinct) , then by Lemma~\ref{lemma:uniformity}, there exist
    $\SV=\RV=\QV$, 
    $\Rr=(U_{N,S}\otimes I_{\QV-{\Qvt{N}}})\Qr$ and
    $\Sr=(U_{L,R}\otimes I_{\QV-{\Qvt{L}}})\Qr$.
    Applying~\ref{lemma:uniformity} again, we obtain
    \begin{eqnarray*}
      D&\predto{\beta}&[(U_{L,R}\otimes I_{\QV-{\Qvt{L}}})((U_{N,S}
      \otimes I_{\QV-{\Qvt{N}}})\Qr),\QV,SR]\equiv F;\\
      E&\predto{\alpha}&[(U_{N,S}\otimes I_{\QV-{\Qvt{L}}})((U_{L,R}
      \otimes I_{\QV-{\Qvt{L}}})\Qr),\QV,SR]\equiv G.
    \end{eqnarray*}
    As can be easily checked, $F\equiv G$.
   \item 
    If $\alpha =\measr_r,\beta=\measr_q$ ($r\neq q$) then, by Lemma~\ref{lemma:uniformity}, there exist
    two constants $d,e$ and two probabilities $t,u$ such that 
    $\RV=\QV-\{r\}$, $\SV=\QV-\{q\}$, 
    $\Rr=\mis{r}{d}(\Qr)$
    and $\Sr=\mis{q}{e}(\Qr)$.  Remember that the quantum variable $q$ occurs in the subterm $N$ and the quantum variable $r$ occurs in the subterm $L$.    
    Starting from  $D\equiv [\mis{r}{d}(\Qr), \QV-\{r\}, SL]$  and $E\equiv [\mis{q}{e}(\Qr), \QV-\{q\}, NR]$,    
    applying~\ref{lemma:uniformity} again, we obtain
    \begin{eqnarray*}
      D & \ppredto{\measr_q}{\bar{s}}& [\mis{q}{e}(\mis{r}{d}(\Qr)), \QV-\{r\}-\{q\},SR]\\
        &\equiv&[\mis{q}{e}(\Rr), \RV-\{q\}, SR]\equiv F;\\
      E  &\ppredto{\measr_r}{\bar{p}}& [\mis{r}{d}(\mis{q}{e}(\Qr)), \QV-\{q\}-\{r\},SR]\\
        &\equiv&[\mis{r}{d}(\Sr),\SV-\{r\},SR]\equiv G.
    \end{eqnarray*}    
    Clearly,   $\QV-\{r\}-\{q\}\equiv \QV-\{q\}-\{r\} $ and by 
    Proposition~\ref{Proposition:MisProp}, case~\ref{secondobis}, $\mis{q}{e}(\mis{r}{d}(\Qr))\equiv \mis{r}{d}(\mis{q}{e}(\Qr))$.
    Then  $F\equiv G$. Moreover by Proposition~\ref{Proposition:MisProp}, case~\ref{secondo},   $pt=su$.
  \item 
    If $\alpha =\nw, \beta=\measr_r$, then, by Lemma~\ref{lemma:uniformity}
    there exists a quantum variable $q$ ($q\neq r$) two constants $d$ and $e$ and a probability $p_{e}$ such that
    $\RV=\QV\cup\{q\}$, $\Rr=\Qr\otimes\dr{q\mapsto d}$,
    $\SV=\QV-\{r\}$ and $\Sr=\mis{r}{e}(\Qr)$. 
    As a consequence, starting from $D\equiv [\QV\cup\{q\}, \Qr\otimes\dr{q\mapsto d}, SL]$ and $E\equiv [\mis{r}{e}(\Qr), \QV-\{r\},  NR]$
    applying Lemma~\ref{lemma:uniformity} again, we obtain
    \begin{eqnarray*}
      D&\ppredto{\measr_r}{p_{e}}& [\mis{r}{{e}}(\Qr\otimes\dr{q\mapsto d}),\QV\cup\{q\}-\{r\},SR]\\
       &\equiv&[\mis{r}{{e}}(\Rr),\QV\cup\{q\}-\{r\},SR]\equiv F;\\
      E&\predto{\nw}& [(\mis{r}{e}(\Qr))\otimes\dr{q\mapsto d},\QV-\{r\}\cup\{q\},SR]\\
       &\equiv&[(\Sr)\otimes\dr{q\mapsto d},\SV\cup\{q\},SR]\equiv G.
    \end{eqnarray*}
    Clearly, $\QV\cup\{q\}-\{r\}\equiv \QV-\{r\}\cup\{q\}$. By Proposition~\ref{Proposition:MisProp}, case~\ref{primo}, 
    it is  possible to commute the measurement of the quantum variable $r$ with the creation of the quantum variable $q$, in fact they are distinct quantum variable.
    Then $\mis{r}{{e}}(\Qr\otimes\dr{q\mapsto d})$ and $(\mis{r}{e}(\Qr))\otimes\dr{q\mapsto d}$  give the same quantum register.
    We can conclude that $F\equiv G$.
  \item 
    If $\alpha=\measr_r,  \beta=\nw$, the case is symmetric to the previous one. 
  \item 
    If $\alpha=\measr_r, \beta\neq\nw,\measr_q$, then by Lemma~\ref{lemma:uniformity} there exist a constant $c$ and a probability $p_c$ such that 
    $\Rr=\mis{r}{c}(\Qr)$, $\RV=\QV-\{r\}$, $\SV=\QV$ and $\Sr=(U_{L,R}\otimes I_{\QV-{\Qvt{L}}})\Qr$. 
    As a consequence, starting from $D\equiv [\mis{r}{c}(\Qr), \QV-\{r\}, SL]$ and $E\equiv [(U_{L,R}\otimes I_{\QV-{\Qvt{L}}})\Qr, \QV, NR]$, 
    applying Lemma~\ref{lemma:uniformity} again, we obtain
    \begin{eqnarray*}
      D&\predto{\beta}&[(U_{L,R}\otimes I_{\QV-\{r\}-{\Qvt{L}}})(\mis{r}{c}(\Qr)),\QV-\{r\},SR]\\
       &\equiv& [(U_{L,R}\otimes I_{\QV-\{r\}-{\Qvt{L}}})(\Rr),\RV,SR]\equiv F\\
      E&\ppredto{\measr_r}{p_{c}}&[\mis{r}{c}((U_{L,R}\otimes I_{\QV-{\Qvt{L}}})\Qr),\QV-\{r\},SR]\\
       &\equiv& [\mis{r}{c}(\Sr),\QV-\{r\},SR]\equiv G
    \end{eqnarray*}
    Note that the operators $(U_{L,R}\otimes I_{\QV-\{r\}-{\Qvt{L}}})\circ \mis{r}{c}$ and $\mis{r}{c}\circ(U_{L,R}\otimes I_{\QV-{\Qvt{L}}})$ 
    act on $\Qr$ in the same way, by means of Proposition~\ref{Proposition:MisProp}, case~\ref{terzo}. 
    We can conclude that $F\equiv G$.
  \item 
    If $\alpha\neq\nw,\measr_q, \beta=\measr_r$, the case is symmetric to the previous one.
  \end{varitemize}
\item
  $D\equiv [\Rr,\RV,SL]$ and $E\equiv [\Sr,\SV,TL]$,
  where $[\Qr,QV,N]\redto[\Rr,\RV,S]$
  and $[\Qr,\QV,N]\redto[\Sr,\SV,T]$.
  Here we can apply the inductive hypothesis.
\item
  $D\equiv [\Rr,\RV,NR]$ and $E\equiv [\Sr,\SV,NU]$,
  where $[\Qr,QV,L]\redto[\Rr,\RV,R]$ and
  $[\Qr,\QV,L]\redto[\Sr,\SV,U]$.
  Here we can apply the inductive hypothesis as well.
\item 
  $N\equiv (\lambda x.P)$, $D\equiv [\Qr,\QV,P\{L/x\}]$, $E\equiv [\Rr,\RV,NR]$,
  where
  $[\Qr,\QV,L]\predto{\beta}[\Rr,\RV,R]$.
  Clearly $\Ok{[\Qr,\QV,P\{L/x\}]}$ and, by
  Lemma~\ref{lemma:paramuniformity},
  $[\Qr,\QV,P\{L/x\}]\redto[\Rr,\RV,P\{R/x\}]$.
  Moreover, $[\Rr,\RV,NR]\equiv [\Rr,\RV,(\lambda
  x.P)R]\redto[\Rr,\RV,P\{R/x\}]$.
\item
  \begin{sloppy}
  $N\equiv (\lambda x.P)$, $D\equiv [\Qr,\QV,P\{L/x\}]$,
  $E\equiv [\Rr,\RV,(\lambda x.V)L]$,
  where $[\Qr,\QV,P]\predto{\beta}[\Rr,\RV,V]$.
  Clearly $\Ok{[\Qr,\QV,P\{L/x\}]}$ and, by
  Lemma~\ref{lemma:paramuniformity},
  $[\Qr,\QV,P\{L/x\}]\predto{\beta}[\Rr,\RV,V\{L/x\}]$.
  Moreover, $[\Rr,\RV,(\lambda
  x.V)L]\predto{\beta}[\Rr,\RV,V\{L/x\}]$.
  \end{sloppy}
\item 
  \begin{sloppy}
  $N\equiv (\lambda !x.P)$, $L\equiv !Q$, $D\equiv
  [\Qr,\QV,P\{Q/x\}]$,
  $E\equiv [\Rr,\RV,(\lambda !x.V)L]$,
  where $[\Qr,\QV,P]\predto{\beta}[\Rr,\RV,V]$.
  Clearly $\Ok{[\Qr,\QV,P\{Q/x\}]}$ and, by
  Lemma~\ref{lemma:paramuniformity},
  $[\Qr,\QV,P\{Q/x\}]\predto{\beta}[\Rr,\RV,V\{Q/x\}]$.
  Moreover, $[\Rr,\RV,(\lambda
  x.V)!Q]\predto{\beta}[\Rr,\RV,V\{Q/x\}]$.
\item $N\equiv (\lambda <x_1,\ldots,x_n>.P)$, $L\equiv
  <r_1,\ldots,r_n>$,
  $D\equiv [\Qr,\QV,P\{r_1/x_1,\ldots,r_n/x_n\}]$,
  $E\equiv [\Rr,\RV,(\lambda <x_1,\ldots,x_n>.V)L]$,
  where $[\Qr,\QV,P]\predto{\beta}[\Rr,\RV,V]$.
  Clearly $\Ok{[\Qr,\QV,P\{r_1/x_1,\ldots,r_n/x_n\}]}$
  and, by Lemma~\ref{lemma:paramuniformity}, 
  $[\Qr,\QV,P\{r_1/x_1,\ldots,r_n/x_n\}]\predto{\beta}[\Rr,\RV,V\{r_1/x_1,\ldots,r_n/x_n\}]$.
  Moreover, $[\Rr,\RV,(\lambda
  <x_1,\ldots,x_n>.V)L]\predto{\beta}[\Rr,\RV,V\{r_1/x_1,\ldots,r_n/x_n\}]$.
\item
  $N\equiv (\lambda x.P)Q$, $D\equiv [\Qr,\QV,(\lambda x.PL)Q]$,
  $E\equiv [\Qr,\QV,(P\{Q/x\})L]$, $\alpha=\rcom$, $\beta=\lbeta$.
 Clearly, 
  $[\Qr,\QV,(\lambda x.PL)Q]\predto{\lbeta}[\Qr,\QV,(P\{Q/x\})L]$.
\item $N\equiv (\lambda\pi.P)Q$, $D\equiv [\Qr,\QV,(\lambda
  \pi.PL)Q]$,
  $E\equiv [\Rr,\RV,((\lambda\pi.V)Q)L]$, $\alpha=\rcom$,
  where $[\Qr,\QV,P]\predto{\beta}[\Rr,\RV,V]$.
  Clearly, $[\Qr,\QV,(\lambda x.PL)Q]\predto{\rcom}[\Rr,\RV,(\lambda
  x.VL)Q]$ and
  $[\Rr,\RV,((\lambda\pi.V)Q)L]\predto{\beta}[\Rr,\RV,(\lambda\pi.VL)Q]$.
\item $N\equiv (\lambda\pi.P)Q$, $D\equiv [\Qr,\QV,(\lambda x.PL)Q]$,
  $E\equiv [\Rr,\RV,((\lambda\pi.P)W)L]$, $\alpha=\rcom$,
  where $[\Qr,\QV,Q]\predto{\beta}[\Rr,\RV,W]$.
  Clearly, $[\Qr,\QV,(\lambda x.PL)Q]\predto{\rcom}[\Rr,\RV,(\lambda
  x.PL)W]$ and
  $[\Rr,\RV,((\lambda\pi.P)W)L]\predto{\beta}[\Rr,\RV,(\lambda\pi.PL)W]$.
\item $N\equiv (\lambda\pi.P)Q$, $D\equiv [\Qr,\QV,(\lambda x.PL)Q]$,
  $E\equiv [\Rr,\RV,((\lambda\pi.P)Q)R]$, $\alpha=\rcom$,
  where $[\Qr,\QV,L]\predto{\beta}[\Rr,\RV,R]$.
  Clearly, $[\Qr,\QV,(\lambda x.PL)Q]\predto{\rcom}[\Rr,\RV,(\lambda
  x.PR)Q]$ and
  $[\Rr,\RV,((\lambda\pi.P)Q)R]\predto{\beta}[\Rr,\RV,(\lambda\pi.PR)Q]$.
\item $N\equiv (\lambda\pi.P)$, $L\equiv (\lambda x.Q)R$, $D\equiv
  [\Qr,\QV,(\lambda x.NQ)R]$,
  $E\equiv [\Qr,\QV,N(Q\{R/x\})]$, $\alpha=\lcom$, $\beta=\lbeta$.
  Clearly, $[\Qr,\QV,(\lambda
  x.NQ)R]\redto{\lbeta}[\Qr,\QV,N(Q\{R/x\})]$.
  \end{sloppy}
\end{varitemize}
If $M$ is in the form $\new{c}$, then $D\equiv E$.
\end{proof}
\begin{remark}
Unfortunately, Proposition~\ref{prop:onestepconf} does
not translate into an equivalent result on mixed states, because
of commutative reduction rules. As a consequence,
it is more convenient to first study confluence at the level of 
probabilistic computations.
\end{remark}
Note that, even if the calculus is untyped,  we cannot build an infinite sequence of commuting reductions:
\begin{lemma}\label{lemma:noinfcom}
  The relation $\predto{\commrul}$ is strongly
  normalizing. In other words, there cannot be
  any infinite sequence $C_1\predto{\commrul} C_2\predto{\commrul} C_3\predto{\commrul}\ldots$. 
\end{lemma}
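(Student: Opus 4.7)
The plan is to exhibit a natural-valued measure on terms that strictly grows with every $\commrul$-step, while being bounded above by a quantity invariant along the chain; since an infinite, strictly increasing sequence of naturals bounded above is impossible, this forces termination of $\predto{\commrul}$.

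Call a \emph{commutative redex} any subterm occurrence of the form $(\lambda\psi.P)Q$ (with $\psi$ a pattern). For a term $M$ let $|M|$ denote its size (number of constructor nodes), and define
\[
\mu(M) \;=\; \sum_{(\lambda\psi.P)Q \text{ commutative redex in } M} |P|.
\]
Since $\commrul$ does not touch the quantum register, $\mu$ lifts to configurations by $\mu([\Qr,\QV,M]) := \mu(M)$.

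The core of the proof consists of two elementary checks. First, $|\cdot|$ is preserved by both commutative rules: e.g.\ $|L((\lambda\psi.M)N)| = 3+|L|+|M|+|N| = |(\lambda\psi.LM)N|$, and symmetrically for $\rcom$. Second, each $\commrul$-step strictly increases $\mu$. An $\lcom$-step at the root replaces the commutative redex $(\lambda\psi.M)N$ (body of size $|M|$) by $(\lambda\psi.LM)N$ (body of size $|L|+|M|+1$), contributing an increment of $|L|+1\geq 1$ to $\mu$. All commutative redexes strictly inside $L$, $M$, or $N$ keep exactly the same bodies. The only subtle case is when $L$ itself is an abstraction $\lambda\psi'.L'$: then the outer application $L((\lambda\psi.M)N)$ of the redex was already a commutative redex with body $L'$, and in the reduct it is ``traded'' for the application $LM$ sitting inside the new outer $\lambda\psi$-abstraction, which is again a commutative redex with the same body $L'$, so that contribution is unchanged. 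The rule $\rcom$ is symmetric. For reductions occurring inside a larger context, the same analysis applies verbatim because $|M|=|M'|$ keeps the bodies of any enclosing commutative redexes the same size.

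Putting the pieces together, along any $\commrul$-chain $C_1\predto{\commrul}C_2\predto{\commrul}\cdots$ one has $|C_i|=|C_1|$ for every $i$ and $\mu(C_1)<\mu(C_2)<\cdots\leq |C_1|^2$, so the chain must be finite. I do not anticipate any real obstacle; the only delicate point is the small case split on whether $L$ (resp.\ $M$ in $\rcom$) is itself an abstraction, since this affects which outer applications count as commutative redexes.
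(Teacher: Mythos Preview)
Your proof is correct and follows essentially the same approach as the paper: exhibit a natural-valued measure that is bounded above by the (invariant) squared size and strictly increases along every $\commrul$-step. The paper's measure $|M|_\lambda$ is the sum of $|N|$ over \emph{all} subterms of the form $\lambda\pi.N$, whereas your $\mu$ sums only over \emph{applied} abstractions; this makes your argument require the small extra case split you mention, but otherwise the two proofs are interchangeable.
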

\begin{proof}
Define the size $|M|$ of a term $M$ as the number of symbols in it.
Moreover, define the abstraction size $|M|_\lambda$ of $M$ as the
sum over all subterms of $M$ in the form $\lambda\pi.N$, of
$|N|$. Clearly $|M|_\lambda\leq |M|^2$. Moreover, 
if $[\Qr,\QV,M]\predto{\commrul}[\Qr,\QV,N]$, then $|N|=|M|$ but $|N|_\lambda\gt|M|_\lambda$.
This concludes the proof.
\end{proof}}{
\qstar\ enjoys a form of \emph{quasi-one-step confluence}. As an example, if 
$\confone\predto{\noncommrul}\conftwo$ and $\confone\predto{\noncommrul}\confthree$
then there is $\conffour$ with $\conftwo\predto{\noncommrul}\conffour$ and 
$\confthree\predto{\noncommrul}\conffour$. If, on the other hand,
$\confone\predto{\noncommrul}\conftwo$ and $\confone\predto{\commrul}\confthree$
then either $\confthree\predto{\noncommrul}\conftwo$ or there is $\conffour$ as above.
As another interesting example, if $\confone\predto{\noncommrul}\conftwo$
and $\confone\predto{\measr_r}\confthree$ then there is $\conffour$
as above. The only problematic case is when
$\confone\predto{\measr_r}\conftwo$ and $\confone\predto{\measr_r}\confthree$, which cannot
be solved. Lack of space prevents us from formally stating and proving quasi-one-step confluence,
which can anyway be found in~\cite{ExtendedVersion}. Quasi-one-step confluence is an
essential ingredient towards strong confluence.
}


We define the \emph{weight} $\weight{\pcompone}$ and the \emph{branch degree} $\bdegree{\pcompone}$ of
every \emph{finite} probabilistic computation $\pcompone$ by induction on the structure of $\pcompone$:
\begin{varitemize}
\item
  $\weight{[\confone]}=0$ and $\bdegree{[\confone]}=1$.
\item
  $\bdegree{[\confone,\pcompone]}=\bdegree{\pcompone}$. 
  Moreover, let $\conftwo$ be the root of $\pcompone$. If 
  $\confone\redto_{\commrul}\conftwo$, then
  $\weight{[\confone,\pcompone]}=\weight{\pcompone}$, otherwise
  $\weight{[\confone,\pcompone]}=\bdegree{\pcompone}+\weight{\pcompone}$.
\item
  $\bdegree{[(p,\confone),\pcompone,\pcomptwo]}=\bdegree{\pcompone}+\bdegree{\pcomptwo}$, while
  $\weight{[(p,\confone),\pcompone,\pcomptwo]}=\bdegree{\pcompone}+\bdegree{\pcomptwo}
  +\weight{\pcompone}+\weight{\pcomptwo}$.
\end{varitemize}
Please observe that $\bdegree{\pcompone}\geq 1$ for every $\pcompone$.

Now we propose a robabilistic variation on the classical \textit{strip lemma} of the $\lambda$-calculus. 
It will have a crucial  r\^ole in the proof of strong confluence (Theorem~\ref{theor:same-distr}).

\begin{lemma}[Probabilistic Strip Lemma]\label{prop:pcompconf}
Let $\pcompone$ be a finite probabilistic computation with root $\confone$
and positive weight $\weight{\pcompone}$.
\begin{varitemize}
\item
  If $\confone\redto_{\noncommrul}\conftwo$, then
  there is $\pcomptwo$ with root $\conftwo$ such that
  $\weight{\pcomptwo}\lt\weight{\pcompone}$,
  $\bdegree{\pcomptwo}\leq\bdegree{\pcompone}$
  and for every $\confthree\in\NF$, it holds that
  $\probmc{\pcomptwo}{\confthree}\geq\probmc{\pcompone}{\confthree}$,
  $\numlv{\pcomptwo}{\confthree}\geq\numlv{\pcompone}{\confthree}$,
  $\probmcany{\pcomptwo}\geq\probmcany{\pcompone}$ and
  $\numlvany{\pcomptwo}\geq\numlvany{\pcompone}$.
\item
  If $\confone\redto_{\commrul}\conftwo$, then
  there is $\pcomptwo$ with root $\conftwo$ such that
  $\weight{\pcomptwo}\leq\weight{\pcompone}$,
  $\bdegree{\pcomptwo}\leq\bdegree{\pcompone}$
  and for every $\confthree\in\NF$, it holds that
  $\probmc{\pcomptwo}{\confthree}\geq\probmc{\pcompone}{\confthree}$,
  $\numlv{\pcomptwo}{\confthree}\geq\numlv{\pcompone}{\confthree}$,
  $\probmcany{\pcomptwo}\geq\probmcany{\pcompone}$ and
  $\numlvany{\pcomptwo}\geq\numlvany{\pcompone}$.
\item
  If $\confone\ppredto{\measr_r}{q}\conftwo$
  and $\confone\ppredto{\measr_r}{p}\confthree$, then
  there are $\pcomptwo$ and $\pcompthree$ with roots
  $\conftwo$ and $\confthree$ such that
  $\weight{\pcomptwo}\lt\weight{\pcompone}$,
  $\weight{\pcompthree}\lt\weight{\pcompone}$,
  $\bdegree{\pcomptwo}\leq\bdegree{\pcompone}$,
  $\bdegree{\pcompthree}\leq\bdegree{\pcompone}$
  and for every $\confthree\in\NF$, it holds that
  $q\probmc{\pcomptwo}{\confthree}+p\probmc{\pcompthree}{\confthree}\geq\probmc{\pcompone}{\confthree}$,
  $\numlv{\pcomptwo}{\confthree}+\numlv{\pcompthree}{\confthree}\geq\numlv{\pcompone}{\confthree}$,
  $q\probmcany{\pcomptwo}+p\probmcany{\pcompthree}\geq\probmcany{\pcompone}$ and
  $\numlvany{\pcomptwo}+\numlvany{\pcompthree}\geq\numlvany{\pcompone}$.
\end{varitemize}
\end{lemma}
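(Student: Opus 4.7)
The plan is to argue by well-founded induction on the lexicographic pair $(\weight{\pcompone}, n(\pcompone))$, where $n(\pcompone)$ is the number of nodes of $\pcompone$; the principal tool is Quasi-One-Step Confluence (Proposition~\ref{prop:onestepconf}). Throughout the induction I decompose $\pcompone$ according to its root's outgoing edge(s): either $\pcompone=[\confone,\pcompfour]$ for some sub-tree $\pcompfour$ whose root is obtained from $\confone$ by a single non-measurement rewrite step, or $\pcompone=[(p',q',\confone),\pcompfour,\pcompfive]$ whose two children are the outcomes of a single measurement step at $\confone$.

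In each of the three cases of the lemma I apply the clause of QOSC that matches the pair ``hypothesis step vs.\ first step taken by $\pcompone$'' from $\confone$. The outcome of QOSC is either a direct collapse (the two divergent steps already agree) or a common successor reached by a single closing step on each side. I then invoke the Strip Lemma inductively on $\pcompfour$ (and $\pcompfive$ in the ternary case) using the step(s) produced by QOSC as the new ``given'' steps, and reassemble the required $\pcomptwo$ (and $\pcompthree$) by prepending the closing step of QOSC where necessary. The four leaf statistics $\probmc{\cdot}{\cdot}$, $\numlv{\cdot}{\cdot}$, $\probmcany{\cdot}$ and $\numlvany{\cdot}$ are additive over children by definition and are either preserved (when prepending a non-branching edge) or combined with the measurement probabilities $p,q$ (when prepending a measurement edge); the required inequalities therefore follow from the inductive hypothesis together with monotonicity (Lemma~\ref{lemma:monotone}).

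To see that the measure strictly decreases at every recursive call: if the leading edge of $\pcompone$ lies in $\noncommrul$ or is a measurement, that edge contributes $\bdegree{\pcompfour}\ge 1$ (respectively $\bdegree{\pcompfour}+\bdegree{\pcompfive}\ge 2$) to $\weight{\pcompone}$, so each child has strictly smaller weight and the induction fires on the first coordinate. If instead the leading edge of $\pcompone$ is commutative, then by the definition of weight $\weight{\pcompfour}=\weight{\pcompone}$; but $n(\pcompfour)<n(\pcompone)$, so the lexicographic pair still drops. The strictness of $\weight{\pcomptwo}<\weight{\pcompone}$ in cases~1 and~3 comes from the fact that the hypothesis step is ``absorbed'' by the rearrangement, removing one non-commutative (or measurement) edge of $\pcompone$'s weight contribution; in case~2 the given step is commutative and absorbs nothing, which is exactly why only the non-strict bound $\leq$ is claimed.

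The main obstacle I expect is the degenerate sub-case in which $\pcompone$'s leading edge is non-commutative (or a measurement) yet its immediate subtree $\pcompfour$ has weight zero, so that the positive-weight hypothesis of the lemma does not apply to $\pcompfour$ and the IH cannot be invoked directly. By Lemma~\ref{lemma:noinfcom} such a $\pcompfour$ is necessarily a finite chain of purely commutative steps terminating in a single leaf, and I would dispatch this case by a short auxiliary induction on the length of that chain: one uses QOSC case~2 to slide the given step through the chain one commutative edge at a time, producing a companion weight-zero computation rooted at $\conftwo$ whose NF leaves dominate those of $\pcompfour$. This auxiliary slide-through argument, once isolated as a separate mini-lemma, slots into each of the three main cases and closes the gap left by the positive-weight hypothesis in the IH.
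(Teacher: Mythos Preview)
Your approach is essentially the paper's: case-split on the leading edge of $\pcompone$, invoke quasi-one-step confluence to realign the hypothesis step with that edge, recurse on the immediate subtree(s), and prepend the closing step to reassemble $\pcomptwo$ (and $\pcompthree$). The leaf-statistic bookkeeping is identical.

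Two minor remarks. First, your lexicographic measure $(\weight{\pcompone},n(\pcompone))$ is correct but more than you need: every recursive call is on an immediate subtree of $\pcompone$, so plain structural induction (equivalently, induction on $n(\pcompone)$ alone) already terminates. The paper does exactly this. The weight component of your measure never actually does any work in the termination argument; weight is only used to state and verify the \emph{conclusion} $\weight{\pcomptwo}<\weight{\pcompone}$.

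Second, your ``main obstacle'' paragraph is well taken. The paper's own proof splits only into ``$\pcompfour=[\conffour]$'' versus ``$\pcompfour$ has positive weight'', silently omitting the case where $\pcompfour$ is a nontrivial purely-commutative chain with weight zero. Your slide-through argument (iterating QOSC clause~2 along the chain, using Lemma~\ref{lemma:noinfcom} for finiteness) closes this gap cleanly; alternatively one can observe that the terminal leaf of such a chain still admits a $\noncommrul$-step and hence is not in $\NF$, so all four statistics of $\pcompfour$ vanish and $\pcomptwo=[\conftwo]$ already suffices. Either way, you are being more careful here than the paper is.
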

\condinc{
\begin{proof}
By induction on the structure of $\pcompone$:
\begin{varitemize}
\item
  $\pcompone$ cannot simply be $[\confone]$, because
  $\weight{\pcompone}\geq 1$.
\item
  If $\pcompone=[\confone,\pcompfour]$, where $\pcompfour$ has root
  $\conffour$ and $\confone\redto_{\noncommrul}\conffour$, then:
  \begin{varitemize}
  \item Suppose $\confone\redto_{\noncommrul}\conftwo$. If
    $\conftwo=\conffour$, then the required $\pcomptwo$ is simply
    $\pcompfour$. Otherwise, by Proposition $\ref{prop:onestepconf}$,
    there is $\conffive$ such that
    $\conftwo\redto_{\noncommrul}\conffive$ and
    $\conffour\redto_{\noncommrul}\conffive$. Now, if $\pcompfour$ is
    simply $[\conffour]$, then the required probabilistic computation is simply
    $[\conftwo]$, because neither $\conffour$ nor $\conftwo$ are in
    normal form and, moreover, $\weight{[\conftwo]}=0\lt
    1=\weight{\pcompone}$.  If, on the other hand, $\pcompfour$ has
    positive weight we can apply the IH to it, obtaining a
    probabilistic computation $\pcompfive$ with root $\conffive$ such that
    $\weight{\pcompfive}\lt\weight{\pcompfour}$,
    $\bdegree{\pcompfive}\leq\bdegree{\pcompfour}$,
    $\probmc{\pcompfive}{\confsix}\geq\probmc{\pcompfour}{\confsix}$ and
    $\numlv{\pcompfive}{\confsix}\geq\numlv{\pcompfour}{\confsix}$
    for every $\confsix\in\NF$. Then, the required probabilistic computation is
    $[\conftwo,\pcompfive]$, since
    \begin{eqnarray*}
      \weight{[\conftwo,\pcompfive]}&=&\bdegree{\pcompfive}+\weight{\pcompfive}\lt\bdegree{\pcompfive}+\weight{\pcompfour}\\
      &\leq&\bdegree{\pcompfour}+\weight{\pcompfour}=\weight{\pcompone};\\
      \probmc{[\conftwo,\pcompfive]}{\confsix}&=&\probmc{\pcompfive}{\confsix}\geq\probmc{\pcompfour}{\confsix}\\
      &=&\probmc{\pcompone}{\confsix};\\
      \numlv{[\conftwo,\pcompfive]}{\confsix}&=&\numlv{\pcompfive}{\confsix}\geq\numlv{\pcompfour}{\confsix}\\
      &=&\numlv{\pcompone}{\confsix}.
    \end{eqnarray*}
  \item
    Suppose $\confone\redto_{\commrul}\conftwo$. By Proposition
    \ref{prop:onestepconf} one of the following two cases applies:
    \begin{varitemize}
    \item There is $\conffive$ such that
      $\conftwo\redto_{\noncommrul}\conffive$ and
      $\conffour\redto_{\commrul}\conffive$ Now, if $\pcompfour$ is
      simply $[\conffour]$, then the required probabilistic computation is
      simply $[\conftwo,[\conffive]]$, because
      $\weight{[\conftwo,[\conffive]]}=1=\weight{\pcompone}$.  If, on
      the other hand, $\pcompfour$ has positive weight we can apply
      the IH to it, obtaining a probabilistic computation $\pcompfive$ with root
      $\conffive$ such that
      $\weight{\pcompfive}\leq\weight{\pcompfour}$,
      $\bdegree{\pcompfive}\leq\bdegree{\pcompfour}$ and
      $\probmc{\pcompfive}{\confsix}\geq\probmc{\pcompfive}{\confsix}$
      for every $\confsix\in\NF$. Then, the required probabilistic computation is
      $[\conftwo,\pcompfive]$, since
      \begin{eqnarray*}
        \weight{[\conftwo,\pcompfive]}&=&\bdegree{\pcompfive}+\weight{\pcompfive}\leq\bdegree{\pcompfive}+\weight{\pcompfour}\\
        &\leq&\bdegree{\pcompfour}+\weight{\pcompfour}=\weight{\pcompone}\\
        \probmc{[\conftwo,\pcompfive]}{\confsix}&=&\probmc{\pcompfive}{\confsix}\geq\probmc{\pcompfour}{\confsix}\\
        &=&\probmc{\pcompone}{\confsix};\\
        \numlv{[\conftwo,\pcompfive]}{\confsix}&=&\numlv{\pcompfive}{\confsix}\geq\numlv{\pcompfour}{\confsix}\\
        &=&\numlv{\pcompone}{\confsix}.
      \end{eqnarray*}
    \item
      $\conftwo\redto_{\noncommrul}\conffour$. The required probabilistic computation is simply $[\conftwo,\pcompfour]$. Indeed:
      \begin{eqnarray*}
        \weight{[\conftwo,\pcompfour]}&=&\bdegree{\pcompfour}+\weight{\pcompfour}=\weight{[\confone,\pcompfour]}=\weight{[\pcompone]}.
      \end{eqnarray*}
    \end{varitemize}
  \item Suppose $\confone\ppredto{\measr_r}{q}\conftwo$ 
    and $\confone\ppredto{\measr_r}{p}\confthree$. 
    By Proposition~\ref{prop:onestepconf},
    there are $\conffive$ and $\confsix$ such that
    $\conftwo\redto_{\noncommrul}\conffive$,
    $\confthree\redto_{\noncommrul}\confsix$,
    $\conffour\ppredto{\measr_r}{q}\conffive$,
    $\conffour\ppredto{\measr_r}{p}\confsix$.  
    Now, if $\pcompfour$ is simply $\conffour$, then the
    required probabilistic computations are simply $[\conftwo]$ and
    $[\confthree]$, because neither $\conffour$ nor $\conftwo$
    nor $\confthree$ are in normal form and, moreover,
    $\weight{[\conftwo]}=\weight{[\confthree]}=0\lt
    1=\weight{\pcompone}$.  If, on the other hand, $\pcompfour$ has
    positive weight we can apply the IH to it, obtaining
    probabilistic computations $\pcompfive$ and $\pcompsix$ with roots
    $\conffive$ and $\confsix$ such that
    $\weight{\pcompfive}\lt\weight{\pcompfour}$,
    $\weight{\pcompsix}\lt\weight{\pcompfour}$,
    $\bdegree{\pcompfive}\leq\bdegree{\pcompfour}$,
    $\bdegree{\pcompsix}\leq\bdegree{\pcompfour}$,
    $q\probmc{\pcompfive}{\confsix}+(p)\probmc{\pcompsix}{\confsix}\geq\probmc{\pcompfour}{\confsix}$ and
    $\numlv{\pcompfive}{\confsix}+\numlv{\pcompsix}{\confsix}\geq\numlv{\pcompfour}{\confsix}$
    for every $\confsix\in\NF$. Then, the required probabilistic computations are
    $[\conftwo,\pcompfive]$ and $[\confthree,\pcompsix]$, since
    \begin{eqnarray*}
      \weight{[\conftwo,\pcompfive]}&=&\bdegree{\pcompfive}+\weight{\pcompfive}\lt\bdegree{\pcompfive}+\weight{\pcompfour}\\
      &\leq&\bdegree{\pcompfour}+\weight{\pcompfour}=\weight{\pcompone};\\
      \weight{[\confthree,\pcompsix]}&=&\bdegree{\pcompsix}+\weight{\pcompsix}\lt\bdegree{\pcompsix}+\weight{\pcompfour}\\
      &\leq&\bdegree{\pcompfour}+\weight{\pcompfour}=\weight{\pcompone}.
    \end{eqnarray*}
    Moreover, for every $\confsix\in\NF$
    \begin{eqnarray*}
      q\probmc{[\conftwo,\pcompfive]}{\confsix}+p\probmc{[\confthree,\pcompsix]}{\confsix}&=&
      q\probmc{\pcompfive}{\confsix}+p\probmc{\pcompsix}{\confsix}\\
      &\geq&\probmc{\pcompfour}{\confsix}=\probmc{\pcompone}{\confsix}\\
      \numlv{[\conftwo,\pcompfive]}{\confsix}+\numlv{[\confthree,\pcompsix]}{\confsix}&=&
      \numlv{\pcompfive}{\confsix}+\numlv{\pcompsix}{\confsix}\\
      &\geq&\numlv{\pcompfour}{\confsix}=\numlv{\pcompone}{\confsix}
    \end{eqnarray*}
  \end{varitemize}
  \item
    The other cases are similar.
\end{varitemize}
\end{proof}}{}
The following Proposition follows from the probabilistic strip lemma. It can be read as a simulation result:
if $\pcompone$ and $\pcomptwo$ are maximal and have the same root, then $\pcompone$ can simulate $\pcomptwo$
(and viceversa).
\begin{proposition}\label{prop:infpcompconf}
  For every maximal probabilistic computations $\pcompone$ and for every finite probabilistic computation $\pcomptwo$ such that $\pcompone$ 
  and $\pcomptwo$ have the same root, there
  is a finite sub-computation $\pcompthree$ of $\pcompone$ such that for every $\confone\in\NF$,
  $\probmc{\pcompthree}{\confone}\geq\probmc{\pcomptwo}{\confone}$ and
  $\numlv{\pcompthree}{\confone}\geq\numlv{\pcomptwo}{\confone}$.
  Moreover,
  $\probmcany{\pcompthree}\geq\probmcany{\pcomptwo}$ and
  $\numlvany{\pcompthree}\geq\numlvany{\pcomptwo}$.
\end{proposition}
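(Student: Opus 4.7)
The plan is to induct on the lexicographic pair $(\weight{\pcomptwo}, k(\pcompone))$, where $k(\pcompone)$ denotes the length of the initial commutative segment of $\pcompone$ (finite by Lemma~\ref{lemma:noinfcom}). The central tool is the Probabilistic Strip Lemma (Lemma~\ref{prop:pcompconf}), which, given any root-reduction $\confone \predto{\alpha} \conftwo$ of $\pcompone$, produces a new finite computation $\pcomptwo^{*}$ rooted at $\conftwo$ that dominates $\pcomptwo$ in all four quantitative measures, with strictly smaller weight unless $\alpha \in \commrul$. The inequalities for $\probmc$, $\numlv$, $\probmcany$, $\numlvany$ then propagate through the recursive construction of $\pcompthree$ via the strip lemma's domination clauses at each step.

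For the inductive step ($\weight{\pcomptwo} > 0$), I would case-split on the first step of $\pcompone$. If $\pcompone = [\confone, \pcompone']$ with $\confone \predto{\alpha} \conffour$ for $\alpha \in \noncommrul$, the strip lemma yields $\pcomptwo^{*}$ with $\weight{\pcomptwo^{*}} < \weight{\pcomptwo}$, and IH on $(\pcompone', \pcomptwo^{*})$ gives $\pcompthree' \sqsubseteq \pcompone'$ dominating $\pcomptwo^{*}$; take $\pcompthree = [\confone, \pcompthree']$. If $\pcompone = [(p,q,\confone), \pcompone_0, \pcompone_1]$ starts with a measurement, the measurement clause of the strip lemma produces $\pcomptwo^{*}_0, \pcomptwo^{*}_1$ of strictly smaller weight with $p\probmc{\pcomptwo^{*}_0}{\cdot} + q\probmc{\pcomptwo^{*}_1}{\cdot} \geq \probmc{\pcomptwo}{\cdot}$ and analogous bounds; IH on each branch yields $\pcompthree_i$, and $\pcompthree = [(p,q,\confone), \pcompthree_0, \pcompthree_1]$ dominates $\pcomptwo$. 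For a commutative first step, the strip lemma gives $\pcomptwo^{*}$ with $\weight{\pcomptwo^{*}} \leq \weight{\pcomptwo}$: if strict, the first IH coordinate decreases; otherwise $k(\pcompone') = k(\pcompone) - 1$ and the second coordinate decreases. The case $\pcompone = [\confone]$ is excluded, as it forces $\confone \in \NF$ and hence $\weight{\pcomptwo} = 0$.

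The main obstacle is the base case $\weight{\pcomptwo} = 0$, where the strip lemma does not apply. Here $\pcomptwo$ is a linear commutative chain $\confone \predto{\commrul} \conffour_1 \predto{\commrul} \cdots \predto{\commrul} \conffour_k$ with unique leaf $\conffour_k$. If $\conffour_k \notin \NF$, then all four quantitative measures of $\pcomptwo$ vanish on normal forms and $\pcompthree = [\confone]$ suffices. If $\conffour_k \in \NF$, I would show by a side induction on the chain length that any configuration $\conffive$ with $\conffive \rpredto{\commrul} \conffour_k$ can perform only commutative reductions: a hypothetical non-commutative or measurement step from $\conffive$ would, via cases 2 and 3 of Proposition~\ref{prop:onestepconf} applied along the chain, entail a similar forbidden step from $\conffour_k$ itself, contradicting $\conffour_k \in \NF$. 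Consequently every step available in $\pcompone$ is commutative; Lemma~\ref{lemma:noinfcom} forces $\pcompone$ to be finite, and local confluence of $\predto{\commrul}$ (case 1 of Proposition~\ref{prop:onestepconf}) together with Newman's lemma identifies its unique leaf with $\conffour_k$. Take $\pcompthree = \pcompone$ to close the case.
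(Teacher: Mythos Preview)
Your proposal is correct and follows the same overall strategy as the paper: lexicographic induction on $(\weight{\pcomptwo}, k(\pcompone))$, with the Probabilistic Strip Lemma driving each inductive case exactly as you describe. The paper's base case is actually terser than yours---it simply asserts that $\weight{\pcomptwo}=0$ forces $\pcomptwo=[\conftwo]$ and takes $\pcompthree=\pcomptwo$---whereas you correctly observe that $\weight{\pcomptwo}=0$ also allows $\pcomptwo$ to be a nontrivial $\commrul$-chain, and you handle the case where that chain terminates in a normal form by a separate confluence argument for $\predto{\commrul}$; this is a genuine refinement of the paper's argument rather than a departure from it.
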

\condinc{
\begin{proof}
Given any probabilistic computation $\pcompfour$, its $\commrul$-degree $n_\pcompfour$ is the number of consecutive commutative
rules you find descending $\pcompfour$, starting at the root. By Lemma~\ref{lemma:noinfcom}, this is a good
definition. The proof goes by induction on $(\weight{\pcomptwo},n_\pcomptwo)$, ordered lexicographically: 
\begin{varitemize}
\item
  If $\weight{\pcomptwo}=0$, then $\pcomptwo$ is just $[\conftwo]$ for some configuration $\conftwo$.
  Then, $\pcompthree=\pcomptwo$ and all the required conditions hold.
\item
  If $\weight{\pcomptwo}\gt 0$, then we distinguish three cases, depending on the shape of $\pcompone$:
  \begin{varitemize}
  \item
    If $\pcompone=[\conftwo,\pcompfour]$, $\confthree$ is the root of $\pcompfour$ and $\conftwo\redto_{\noncommrul}\confthree$,
    then, by Proposition~\ref{prop:pcompconf}, there is a probabilistic computation $\pcompfive$ with root $\confthree$
    such that $\weight{\pcompfive}\lt\weight{\pcomptwo}$
    and $\probmc{\pcompfive}{\confone}\geq\probmc{\pcomptwo}{\confone}$ for every $\confone\in\NF$. By the inductive hypothesis applied
    to $\pcompfour$ and $\pcompfive$, there is a sub-probabilistic computation $\pcompsix$ of $\pcompfour$ such that 
    $\probmc{\pcompsix}{\confone}\geq\probmc{\pcompfive}{\confone}$ and
    $\numlv{\pcompsix}{\confone}\geq\numlv{\pcompfive}{\confone}$ for every $\confone\in\NF$. Now, consider the
    probabilistic computation $[\conftwo,\pcompsix]$. This is clearly a sub-probabilistic computation of $\pcompone$. Moreover,
    for every $\confone\in\NF$:
    \begin{eqnarray*}
      \probmc{[\conftwo,\pcompsix]}{\confone}&=&\probmc{\pcompsix}{\confone}\\
      &\geq&\probmc{\pcompfive}{\confone}\geq\probmc{\pcomptwo}{\confone}\\
      \numlv{[\conftwo,\pcompsix]}{\confone}&=&\numlv{\pcompsix}{\confone}\\
      &\geq&\numlv{\pcompfive}{\confone}\geq\numlv{\pcomptwo}{\confone}.
    \end{eqnarray*}
  \item
    If $\pcompone=[\conftwo,\pcompfour]$, $\confthree$ is the root of $\pcompfour$ and $\conftwo\redto_{\commrul}\confthree$,
    then, by Proposition~\ref{prop:pcompconf}, there is a probabilistic computation $\pcompfive$ with root $\confthree$
    such that $\weight{\pcompfive}\leq\weight{\pcomptwo}$
    and $\probmc{\pcompfive}{\confone}\geq\probmc{\pcomptwo}{\confone}$ for every $\confone\in\NF$. Now, observe we can
    apply the inductive hypothesis to $\pcompfour$ and $\pcompfive$, because $\weight{\pcompfive}\leq\weight{\pcomptwo}$
    and $n_\pcompfour\lt n_\pcompone$. So, there is a sub-probabilistic computation $\pcompsix$ of $\pcompfour$ such that 
    $\probmc{\pcompsix}{\confone}\geq\probmc{\pcompfive}{\confone}$ and
    $\numlv{\pcompsix}{\confone}\geq\numlv{\pcompfive}{\confone}$ 
    for every $\confone\in\NF$. Now, consider the
    probabilistic computation $[\conftwo,\pcompsix]$. This is clearly a sub-probabilistic computation of $\pcompone$. Moreover,
    for every $\confone\in\NF$:
    \begin{eqnarray*}
      \probmc{[\conftwo,\pcompsix]}{\confone}&=&\probmc{\pcompsix}{\confone}\\
      &\geq&\probmc{\pcompfive}{\confone}\geq\probmc{\pcomptwo}{\confone}\\
      \numlv{[\conftwo,\pcompsix]}{\confone}&=&\numlv{\pcompsix}{\confone}\\
      &\geq&\numlv{\pcompfive}{\confone}\geq\numlv{\pcomptwo}{\confone}.
    \end{eqnarray*}
  \item
    $\pcompone=[(p,q,\conftwo),\pcompfour_1,\pcompfour_2]$, $\confthree_1$ is the root of $\pcompfour_1$ and
    $\confthree_2$ is the root of $\pcompfour_2$, then, by Proposition~\ref{prop:pcompconf}, there are
    probabilistic computations $\pcompfive_1$ and $\pcompfive_2$ with root $\confthree_1$ and $\confthree_2$
    such that $\weight{\pcompfive_1},\weight{\pcompfive_2}\lt\weight{\pcomptwo}$
    and $p\probmc{\pcompfive_1}{\confone}+q\probmc{\pcompfive_2}{\confone}\geq\probmc{\pcomptwo}{\confone}$ 
    for every $\confone\in\NF$. By the inductive hypothesis applied
    to $\pcompfour_1$ and $\pcompfive_1$ (to $\pcompfour_2$ and $\pcompfive_2$, respectively), 
    there is a sub-probabilistic computation $\pcompsix_1$ of $\pcompfour_1$ (a sub-probabilistic computation $\pcompsix_2$ of $\pcompfour_2$, respectively) 
    such that $\probmc{\pcompsix_1}{\confone}\geq\probmc{\pcompfive_1}{\confone}$ for every $\confone\in\NF$
    ($\probmc{\pcompsix_2}{\confone}\geq\probmc{\pcompfive_2}{\confone}$ for every $\confone\in\NF$, respectively).
    Now, consider the probabilistic computation $[(p,q,\conftwo),\pcompsix_1,\pcompsix_2]$. This is clearly a sub-probabilistic computation of $\pcompone$. Moreover,
    for every $\confone\in\NF$:
    \begin{eqnarray*}
      \probmc{[(p,q,\conftwo,\pcompsix_1,\pcompsix_2]}{\confone}&=&p\probmc{\pcompsix_1}{\confone}+q\probmc{\pcompsix_2}{\confone}\\
      &\geq&p\probmc{\pcompfive_1}{\confone}+q\probmc{\pcompfive_2}{\confone}\geq\probmc{\pcomptwo}{\confone}.
    \end{eqnarray*}    
  \end{varitemize}
\end{varitemize}
This concludes the proof.
\end{proof}}{}
The main theorem is the following:
\begin{theorem}[Strong Confluence]\label{theor:same-distr}
  For every maximal  probabilistic computation $\pcompone$, for every maximal probabilistic computation $\pcomptwo$ such that $\pcompone$ 
  and $\pcomptwo$ have the same root, and for every $\confone\in\NF$, $\probmc{\pcompone}{\confone} = \probmc{\pcomptwo}{\confone}$
  and $\numlv{\pcompone}{\confone}=\numlv{\pcomptwo}{\confone}$. Moreover, $\probmcany{\pcompone}=\probmcany{\pcomptwo}$
  and $\numlvany{\pcompone}=\numlvany{\pcomptwo}$.
\end{theorem}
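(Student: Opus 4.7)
The plan is to derive strong confluence as an almost mechanical consequence of the simulation result Proposition~\ref{prop:infpcompconf}, combined with the fact that all four quantities on arbitrary probabilistic computations are defined as suprema over their finite sub-computations. Fix two maximal probabilistic computations $\pcompone$ and $\pcomptwo$ with the same root. Each of the four equalities in the statement will be established by a symmetric double inequality: I will show that $\pcompone$ dominates $\pcomptwo$ by ``simulating from above'' every finite approximation of $\pcomptwo$, and then invoke the same argument with $\pcompone$ and $\pcomptwo$ swapped.

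Here is the core step, carried out for $\probmc{\cdot}{\confone}$. Pick an arbitrary finite $\pcompfour\sqsubseteq\pcomptwo$. A straightforward inspection of the three clauses defining $\sqsubseteq$ shows that any sub-computation shares its root with its superset, so $\pcompfour$ and $\pcompone$ have the same root. Proposition~\ref{prop:infpcompconf}, applied to the maximal $\pcompone$ and the finite $\pcompfour$, then produces a finite $\pcompfive\sqsubseteq\pcompone$ with $\probmc{\pcompfive}{\confone}\geq\probmc{\pcompfour}{\confone}$. Since $\probmc{\pcompone}{\confone}$ is by definition the supremum of $\probmc{\pcompseven}{\confone}$ over all finite $\pcompseven\sqsubseteq\pcompone$, we obtain $\probmc{\pcompone}{\confone}\geq\probmc{\pcompfive}{\confone}\geq\probmc{\pcompfour}{\confone}$. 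Passing to the supremum on the right-hand side over all finite $\pcompfour\sqsubseteq\pcomptwo$ yields $\probmc{\pcompone}{\confone}\geq\probmc{\pcomptwo}{\confone}$; the reverse inequality is obtained by exchanging the roles of $\pcompone$ and $\pcomptwo$, which gives the desired equality.

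The very same template, now invoking the companion inequalities $\numlv{\pcompfive}{\confone}\geq\numlv{\pcompfour}{\confone}$, $\probmcany{\pcompfive}\geq\probmcany{\pcompfour}$ and $\numlvany{\pcompfive}\geq\numlvany{\pcompfour}$ guaranteed by Proposition~\ref{prop:infpcompconf}, delivers $\numlv{\pcompone}{\confone}=\numlv{\pcomptwo}{\confone}$, $\probmcany{\pcompone}=\probmcany{\pcomptwo}$ and $\numlvany{\pcompone}=\numlvany{\pcomptwo}$.

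At this stage there is no real obstacle left: the entire difficulty of the theorem has been absorbed into Proposition~\ref{prop:infpcompconf} and, underneath it, into the probabilistic strip lemma and the quasi-one-step confluence proposition. The only conceptual point worth emphasizing is that, because $\pcompone$ and $\pcomptwo$ may be infinite, one cannot hope to reason by structural induction on them directly; the $\sup$-based extension of the four quantities, together with the principle that arbitrary probabilistic computations are approximated from below by their finite sub-computations, is precisely the bridge that makes the reduction from the infinite case to the simulation lemma (which only speaks about finite $\pcomptwo$) go through.
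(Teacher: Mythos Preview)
Your proof is correct and follows essentially the same approach as the paper: both reduce the equalities to Proposition~\ref{prop:infpcompconf} via the $\sup$-definition of the four quantities on arbitrary probabilistic computations. The only cosmetic difference is that the paper phrases the argument as ``the two sets of values have the same upper bounds, hence the same least upper bound'', whereas you argue the two symmetric inequalities directly; these are equivalent.
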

\condinc{
\begin{proof}
  Let $\confone\in\NF$ be any configuration in normal form.
  Clearly:
  $$
  \begin{array}{c@{\qquad}c}
    \probmc{\pcompone}{\confone}=\sup_{\pcompthree\sqsubseteq\pcompone}\{\probmc{\pcompthree}{\confone}\}&
    \probmc{\pcomptwo}{\confone}=\sup_{\pcompfour\sqsubseteq\pcomptwo}\{\probmc{\pcompfour}{\confone}\}
  \end{array}
  $$
  Now, consider the two sets $A=\{\probmc{\pcompthree}{\confone}\}_{\pcompthree\sqsubseteq\pcompone}$ and 
  $B=\{\probmc{\pcompfour}{\confone}\}_{\pcompfour\sqsubseteq\pcomptwo}$. We claim the two sets have the
  same upper bounds. Indeed, if $x\in\mathbb{R}$ is an upper bound
  on $A$ and $\pcompfour\sqsubseteq\pcomptwo$, by Proposition~\ref{prop:infpcompconf}
  there is $\pcompthree\sqsubseteq\pcompone$ such
  that $\probmc{\pcompthree}{\confone}\geq\probmc{\pcompfour}{\confone}$,
  and so $x\geq\probmc{\pcompfour}{\confone}$. As a consequence, $x$ is an upper bound on $B$.
  Symmetrically, if $x$ is an upper bound on $B$, it is an upper bound on $A$.
  Since $A$ and $B$ have the same upper bounds, they have the same least upper bound, and
  $\probmc{\pcompone}{\confone}=\probmc{\pcomptwo}{\confone}$.
  The other claims can be proved exactly in the same way.
  This concludes the proof.
\end{proof}}{}


\section{Computing with Mixed States}\label{sec:MixS} 

\condinc{\begin{definition}[Mixed State]}{\begin{definition}}
  A mixed state is a function $\mixone: \conf \redto
  \mathbb{R}_{[0,1]}$ such that there is a finite set $S\subseteq\conf$
  with $\mixone(\confone)=0$ except when $\confone\in S$ and, moreover,
  $\sum_{\confone\in S} \mixone(\confone)=1$.  
  $\Mix$\ is the set of mixed states.
\end{definition}
In this paper, a mixed state $\mixone$ will be denoted with the linear notation
$
\{ p_1:\confone_1,\ldots,p_k:\confone_k\}
$
or as $\{p_i:\confone_i\}_{1\leq i\leq k}$,
where $p_i$ is the probability $\mixone(\confone_i)$ associated to the configuration $\confone_i$.

\condinc{\begin{definition}[Reduction]}{\begin{definition}}
  The reduction relation $\parred$ between mixed states is defined
  in the following way:
  $ \{ p_1:\confone_1,\ldots,p_m:\confone_m\}\parred \mixone $ \textit{iff} 
  there exist $m$ mixed states  $\mixone_1=\{q^i_1:D^i_1\}_{1\leq i\in n_1},\ldots, 
  \mixone_m=\{q^i_m:D^i_m\}_{1\leq i\leq n_m}$ such that:
  \begin{varenumerate}
  \item  
    For every $i\in[1,m]$, it holds that $1\leq n_i\leq 2$;
  \item
    If $n_i=1$, then 
    either
    $\confone_i$ is in normal form and $\confone_i = \conftwo^1_i$ 
    or $\confone_i\redto_{\nonmeasrul} \conftwo^1_i$; 
  \item
    If $n_i=2$, then
    $\confone_i\ppredto{\measr_r}{p}\conftwo^1_i$, 
    $\confone_i\ppredto{\measr_r}{q}\conftwo^2_i$, $p\in\mathbb{R}_{[0,1]}$, 
    and $q^1_i= p, q^2_k= q$;
  \item 
  $\forall D\in\conf.\ \mixone(D)=\sum_{i=1}^{m} p_i\cdot\mixone_i(D)$.
\end{varenumerate}
\end{definition}
Given the reduction relation $\parred$, the corresponding notion of computation 
(that we call \textit{mixed computation}, in order to emphasize that mixed
states play the role of configurations) is completely standard. 

Given a mixed state $\mixone$ and a configuration $\confone\in\NF$, the
\emph{probability} of observing $\confone$ in $\mixone$ is
defined as $\mixone(C)$ and is denoted as $\probmc{\mixone}{\confone}$.
Observe that if $\mixone\parred\mixtwo$ and $\confone\in\NF$, then
$\probmc{\mixone}{\confone}\leq\probmc{\mixtwo}{\confone}$.
If $\{\mixone_i\}_{i\lt\varphi}$ is a mixed computation, then
$$
\sup_{i\lt\varphi}\probmc{\mixone_i}{\confone}
$$
always exists, and is denoted as $\probmc{\{\mixone_i\}_{i\lt\varphi}}{\confone}$. 

Please notice that a maximal mixed computation is always infinite. Indeed,
if $\mixone=\{p_i:\confone_i\}_{1\leq i\leq n}$ and for every $i\in[1,n], \confone_i\in\NF$, then
$\mixone\parred\mixone$.

\begin{proposition}\label{prop:infinitemixed}
   Let $\{\mixone_i\}_{i\lt\omega}$ be a maximal mixed computation and let $\confone_1,\ldots,\confone_n$ be 
   the configurations on which $\mixone_0$ evaluates to 
   a positive real. Then there are maximal probabilistic computations
   $\pcompone_1,\ldots,\pcompone_n$ with roots $\confone_1,\ldots,\confone_n$ such that
   $\sup_{j\lt\varphi}\mixone_j(\conftwo) = \sum_{i=1}^n\left(\mixone_0(\confone_i)\probmc{\pcompone_i}{\conftwo}\right)$
   for every $\conftwo$.
\end{proposition}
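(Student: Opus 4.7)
The plan is to build each $\pcompone_i$ by unrolling the given mixed computation into a tree that tracks the descendants of $\confone_i$, and then to verify the claimed identity by induction on finite approximants and monotone convergence.

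First I would construct the probabilistic computations. For each $i\in[1,n]$, define $\pcompone_i$ coinductively as follows: its root is $\confone_i$, and at a node $\conftwo$ appearing at depth $j$, look at the $\parred$-step $\mixone_j\parred\mixone_{j+1}$. If $\conftwo\in\NF$, then $\conftwo$ is a leaf of $\pcompone_i$; if the step reduces $\conftwo$ by a non-measurement rule to some $\confthree$, then $\conftwo$ has unique child $\confthree$; if it reduces $\conftwo$ by $\measr_r$ with probabilities $p,q$ to $\confthree_1,\confthree_2$, then $\conftwo$ has two children $\confthree_1,\confthree_2$ labelled $p$ and $q$. This construction is unambiguous because each step $\mixone_j\parred\mixone_{j+1}$ fixes a single rule (or a single matched pair of measurement rules) per configuration in the support of $\mixone_j$, regardless of which initial seed contributed its mass. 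By construction $\pcompone_i$ satisfies Definition~\ref{def:probcomp}, and every leaf lies in $\NF$, so $\pcompone_i$ is maximal.

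Next I would prove the key identity on finite approximants. Introduce the single-seed states $\mixone_j^{(i)}$ obtained by starting from $\{1:\confone_i\}$ and applying, at each step, the same reduction choices as the joint computation $\{\mixone_j\}$, restricted to descendants of $\confone_i$. Linearity of $\parred$ in the initial distribution, together with the uniformity of the chosen reductions, yields
$$
   \mixone_j(\conftwo)=\sum_{i=1}^n \mixone_0(\confone_i)\,\mixone_j^{(i)}(\conftwo)
$$
for every $j$ and every $\conftwo\in\conf$. Let $\pcompone_i^{[j]}\sqsubseteq\pcompone_i$ be the finite sub-computation obtained by truncating every branch of $\pcompone_i$ at depth $j$ (normal-form leaves reached earlier are kept in place). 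A routine induction on $j$, analysing in the inductive step which of the three clauses of the construction applies at each non-normal leaf of $\pcompone_i^{[j]}$, establishes
$$
   \mixone_j^{(i)}(\conftwo)=\probmc{\pcompone_i^{[j]}}{\conftwo}\qquad\text{for every }\conftwo\in\NF.
$$

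Finally I would take sups. For $\conftwo\in\NF$, the sequence $j\mapsto\mixone_j^{(i)}(\conftwo)$ is monotone nondecreasing, since normal forms are fixed points of $\parred$. Every finite sub-computation of $\pcompone_i$ has bounded depth and is therefore contained in some $\pcompone_i^{[j]}$, so by Lemma~\ref{lemma:monotone} together with the definition of $\probmc{\pcompone_i}{\conftwo}$ as a sup over finite sub-computations one has $\sup_j\probmc{\pcompone_i^{[j]}}{\conftwo}=\probmc{\pcompone_i}{\conftwo}$. Interchanging the monotone sup with the finite sum (valid by the dominated/monotone convergence for finite sums of nondecreasing sequences) yields
$$
   \sup_j\mixone_j(\conftwo)=\sum_{i=1}^n \mixone_0(\confone_i)\,\probmc{\pcompone_i}{\conftwo},
$$
as required. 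The main technical obstacle is the coherence of the construction: when a configuration $\conftwo$ appears at depth $j$ as a descendant of several $\confone_i$'s, all corresponding $\pcompone_i$-subtrees must inherit the same rule from the $\parred$-step at $\conftwo$, and the single-seed decomposition $\mixone_j = \sum_i \mixone_0(\confone_i)\,\mixone_j^{(i)}$ must respect this common bookkeeping; once spelled out, the rest reduces to a step-by-step inductive matching and a monotone-convergence argument.
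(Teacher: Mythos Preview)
Your proposal is correct and follows essentially the same approach as the paper: both unroll the mixed computation into probabilistic trees, establish the key identity on finite approximants (your depth-$j$ truncations $\pcompone_i^{[j]}$ play exactly the role of the paper's $\pcompthree_i^{0,j}$, and your induction on $j$ corresponds to the paper's induction on $k-m$), and then pass to the sup by showing every finite sub-computation embeds in some approximant. Your explicit single-seed decomposition $\mixone_j=\sum_i\mixone_0(\confone_i)\,\mixone_j^{(i)}$ is a clean way of expressing what the paper encodes directly in the equation $\mixone_k(\conftwo)=\sum_i\mixone_m(\confone_i)\,\probmc{\pcompthree_i^{m,k}}{\conftwo}$; the only point left implicit in both treatments is what to do at zero-probability measurement branches (where the node may fall outside the support of $\mixone_j$), but any choice there is harmless since it contributes nothing to $\probmcf$.
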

\condinc{
\begin{proof}
Let $\{\mixone_i\}_{i\lt\omega}$ be a maximal mixed computation.
Observe that $\mixone_0\parred^m\mixone_m$ for every $m\in\NN$. 
For every $m\in\NN$ let $\mixone_m$ be
$$
\{p_1^m:\confone_1^m,\ldots,p_{n_m}^m:\confone_{n_m}^m\}
$$
For every $m$, we can build maximal probabilistic computations
$\pcompone_{1}^m,\ldots,\pcompone_{n_m}^m$, generatively: assuming
$\pcompone_{1}^{m+1},\ldots,\pcompone_{n_{m+1}}^{m+1}$ are the
probabilistic computations corresponding to $\{\mixone_i\}_{m+1\leq
  i\lt\omega}$, they can be extended (and possibly merged) into some
maximal probabilistic computations
$\pcompone_{1}^m,\ldots,\pcompone_{n_m}^m$ corresponding to
$\{\mixone_i\}_{m\leq i\lt\omega}$. But we can even define for every
$m,k\in\NN$ with $m\leq k$, some finite probabilistic computations
$\pcompthree_1^{m,k},\ldots,\pcompthree_{n_{m}}^{m,k}$ with root
$\confone_1,\ldots,\confone_{n_m}$ and such that, for every $m,k$,
\begin{eqnarray*}
\pcompthree_i^{m,k}&\sqsubseteq&\pcompone_i^m\\
\mixone_k(\conftwo)&=&\sum_{i=1}^{n_m}\left(\mixone_m(\confone_i)\probmc{\pcompthree_i^{m,k}}{\conftwo}\right).
\end{eqnarray*}
This proceeds by induction on $k-m$. We can easily prove that for
every $\pcompfour\sqsubseteq\pcompone_i^m$ there is $k$ such that
$\pcompfour\sqsubseteq\pcompthree_i^{m,k}$: this is an induction on
$\pcompfour$ (which is a finite probabilistic computation). But now,
for every $\conftwo\in\NF$,
\begin{eqnarray*}
\sup_{j\lt\omega}\mixone_j(\conftwo)&=&\sup_{j\lt\omega}\sum_{i=1}^{n_0}\left(\mixone_0(\confone_i)\probmc{\pcompthree_i^{0,j}}{\conftwo}\right)\\
&=&\sum_{i=1}^{n_0}\left(\mixone_0(\confone_i)\sup_{j\lt\omega}\probmc{\pcompthree_i^{0,j}}{\conftwo}\right)\\
&=&\sum_{i=1}^{n_0}\left(\mixone_0(\confone_i)\probmc{\pcompone_i^0}{\conftwo}\right)\\
\end{eqnarray*}
This concludes the proof.
\end{proof}}{}

\begin{theorem}
For any two maximal mixed computations $\{\mixone_i\}_{i\lt\omega}$ and
$\{\mixtwo_i\}_{i\lt\omega}$ such that $\mixone_0=\mixtwo_0$, the following condition holds:
for every $\confone\in\NF$,
$\probmc{\{\mixone_i\}_{i\lt\omega}}{\confone}=\probmc{\{\mixtwo_i\}_{i\lt\omega}}{\confone}$
\end{theorem}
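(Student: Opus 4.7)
The plan is to reduce the claim for mixed computations to the strong confluence result for probabilistic computations (Theorem~\ref{theor:same-distr}), using Proposition~\ref{prop:infinitemixed} as the bridge between the two notions.

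First, I would apply Proposition~\ref{prop:infinitemixed} to both mixed computations. Let $\confone_1,\ldots,\confone_n$ be the configurations on which the common initial mixed state $\mixone_0 = \mixtwo_0$ evaluates to a positive real, and let $p_i = \mixone_0(\confone_i) = \mixtwo_0(\confone_i)$. The proposition gives maximal probabilistic computations $\pcompone_1,\ldots,\pcompone_n$ (corresponding to $\{\mixone_i\}_{i<\omega}$) and $\pcomptwo_1,\ldots,\pcomptwo_n$ (corresponding to $\{\mixtwo_i\}_{i<\omega}$), with respective roots $\confone_1,\ldots,\confone_n$, such that for every $\confone \in \NF$,
$$
\probmc{\{\mixone_i\}_{i<\omega}}{\confone} = \sum_{i=1}^n p_i\,\probmc{\pcompone_i}{\confone}, \qquad \probmc{\{\mixtwo_i\}_{i<\omega}}{\confone} = \sum_{i=1}^n p_i\,\probmc{\pcomptwo_i}{\confone}.
$$

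Next, for each $i \in \{1,\ldots,n\}$, the probabilistic computations $\pcompone_i$ and $\pcomptwo_i$ are both maximal and share the same root $\confone_i$. By Theorem~\ref{theor:same-distr} (Strong Confluence), $\probmc{\pcompone_i}{\confone} = \probmc{\pcomptwo_i}{\confone}$ for every $\confone \in \NF$.

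Finally, combining these two observations, the two weighted sums above coincide term by term, giving $\probmc{\{\mixone_i\}_{i<\omega}}{\confone} = \probmc{\{\mixtwo_i\}_{i<\omega}}{\confone}$ for every $\confone \in \NF$. The main obstacle has already been discharged by the previous sections: Proposition~\ref{prop:infinitemixed} does the non-trivial work of converting the linear structure of a mixed computation into a bouquet of probabilistic computation trees, while Theorem~\ref{theor:same-distr} supplies the independence from the chosen strategy inside each tree. No further reasoning on quantum registers, measurement probabilities, or commutative redexes is needed at this stage, since all of that is already packaged into the two results being invoked.
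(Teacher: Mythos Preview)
Your argument is correct and follows the same approach as the paper, which simply records the result as a trivial consequence of Proposition~\ref{prop:infinitemixed}; you have merely made explicit the appeal to Theorem~\ref{theor:same-distr} that the paper leaves implicit.
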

\condinc{
\begin{proof}
A trivial consequence of Proposition~\ref{prop:infinitemixed}.
\end{proof}}{}
\condinc{
\section{Conclusions}
The quantum lambda calculus \qstar\ is proved to enjoy confluence in a very strong form, both for finite and for
infinite computations. The proof seems to be quite independent on the particular rewriting system under consideration.
Actually, the authors believe that any rewriting system enjoying properties like Proposition~\ref{prop:onestepconf} enjoys
confluence in the same sense as the one used here. Indeed, this constitutes an interesting topic
for further work, which anyway lies outside the scope of this paper.}{}
\bibliography{biblio} 
\bibliographystyle{abbrv}
\end{document}